\newtheorem{theorem}{Theorem}
\newtheorem{corollary}{Corollary}
\newtheorem{lemma}{Lemma}
\newtheorem{definition}{Definition}
\newcommand{\tr}{\mbox{tr}}
\newcommand{\SWAP}{{\rm SWAP}}
\newcommand{\ipic}[3][-0.45]{\raisebox{#1\height}{\scalebox{#3}{\includegraphics{#2}}}}
\begin{document}

\title{No-Free-Lunch Theories for Tensor-Network Machine Learning Models}
\author{Jing-Chuan Wu}
\thanks{These authors contributed equally to this work.}
\affiliation{Theoretical Physics Division, Chern Institute of Mathematics and LPMC, Nankai University, Tianjin 300071, China}
\author{Qi Ye}
\thanks{These authors contributed equally to this work.}
\affiliation{Center for Quantum Information, IIIS, Tsinghua University, Beijing 100084, China}
  \affiliation{Shanghai Qi Zhi Institute, Shanghai 200232, China}
\affiliation{School of Engineering and Applied Sciences, Harvard University, Cambridge, Massachusetts 02138, USA}
\author{Dong-Ling Deng}
\affiliation{Center for Quantum Information, IIIS, Tsinghua University, Beijing 100084, China}
  \affiliation{Shanghai Qi Zhi Institute, Shanghai 200232, China}
 \affiliation{Hefei National Laboratory, Hefei 230088, China}
\author{Li-Wei Yu}
\email{yulw@nankai.edu.cn}
\affiliation{Theoretical Physics Division, Chern Institute of Mathematics and LPMC, Nankai University, Tianjin 300071, China}
\begin{abstract}
Tensor network machine learning models have shown remarkable versatility in tackling complex data-driven tasks, ranging from quantum many-body problems to classical pattern recognitions. Despite their promising performance, a comprehensive understanding of the underlying assumptions and limitations of these models is still lacking. In this work, we focus on the rigorous formulation of their no-free-lunch theorem---essential yet notoriously challenging to formalize for specific tensor network machine learning models. In particular, we rigorously analyze the generalization risks of learning target output functions from input data encoded in tensor network states. We first prove a no-free-lunch theorem for machine learning models based on matrix product states, i.e., the one-dimensional tensor network states. Furthermore, we circumvent the challenging issue of calculating the partition function for two-dimensional Ising model, and prove the no-free-lunch theorem for the case of two-dimensional projected entangled-pair state, by introducing the combinatorial method associated to the ``puzzle of polyominoes''.  Our findings reveal the intrinsic limitations of tensor network-based learning models in a rigorous fashion, and open up an avenue for future analytical exploration of both the strengths and limitations of quantum-inspired machine learning frameworks.

\end{abstract}


\maketitle

Tensor networks (TNs) have emerged as a powerful tool for studying quantum many-body systems, demonstrating remarkable versatility across various domains of quantum physics \cite{Orus2019Tensor, Biamonte2019Lectures, Cirac2020Matrix, Banuls2023Tensora,Rieser2023Tensor, Wang2023Tensor}. This success has catalyzed a growing interest in harnessing TNs for machine learning applications \cite{Cichocki2014Tensor, Novikov2015Tensorizing,Cichocki2016Tensor, Cichocki2016Tensor2, Stoudenmire2016Supervised, Novikov2016Exponential, Liu2018Entanglement,Chen2018Parallelized,   Levine2018Deep, Stoudenmire2018Learning, Han2018Unsupervised,Liu2019Machine,Hayashi2019Exploring, Liu2021Tensor, Chen2018Equivalence, Levine2019Quantum, Bhatia2019Matrix, Efthymiou2019Tensornetwork, Huggins2019Towards, Glasser2020From,Su2020Convolutional,Sun2020Tangent, Sun2020Model, Chen2020Hybrid, Sun2020Generative,Wang2020Anomaly,Gao2020Compressing, Wall2021Generative, Cheng2021Supervised, Costa2021Tensortrain, Kardashin2021Quantum,Felser2021Quantuminspired,Wang2021Experimental,Hawkins2021Bayesian, Chen2022Kernelized,Vieijra2022Generative,Shi2022Clustering,Convy2022Mutual, Metz2023Selfcorrecting,Liao2023Decohering,Ran2023Tensor,Wu2023Tensornetwork,Lopez-Piqueres2023Symmetric,Meng2023Residual,Shin2024Dequantizing, Wesel2024Tensor, Bhatia2024Federated, Tomut2024CompactifAI, Teng2024Learning, Bermejo2024Quantum,Casagrande2024TensorNetworksbased,Chen2024Machine,Su2024Language}, where they have shown promise in diverse areas such as dimensionality reduction \cite{Cichocki2016Tensor,Cichocki2016Tensor2}, model compression \cite{Sun2020Model,Tomut2024CompactifAI}, natural language processing \cite{Guo2018Matrix,Meichanetzidis2020Quantum,Su2024Language}, generative models \cite{Han2018Unsupervised,Lopez-Piqueres2023Symmetric}. In particular, TNs have demonstrated distinct advantages in the development of efficient, interpretable machine learning models \cite{Ran2023Tensor}.  Meanwhile, TN-based machine learning models offer intriguing theoretical and practical advantages, including insights into learning theory and the development of quantum machine learning frameworks. In addition, TNs provide a powerful framework for developing quantum machine learning theories, showing potential exponential advantages over those classical models \cite{Gao2018Quantum, Levine2019Quantum, Gao2021Enhancing}. Recent advancements also highlight the use of TNs to mitigate barren plateaus in quantum learning models \cite{Rudolph2023Synergistica}. On the practical side, TN techniques, including canonical form and renormalization \cite{Stoudenmire2016Supervised, Stoudenmire2018Learning}, are instrumental in optimizing and training machine learning models, further supported by the development of open-source libraries \cite{Abadi2016TensorFlow, 
Roberts2019tensornetwork}. In the vein of TN-based machine learning theory, a number of pioneering works have been conducted, including these on the barren plateau problem \cite{Liu2022Presence,Liu2023Theory,Garcia2023Barren, Miao2024Isometric, Barthel2024Absence}, model generalization \cite{Strashko2022Generalization}, and mutual information scaling \cite{Convy2022Mutual}. Despite this rapid progress, the field of TN-based machine learning is still in its infancy, with many fundamental aspects remaining largely unexplored. Here, we study the generalization ability of TN-based machine learning models, with a focus on establishing their no-free-lunch (NFL) theorem.

\begin{figure}
\centering
\includegraphics[width=\linewidth]{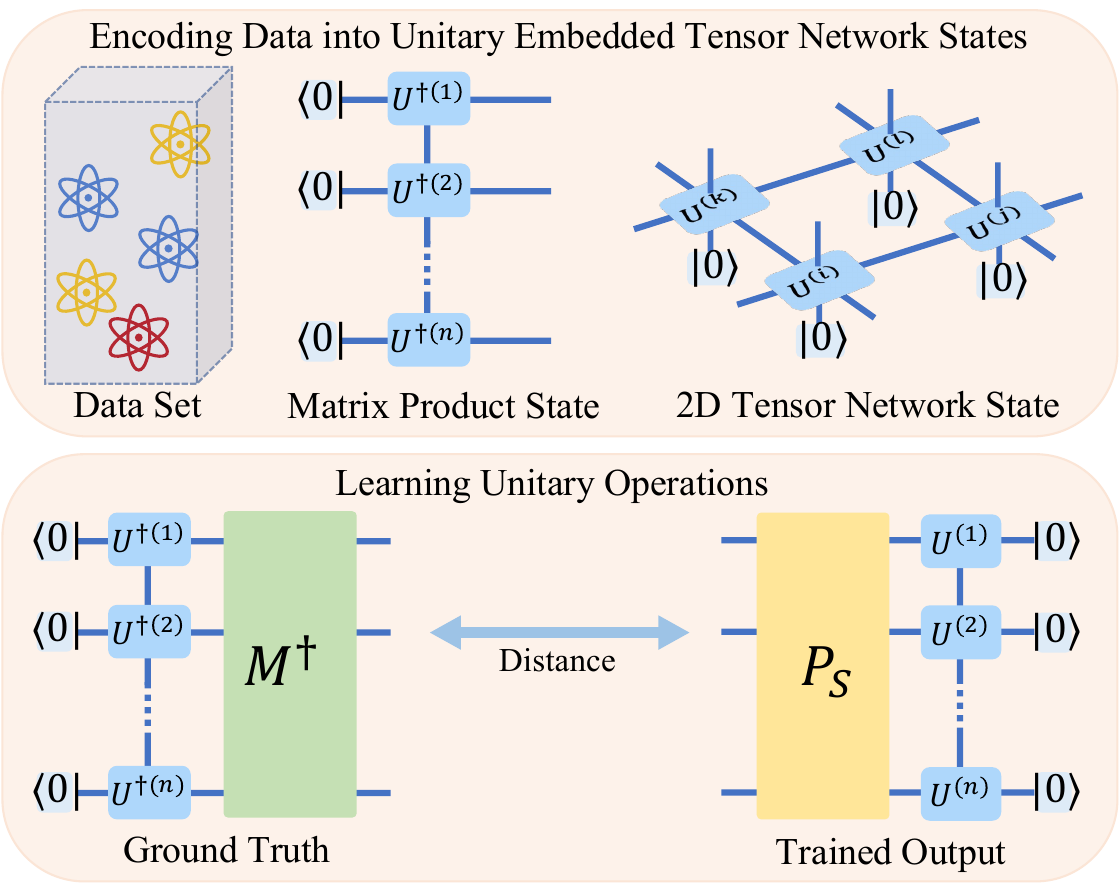} 
\caption{Schematic illustration of supervised learning  unitaries based on tensor network states. Upper panel: The encoding strategy.  Classical (Quantum) data samples with labels are encoded into the local tensors $U^{(i)}$ of the unitary embedded tensor network states. Lower panel: The learning strategy. Given the training set $\mathcal{S}$ of samples with the labeled outputs (left), the  goal is to minimize the average distance between the learned output states and the ground truth states (acting the unitary $M^\dag$ on encoding states) over all training samples. Unitary circuit $P_{\mathcal{S}}$ stores the variational parameters.}
\label{Schematic_Figure}
\end{figure}

The NFL theorem is one of the most fundamental theorems in the classical machine learning theory \cite{Schaffer1994Conservation,Wolpert1996Lack,Wolpert1997No}. It states that, averaged over all possible problems, every algorithm performs equally well when applied to problems they were not specifically designed for. Consequently, the average performance of a machine learning model across all target functions is highly dependent on the size of the input set, highlighting the importance of large datasets in building robust models, such as large language models. Inspired by the critical role of NFL theorem in classical machine learning, significant progress has been made in developing NFL theorem for quantum learning models \cite{Poland2020No,Volkoff2021Universal,Sharma2022Reformulation,Brierley2022No,Zhao2023Learning,Wang2024Transition,Wang2024Separable}. The quantum NFL theorem establishes straightforward connections between quantum features and the capabilities of quantum learning models.
For instance, in practical quantum learning setups with a finite number of measurements, entangled data exhibits a dual effect on prediction errors. With sufficient measurements, highly entangled data can reduce prediction errors. This is consistent with the ideal case of infinite measurements \cite{Sharma2022Reformulation}. Conversely, with few measurements, highly entangled data can amplify predicting errors \cite{Wang2024Transition}. These results highlight how quantum features contribute to the advantages in quantum machine learning models.

For TN-based machine learning models,  the classical data is encoded in TN states with specific structures, and the corresponding data distribution can be quite different from those in other machine learning models. It is hence highly desirable to investigate how to rigorously formulate the NFL theorem for the specific TN-based models, as well as to study how the bond and physical dimensions of TNs would influence the lower bounds of the average risks. We face two major challenges in establishing the NFL theorem for higher-dimensional TN models: The first one is to calculate the variance of random TN states.  This can be mapped to calculating the partition function of a high-dimensional Ising model, which is notoriously a challenging task.   The second one is to embed the information learned from the training set (corresponding to the diagonalized sectors of the matrix $M^\dag P_{\mathcal S}$ in Fig.~\ref{Schematic_Figure}) into the higher-order tensor properly, which notably differs from the case for quantum learning models \cite{Poland2020No,Sharma2022Reformulation} and requires judicious design. In this paper, we tackle these challenges and rigorously prove the NFL theorem for TN models. Our setup is to learn a target unitary operator through data samples encoded into TN states, see Fig.~\ref{Schematic_Figure} for illustration. We first prove the NFL theorem for the case of 1D matrix product state (MPS). We show that the lower bound of the average predicting risk  highly depends on the size of linearly independent training samples. We then focus on the 2D projected entangled-pair states (PEPS). We note that proving the NFL theorem for 2D PEPS poses significant challenges, as it equates to calculating the partition function of the 2D Ising model, which is a notoriously difficult problem.  Here, we circumvent this obstacle by introducing a combinatorial method associated to the ``puzzle of polyominoes'' \cite{Liu2023Theory}. In addition, we carry out numerical simulations to support our analytical results.

{\it Tensor network machine learning.}--- We consider a task of learning the unknown unitary operation $M$ based on the input of TN states.  Without loss of generality, we take the 1D MPS for demonstration. The MPS under periodic boundary condition has the form $|\psi\rangle = \sum_{i_1,\dots , i_n }\tr [A_{i_1}^{(1)}A_{i_2}^{(2)}\cdots A_{i_n}^{(n)}]|i_1, i_2, \dots ,i_n\rangle$, where  $A_{i_k}^{(k)}$ denotes the $D\times D$ tensor with $D$ representing the bond dimension,  $|i_k\rangle$ denotes the state of $k$-th physical site with physical dimension $d$. We define the unitary embedded MPS by converting each $D\times D\times d$ tensor $A^{(k)}$ to a $Dd\times Dd$ unitary $U^{(k)}$ \cite{Perez2007Matrix,Gross2010Quantum,Silvano2010Statistical, Silvano2010Typicality,collins2012matrix,Haferkamp2021Emergent},  as depicted in Fig.~\ref{Schematic_Figure}.  We define the labeled training set $\mathcal{S} = \{( |\psi_j\rangle, |\phi_j\rangle)|j=1,2,...t\}$, where the site size $|\mathcal{S}|=t$, the  MPS $|\psi_j\rangle$ belongs to the feature Hilbert space, and the state $|\phi_j\rangle =M |\psi_j\rangle$ belongs to the label Hilbert space. We learn the target unitary $M$  by minimizing the loss function $\mathcal{L}=\sum_{j=1}^t \left|\left(M |\psi_j\rangle-P_{\mathcal{S}}|\psi_j\rangle\right)/\langle\psi_j|\psi_j\rangle\right|^2$, where  $P_{\mathcal{S}}$ denotes the variational quantum circuit with sufficient expressivity.   If the model is properly trained, then one has $P_{\mathcal{S}}|\psi_j\rangle = M|\psi_j\rangle,\, \forall |\psi_j\rangle \in \mathcal{S}$ up to an overall phase. 
To quantify the predicting accuracy of our TN model on learning the target $M$, we define the predicting risk function by the following trace-norm formula 
\begin{equation}\label{Risk_error}
R_M(P_{\mathcal{S}}) = \int dx\left\| {\left( M|x\rangle\langle x|M^\dag-P_{\mathcal{S}}|x\rangle\langle x|P_{\mathcal{S}}^\dag\right)}/{\langle x| x\rangle}\right\|_1^2,
\end{equation}
where $\|A\|_1 = \frac{1}{2}\tr[\sqrt{A^\dag A}]$ denotes the trace norm of $A$, $|x\rangle$ represents the unitary embedded MPS, and the integral is over the Haar measure of all local unitary tensors $\{U^{(i)}|i=1,2,\cdots n\}$ in Fig.~\ref{Schematic_Figure}.  $R_M(P_{\mathcal{S}})$ represents the prediction error of the trained model. For proper learning without training errors, $R_M(P_{\mathcal{S}})$ is equivalent to the generalization error \cite{Caro2022Generalization}. 
We note that the norm $ \langle x|x\rangle$ is exponentially concentrated around one \cite{Haferkamp2021Emergent}. The  risk function describes the  probability that $P_S$ fails to reproduce the target unitary $M$ with the random MPS as the testing input \cite{Poland2020No}.  One can extend the above MPS-based learning model to higher dimensional tensor-network based models, by replacing $|\psi\rangle$ and $|x\rangle$ into the higher dimensional TN states.  With the above risk function, one can then study the  NFL theorem for TN-based models in learning a target unitary operation $M$.  

{\it The 1D case.}--- The MPSs have found broad applications in machine learning, particularly in the context of learning and representing complex patterns and data, including dimensionality reduction, generative models, and so on.  Now we consider the NFL theorem for the case of MPS. To showcase the NFL theorem, we adopt the risk function $R_M(P_{\mathcal{S}})$ and calculate the average risk over arbitrary target unitary $M$ and arbitrary training set $\mathcal{S}$. Here for simplicity, we suppose that the states in the training set are linearly independent \cite{Sharma2022Reformulation}.
\begin{theorem}\label{theorem:1D} 
Define the risk function $R_M(P_{\mathcal{S}})$ in Eq.~(\ref{Risk_error}) for learning a target $n$-qubit unitary $M$ based on the input of MPSs, where $P_{\mathcal{S}}$ represents the hypothesis unitary learned from the training set $\mathcal{S}$. Given a linear independent training set with size $t_k = d^n-d^{n-k}$, the integer $k\in [1,n-1]$, $d$ is the physical dimension of MPS, and $n$ denotes the qubit number of the system. The average risk is lower bounded by
\begin{equation}\label{NFL_Thm1}
\begin{aligned}
\mathbb{E}_{M,\mathcal{S}}& \left[R_M(P_{\mathcal{S}})\right]\geq  1-(1-\frac{2}{d^k})(1+(dAB)^n)\\
&-(\frac{1}{d^n}+\frac{1}{d^k})(A^k+B^k)(1+(dAB)^{n-k}),
\end{aligned}
\end{equation} 
where  $A=\frac{D+1}{Dd+1}$, $B=\frac{D-1}{Dd-1}$, and $D$ is the bond dimension of MPS. 
\end{theorem}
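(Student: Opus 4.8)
The plan is to rewrite the trace-norm risk as an averaged infidelity, use the interpolation condition to fix the block structure of $W:=M^\dagger P_{\mathcal S}$, integrate out the block of $W$ that the data leaves undetermined with the Haar measure on $U(d^{n-k})$, and then evaluate the remaining one- and two-copy expectations over the random unitary-embedded MPS ensemble by a transfer-matrix (Weingarten) computation that produces the constants $A$ and $B$. To begin, for rank-one projectors one has $\bigl\||\alpha\rangle\langle\alpha|-|\beta\rangle\langle\beta|\bigr\|_1^2=1-|\langle\alpha|\beta\rangle|^2$ in the normalization $\|A\|_1=\tfrac12\tr\sqrt{A^\dagger A}$, so applying this to the two output states in Eq.~(\ref{Risk_error}) gives
\[
R_M(P_{\mathcal S})=\int dx\,\Bigl(1-\frac{|\langle x|W|x\rangle|^2}{\langle x|x\rangle^2}\Bigr),
\]
and it remains to upper bound the averaged fidelity $\mathbb E_{M,\mathcal S}\!\int dx\,|\langle x|W|x\rangle|^2/\langle x|x\rangle^2$.

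Proper training forces $W|\psi_j\rangle=|\psi_j\rangle$, so the unitary $W$ acts as the identity on the $t_k$-dimensional span $\mathcal V$ of the training states and preserves $\mathcal V^{\perp}$, i.e.\ $W=(\mathbb I-\Pi)\oplus V$ with $\Pi$ the projector onto $\mathcal V^{\perp}$ and $\operatorname{rank}\Pi=d^{n-k}$. Since $M$ is Haar distributed and $P_{\mathcal S}$ is a function of the data alone, conditioning on the data leaves $V$ Haar-random on $\mathcal V^{\perp}$; averaging $\bigl|\langle x|(\mathbb I-\Pi)|x\rangle+\langle x|V\Pi|x\rangle\bigr|^2$ over $V$ kills the cross term and replaces $|\langle x|V\Pi|x\rangle|^2$ by $\langle x|\Pi|x\rangle^2/d^{n-k}$. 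Hence $\mathbb E_M R_M(P_{\mathcal S})=\int dx\,\bigl[2\langle x|\Pi|x\rangle/\langle x|x\rangle-(1+d^{-(n-k)})\langle x|\Pi|x\rangle^2/\langle x|x\rangle^2\bigr]$, using $\langle x|(\mathbb I-\Pi)|x\rangle=\langle x|x\rangle-\langle x|\Pi|x\rangle$, so everything is reduced to low moments of $\langle x|\Pi|x\rangle$ and $\langle x|x\rangle$ over the random MPS.

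The key step is to take the training set so that $\mathcal V^{\perp}$ has product form across the cut between the first $k$ and the last $n-k$ sites, $\Pi=|\chi\rangle\langle\chi|_{1\cdots k}\otimes\mathbb I_{k+1\cdots n}$, which has exactly the required rank $d^{n-k}$; this realizes the ``judicious'' embedding of the learned information into the higher-order tensor alluded to in the introduction. Each moment is then a Haar average over the local tensors $U^{(i)}\in U(Dd)$, and contracting the two bond copies turns the network into a transfer matrix on $\mathbb C^{D}\otimes\mathbb C^{D}$ whose two sectors carry the eigenvalue-like factors $A=(D+1)/(Dd+1)$ and $B=(D-1)/(Dd-1)$. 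The product form of $\Pi$ splits the chain into $k$ sites ``pinned'' by $|\chi\rangle\langle\chi|$ and $n-k$ free sites, and summing the sector contributions along the chain yields precisely the combinations $1+(dAB)^n$, $A^k+B^k$ and $1+(dAB)^{n-k}$ in the bound; the $\langle x|x\rangle$ in the denominators is handled by the pointwise inequality $1/\langle x|x\rangle\ge 2-\langle x|x\rangle$ (which again reduces to two-copy moments of the same type) together with the exponential concentration of $\langle x|x\rangle$ about $1$. Discarding a manifestly nonnegative remainder then gives Eq.~(\ref{NFL_Thm1}).

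I expect this last step to be the main obstacle: carrying out the MPS transfer-matrix/Weingarten computation with an inserted product-form projector and tracking the two bond sectors accurately enough to reach a closed form in $A$ and $B$. This is exactly the calculation that in two dimensions has no transfer-matrix shortcut---being equivalent to the partition function of the 2D Ising model---and that is instead organized, in the PEPS case, through the ``puzzle of polyominoes'' counting.
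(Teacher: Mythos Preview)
Your proposal is essentially the same argument as the paper's: both write $W=M^\dagger P_{\mathcal S}$ in block form $e^{i\theta}\mathbb I_{t_k}\oplus Y$ with $Y$ Haar on the $d^{n-k}$-dimensional complement, realize that complement as a product subspace (the paper uses $\Sigma^{\otimes k}\otimes I_d^{\otimes n-k}$ with $\Sigma$ a rank-one diagonal, exactly your $|\chi\rangle\langle\chi|_{1\cdots k}\otimes\mathbb I$), integrate out $Y$ first, and then reduce the remaining two-copy MPS average to a $2\times 2$ transfer matrix whose eigenvalues are your $A$ and $B$; the final bound follows from the trace inequality $\tr[AB]\le\tr[A]\tr[B]$ for positive semidefinite matrices applied to powers of that transfer matrix. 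The only substantive difference is that the paper does not attempt your pointwise bound $1/\langle x|x\rangle\ge 2-\langle x|x\rangle$ on the normalization: it simply invokes the exponential concentration of $\langle x|x\rangle$ about one to replace the denominator by $1$ at the outset (writing $R_M(P_{\mathcal S})\to 1-\int dx\,|\langle x|W|x\rangle|^2$), so your attempt to control the denominator rigorously is in fact stricter than what the paper does.
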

\begin{proof}
We sketch the main idea here and leave the detailed proof to the Supplemental Materials \cite{NFLTN_Supplementary}.  Since the norm $ \langle x|x\rangle$ in Eq.~(\ref{Risk_error}) is exponentially concentrated around one \cite{Haferkamp2021Emergent}, one simplifies the risk function $R_M(P_{\mathcal{S}})$ to be 
$R_M(P_{\mathcal{S}}) \rightarrow 1-\int dx \left|\langle x|M^\dag P_{\mathcal{S}}|x\rangle\right|^2$. For the convenience of analytical calculations,  we consider the  training set $\mathcal{S}$ with size  $|\mathcal{S}|=t_k=d^n-d^{n-k}$. For proper learning where $P_{\mathcal{S}}|\psi_j\rangle = M|\psi_j\rangle,\, \forall |\psi_j\rangle \in \mathcal{S}$ up to an overall phase, we consider the case that the learned unitary $P_{\mathcal{S}}$ obeys  $M^\dag P_{\mathcal{S}} =W=e^{i\theta I_{t_k}}\oplus Y$, where the global phase term $e^{i\theta I_{t_k}}$ locates in the space of the training set, $I_{t_k}$ denotes the $t_k\times t_k$ identity matrix, and the $d^{n-k}\times d^{n-k}$  unitary matrix $Y$ locates in the complementary subspace. For simplicity, we suppose that  the unitary $Y$ locates in an $(n-k)$-qudit subsystem. Utilizing the property that each local random unitary $U^{(i)}$ of the unitary embedded MPS $|x\rangle$ constitutes approximately unitary $2$-design \cite{Haferkamp2021Emergent,Liu2022Presence}, one can map the 2-moment integral of $|x\rangle$ into calculating the partition function of 1D classical Ising model. Averaging over all possible $M$ and all training sets $\mathcal{S}$,  one obtains the analytical lower bound of  $\mathbb{E}_{M,\mathcal{S}}[R_M(P_{\mathcal{S}})]$. This leads to Eq.~(\ref{NFL_Thm1}) and completes the proof.
\end{proof}

Theorem~\ref{theorem:1D} establishes an analytical lower bound  for the average risk of the MPS-based machine learning model, and thus quantifies the capability of the model in learning an arbitrary target unitary with an arbitrary training set.  Eq.~(\ref{NFL_Thm1}) indicates that  the average risk depends solely on the training set size $|\mathcal{S} |= t_k$. Through the numerical calculations,  we show that the average risk decreases monotonically with respect to the increasing  $|\mathcal{S} |$. Rigorous proofs show that the average risk is lower bounded by zero for the full training set, whereas the average risk is lower bounded by one for the empty training set. These results formalize the  NFL theorem for MPS-based machine learning models. Apart from the case of learning arbitrary target unitaries from the input of MPSs, we also consider the case of learning matrix product operators from quantum states, and analytically formulate the corresponding NFL theorem \cite{NFLTN_Supplementary}.

{\it The 2D case.}--- 2D TN states present notable advantages for studying quantum many-body systems and computational applications \cite{Verstraete2006Criticality,Jordan2008Classical,Gu2009Tensor,Buerschaper2009Explicit,Piroli2020Quantum,Haghshenas2022Variational,Pan2022Simulation,Pan2022Solving,Wang2024Tensor}. They enable efficient representation of large quantum systems with reduced computational costs, respecting the area law for entanglement entropy \cite{Eisert2010Colloquium}. 
However, the contraction of a general PEPS without any canonical form is a $\#$P-hard problem \cite{Schuch2007Computational}, where the complexity of computing different physical properties, e.g., the norm and expectation value, shall grow exponentially with respect to the system size. Here we formulate the NFL theorem for the PEPS-based machine learning models.

\begin{theorem}\label{theorem:2D} 
Define the risk function $R_M(P_{\mathcal{S}})$ in Eq.~(\ref{Risk_error}) for learning a target $L^2$-qubit unitary $M$ based on the input of PEPS, where $P_{\mathcal{S}}$ represents the hypothesis unitary learned from the training set $\mathcal{S}$. Given a linear independent training set with size $t_k = d^{L^2}-d^{L^2-k}$, the integer $k\in [1,L^2-1]$, $d$ represents the physical dimension and virtual dimension of PEPS, and $L^2$ denotes the qubit number of the system. The average risk is lower bounded by
\begin{equation}\label{NFL_Thm2}
\begin{aligned}
&\mathbb{E}_{M,\mathcal{S}} \left[R_M(P_{\mathcal{S}})\right] \geq 1-(1+c(0.7)^L)\left[1-\frac{2}{d^{k}}+(1+\frac{1}{d^{L^2-k}})\right.\\
&\left.\cdot\left(\frac{2D^4 d-2}{D^4 d^3-d}\right)^{k}\left(\frac{1+D}{2D}\right)^{2k}(1+G(1/d,1/D^2))^{2l}\right],
\end{aligned}
\end{equation} 
where  $l=\lceil \sqrt{k} \rceil$, $G(q,p)=\frac{p}{2}\left(\sqrt{\frac{(1+q)(1+q-qp)}{1-q(2+p)+q^2(1-p)}}-1\right)$, $D$ is the bond dimension of the PEPS and $c$ is a constant. 
\end{theorem}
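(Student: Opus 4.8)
The plan is to follow the same skeleton as the proof of Theorem~\ref{theorem:1D}, replacing the one-dimensional transfer-matrix evaluation of the second moment by a two-dimensional argument that sidesteps the intractable Ising partition function. First I would invoke the exponential concentration of $\langle x|x\rangle$ around $1$ for unitary-embedded random PEPS, so that $R_M(P_{\mathcal{S}})$ reduces to $1-\int dx\,|\langle x|M^\dagger P_{\mathcal{S}}|x\rangle|^2$ up to a multiplicative error; this controlled error is what ultimately appears as the prefactor $(1+c(0.7)^L)$ in Eq.~(\ref{NFL_Thm2}).

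Next, for the training set of size $t_k=d^{L^2}-d^{L^2-k}$ I would write $M^\dagger P_{\mathcal{S}}=W=e^{i\theta I_{t_k}}\oplus Y$, with the global phase acting on the span of the training states and $Y$ a $d^{L^2-k}$-dimensional Haar-random unitary on the orthogonal complement. The new ingredient relative to the 1D case is the \emph{geometry} of the complementary subspace: I would place the $k$ ``free'' qudits on a maximally compact sub-region of the $L\times L$ lattice, i.e.\ a polyomino fitting in an $l\times l$ box with $l=\lceil\sqrt{k}\rceil$, so that its perimeter scales like $l$ rather than $k$. Averaging $|\langle x|W|x\rangle|^2$ over $Y$ and over all admissible training sets then splits it into a ``bulk'' piece of size $\approx 1-2/d^{k}$ plus a correction piece proportional to $(1+1/d^{L^2-k})$ times a PEPS second moment restricted to the $k$-site region.

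The main obstacle is evaluating that restricted second moment. Using that each local PEPS tensor $U^{(i)}$ forms an approximate unitary $2$-design, Weingarten calculus turns the local Haar averages into a sum over Ising-type spin configurations on the sites and bonds of the lattice, whose exact value is the partition function of an inhomogeneous 2D Ising model with no closed form. Rather than evaluating it, I would adapt the ``puzzle of polyominoes'' technique of Ref.~\cite{Liu2023Theory}: group spin configurations by the connected region (polyomino) on which they differ from the trivial configuration, bound the number of polyominoes of area $m$ by a Klarner-type estimate, bound each Boltzmann weight by a per-site factor $\frac{2D^4 d-2}{D^4 d^3-d}$ together with a single-bond factor $\frac{1+D}{2D}$ and per-boundary-bond factors packaged into $G(1/d,1/D^2)$, and sum the resulting geometric-type series over $m$. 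This should produce the factor $\left(\frac{2D^4 d-2}{D^4 d^3-d}\right)^{k}\left(\frac{1+D}{2D}\right)^{2k}(1+G(1/d,1/D^2))^{2l}$, in which the exponent $k$ records the region's area, the exponent $2l$ records (half) its perimeter, and the polyomino count is precisely what keeps the series convergent.

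Finally I would assemble the bulk term, the polyomino-bounded correction term, and the norm-concentration prefactor, and minimize over the remaining free choices of $M$, $\mathcal{S}$, and $Y$ to obtain the stated lower bound Eq.~(\ref{NFL_Thm2}); the detailed Weingarten bookkeeping and the polyomino generating-function estimates I would relegate to the Supplemental Material \cite{NFLTN_Supplementary}. I expect the delicate point throughout to be ensuring that the polyomino weights genuinely dominate the true (but uncomputable) Ising weights after the constrained optimization over $Y$ and $\mathcal{S}$, so that the combinatorial bound is an honest lower bound on the average risk rather than merely a heuristic estimate.
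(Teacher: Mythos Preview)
Your plan follows essentially the same route as the paper: norm concentration to justify replacing $R_M(P_{\mathcal S})$ by $1-\int dx\,|\langle x|W|x\rangle|^2$, the decomposition $W=e^{i\theta I_{t_k}}\oplus Y$ with the $k$ distinguished sites placed in a compact $l\times l$ block, the Weingarten-to-Ising mapping for the PEPS second moment, and the directed-polyomino generating function of Ref.~\cite{Liu2023Theory} to control the resulting 2D partition function. One small correction to your bookkeeping: in the paper the factors $\bigl(\tfrac{2D^4d-2}{D^4d^3-d}\bigr)^{k}\bigl(\tfrac{1+D}{2D}\bigr)^{2k}$ do not come from a polyomino series but from a direct spin-sum over the $k$ sites \emph{inside} the block (exploiting the spin-flip symmetry of the local weight $g$ there), whereas the generating function $G$ is applied to the excited-spin-string configurations \emph{outside} the block that are rooted at its $\sim 2l$ boundary sites; these two pieces are bounded separately and then multiplied, and the $(1+c(0.7)^L)$ prefactor collects both the norm-concentration error and the contribution of cycle ESSs that wrap the torus.
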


\begin{proof}
We outline the main idea here and leave the details to the Supplemental Materials \cite{NFLTN_Supplementary}. For the 2D case, one critical step  is to map the calculation of the second moment for 2D uniatry embedded PEPS into calculating the partition functions of 2D classical Ising models, which is typically a challenging problem. To address this, we introduce the combinatorial method based on the ``puzzle of polyominoes'' \cite{Liu2023Theory}, and thus convert the problem of calculating partition functions into enumerating the directed polyominoes in a planar graph \cite{Bousquet-Melou1998New}. Then by utilizing the generating function $G(p,q)$ for enumerating the directed polyominoes, one can efficiently compute the value of the second moment integral of 2D unitary embedded PEPS. Consequently, one obtains the analytical lower bound of the average risk function $\mathbb{E}_{M,\mathcal{S}}[R_M(P_{\mathcal{S}})]$. This leads to Eq.~(\ref{NFL_Thm2}) and completes the proof. 
\end{proof}

Theorem~\ref{theorem:2D} establishes a lower bound for the average risk function of PEPS-based machine learning models. It indicates that, in the task of learning an arbitrary unitary with the input of PEPS, when averaged over arbitrary training datasets and learning models, the generalization risk of a PEPS-based learning model depends solely on the size $t_k$ of the training set. As the size $t_k$ increases, the average risk decreases, eventually reaching zero when the training set is complete. These results thus rigorously formalize the NFL theorem for the PEPS-based machine learning models.

Our NFL theorems for TN-based models are different from those for classical machine learning models. In classical supervised learning, to learn a target function $f$ based on the labeled training set $S =\{(x_i,y_i)| x_i\in \mathcal{X}, y_i =f(x_i)\in \mathcal{Y}\}_{i=1}^t$, one typical NFL formulation is $\mathbb{E}_{f,S}\left[R_f\left(h_S\right)\right] \geq\left(1-{1}/{|\mathcal{Y}|}\right)\left(1-{t}/{|\mathcal{X}|}\right)$, where $|\mathcal{X}|$ $(|\mathcal{Y}|)$ represents the dimension of $\mathcal{X}$ $(\mathcal{Y})$, and the risk function $R_f(h_S) = \textrm{Pr}[h_{S}(x)\neq f(x)]$ means the probability that the hypothesis output $h_{S}(x)$ differs from $f(x)$. We notice that the lower bound of the average risk for such classical models is determined by the feature dimension $|\mathcal{X}|$ and label dimension $|\mathcal{Y}|$ of training samples, as well as the size $t$ of training set. Whereas for TN-based models, the lower bound of the average risk is not only determined by the size of training set, but also dependent on the bond dimension $D$ and physical dimension $d$ of the TN states.  Our theorems highlight the influence of the TN model's internal structure on its performance limits.


{\it Numerical results.}--- In our previous theorems, we have analytically obtained the lower bound of the average generalization risk. To show how these theorems perform in practice, we carry out numerical simulations based on the open-source package {\tt ITensors.jl} \cite{ITensor,ITensor-r0.3} in the Julia programming language. Based on the MPS machine learning model, we consider the supervised task of learning target unitaries $U$  with the labeled training samples $\{(|\psi_j\rangle,U|\psi_j\rangle)\}$, where $|\psi_j\rangle$ denotes the normalized MPS.  We then plot the average generalization risks of the trained MPS-based learning models with respect to different qubit size $n$ of learning models, as depicted in Fig.~\ref{prac}. For example, in $n=4$, we randomly generate a 16-dimensional target unitary and a training set of MPSs, and conduct the MPS-based supervised task. By repeatedly conducting learning tasks for different target unitaries, one obtains the average generalization risk  for different size of training sets. We see from Fig.~\ref{prac} that the average error risks decrease with respect to the training set size. This is consistent with the analytical lower bound of the average risks predicted in our theorems.

\begin{figure}
\centering
\includegraphics[width=1\linewidth]{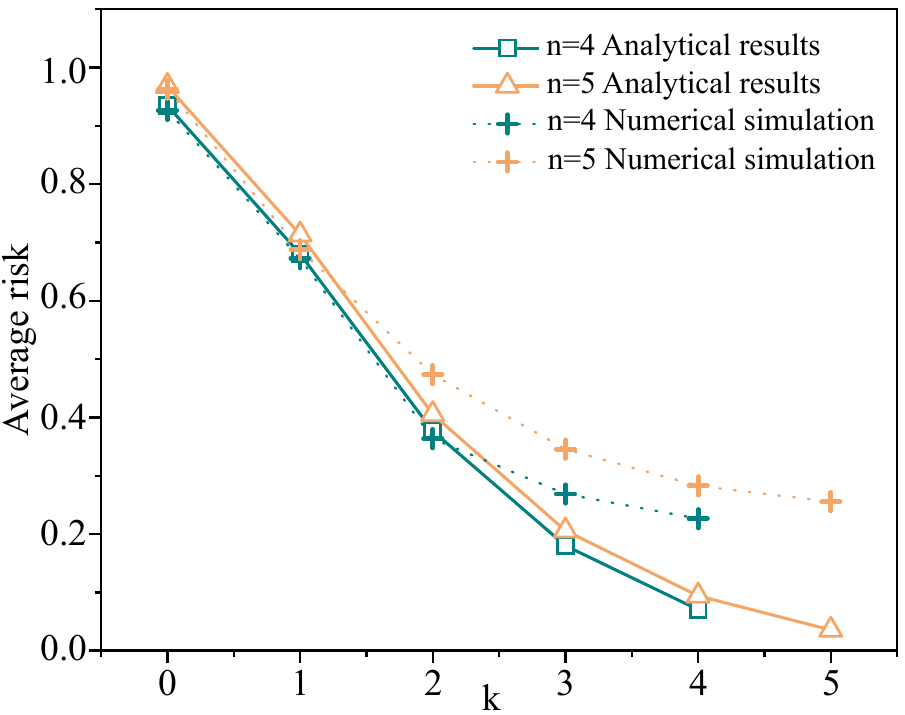}
\caption{Average  risk  of the trained MPS-based machine learning models with respect to the training set size $t_k = 2^n-2^{n-k}$, where the system qubit size $n$ varies from four to five. The physical dimension $d=2$, and the bond dimension $D=2$. The solid lines represent the analytical lower bounds of the average risk  predicted by Theorem 1. And the dotted  lines denote the average risk  of the trained MPS-based machine learning models for predicting target unitaries. }
\label{prac}
\end{figure} 

{\it Discussion.}--- Our results provide a fundamental understanding on the generalization limits of TN-based models,  extract how the performance of these specifically structured models would be limited by the NFL theorem, and analytically unveil that the lower bound of the average risk depends on both the bound and physical dimensions of TNs. Our findings would inspire further research on the learning capabilities of TN-based models with quantum computer, where TNs are employed as efficient representations of quantum circuit models. One potential direction is to incorporate the issues of practical quantum computing hardware, such as the noise and finite measurement times \cite{Wang2024Transition}, into the analytical study of generalization ability for TN-based learning models. From the perspective of experiments, future research could focus on experimentally validating NFL bounds in practical quantum computing environments. As quantum hardware continues to advance, testing these theoretical predictions on real quantum systems will be crucial to understanding how NFL constraints manifest in noisy, resource-limited settings. Such experiments would also help refine our theoretical models, potentially revealing new strategies for optimizing TN-based machine learning models for practical applications. In addition, our methods can be applied to study the generalization of deep quantum neural network \cite{Beer2020Training}, which can be efficiently trained and has been experimentally realized on a superconducting quantum processor \cite{Pan2023Deep}. We observe that the deep model can be mapped to an MPS-based model. Thus our framework can be naturally adapted to formulate the NFL for deep quantum neural networks.

In summary,  we have rigorously formulated the NFL theorems in the TN-based machine learning models. Particularly, we consider the supervised task of learning arbitrary target unitary based on the TN models, and then present the analytical lower bounds for the average risk of the models. Our results reveal the intrinsic limitations in learning arbitrary unitaries from input states encoded via TNs. The risk bounds, which depend on both the bond and physical dimensions, provide a quantitative understanding of the connections between model generalization and training set size. In addition, to validate our theoretical predictions, we numerically conduct the practical supervised learning task and show that the average generalization risks over sufficient number of learning tasks and input sets decrease with respect to the size of training dataset.  Our results offer valuable guidelines for designing more efficient models, and open  promising research directions aimed at improving the generalization capabilities of quantum-inspired TN machine learning systems.

We thank Wenjie Jiang, Zhide Lu and Weikang Li for helpful discussions. This work was supported by  the National Natural Science Foundation of China (Grant No. 12375060, No. T2225008, and No. 12075128 ), the Fundamental Research Funds for the Central Universities (Grant No. 63243070), the Shanghai Qi Zhi Institute Innovation Program SQZ202318, the Innovation Program for Quantum Science and Technology (No. 2021ZD0302203), and the Tsinghua University Dushi Program.

 \bibliography{Reference}

\begin{thebibliography}{106}%
\makeatletter
\providecommand \@ifxundefined [1]{%
 \@ifx{#1\undefined}
}%
\providecommand \@ifnum [1]{%
 \ifnum #1\expandafter \@firstoftwo
 \else \expandafter \@secondoftwo
 \fi
}%
\providecommand \@ifx [1]{%
 \ifx #1\expandafter \@firstoftwo
 \else \expandafter \@secondoftwo
 \fi
}%
\providecommand \natexlab [1]{#1}%
\providecommand \enquote  [1]{``#1''}%
\providecommand \bibnamefont  [1]{#1}%
\providecommand \bibfnamefont [1]{#1}%
\providecommand \citenamefont [1]{#1}%
\providecommand \href@noop [0]{\@secondoftwo}%
\providecommand \href [0]{\begingroup \@sanitize@url \@href}%
\providecommand \@href[1]{\@@startlink{#1}\@@href}%
\providecommand \@@href[1]{\endgroup#1\@@endlink}%
\providecommand \@sanitize@url [0]{\catcode `\\12\catcode `\$12\catcode
  `\&12\catcode `\#12\catcode `\^12\catcode `\_12\catcode `\%12\relax}%
\providecommand \@@startlink[1]{}%
\providecommand \@@endlink[0]{}%
\providecommand \url  [0]{\begingroup\@sanitize@url \@url }%
\providecommand \@url [1]{\endgroup\@href {#1}{\urlprefix }}%
\providecommand \urlprefix  [0]{URL }%
\providecommand \Eprint [0]{\href }%
\providecommand \doibase [0]{https://doi.org/}%
\providecommand \selectlanguage [0]{\@gobble}%
\providecommand \bibinfo  [0]{\@secondoftwo}%
\providecommand \bibfield  [0]{\@secondoftwo}%
\providecommand \translation [1]{[#1]}%
\providecommand \BibitemOpen [0]{}%
\providecommand \bibitemStop [0]{}%
\providecommand \bibitemNoStop [0]{.\EOS\space}%
\providecommand \EOS [0]{\spacefactor3000\relax}%
\providecommand \BibitemShut  [1]{\csname bibitem#1\endcsname}%
\let\auto@bib@innerbib\@empty
\bibitem [{\citenamefont {Or{\'u}s}(2019)}]{Orus2019Tensor}%
  \BibitemOpen
  \bibfield  {author} {\bibinfo {author} {\bibfnamefont {R.}~\bibnamefont
  {Or{\'u}s}},\ }\bibfield  {title} {\bibinfo {title} {Tensor networks for
  complex quantum systems},\ }\href {https://doi.org/10.1038/s42254-019-0086-7}
  {\bibfield  {journal} {\bibinfo  {journal} {Nat. Rev. Phys.}\ }\textbf
  {\bibinfo {volume} {1}},\ \bibinfo {pages} {538} (\bibinfo {year}
  {2019})}\BibitemShut {NoStop}%
\bibitem [{\citenamefont {Biamonte}(2019)}]{Biamonte2019Lectures}%
  \BibitemOpen
  \bibfield  {author} {\bibinfo {author} {\bibfnamefont {J.}~\bibnamefont
  {Biamonte}},\ }\bibfield  {title} {\bibinfo {title} {Lectures on quantum
  tensor networks},\ }\href {https://arxiv.org/abs/1912.10049} {\bibfield
  {journal} {\bibinfo  {journal} {arXiv:1912.10049}\ } (\bibinfo {year}
  {2019})}\BibitemShut {NoStop}%
\bibitem [{\citenamefont {Cirac}\ \emph {et~al.}(2021)\citenamefont {Cirac},
  \citenamefont {P\'erez-Garc\'{\i}a}, \citenamefont {Schuch},\ and\
  \citenamefont {Verstraete}}]{Cirac2020Matrix}%
  \BibitemOpen
  \bibfield  {author} {\bibinfo {author} {\bibfnamefont {J.~I.}\ \bibnamefont
  {Cirac}}, \bibinfo {author} {\bibfnamefont {D.}~\bibnamefont
  {P\'erez-Garc\'{\i}a}}, \bibinfo {author} {\bibfnamefont {N.}~\bibnamefont
  {Schuch}},\ and\ \bibinfo {author} {\bibfnamefont {F.}~\bibnamefont
  {Verstraete}},\ }\bibfield  {title} {\bibinfo {title} {Matrix product states
  and projected entangled pair states: Concepts, symmetries, theorems},\ }\href
  {https://doi.org/10.1103/RevModPhys.93.045003} {\bibfield  {journal}
  {\bibinfo  {journal} {Rev. Mod. Phys.}\ }\textbf {\bibinfo {volume} {93}},\
  \bibinfo {pages} {045003} (\bibinfo {year} {2021})}\BibitemShut {NoStop}%
\bibitem [{\citenamefont {Ba{\~n}uls}(2023)}]{Banuls2023Tensora}%
  \BibitemOpen
  \bibfield  {author} {\bibinfo {author} {\bibfnamefont {M.~C.}\ \bibnamefont
  {Ba{\~n}uls}},\ }\bibfield  {title} {\bibinfo {title} {Tensor {{Network
  Algorithms}}: {{A Route Map}}},\ }\href
  {https://doi.org/10.1146/annurev-conmatphys-040721-022705} {\bibfield
  {journal} {\bibinfo  {journal} {Annu. Rev. Condens. Matter Phys.}\ }\textbf
  {\bibinfo {volume} {14}},\ \bibinfo {pages} {173} (\bibinfo {year}
  {2023})}\BibitemShut {NoStop}%
\bibitem [{\citenamefont {Rieser}\ \emph {et~al.}(2023)\citenamefont {Rieser},
  \citenamefont {K{\"o}ster},\ and\ \citenamefont {Raulf}}]{Rieser2023Tensor}%
  \BibitemOpen
  \bibfield  {author} {\bibinfo {author} {\bibfnamefont {H.-M.}\ \bibnamefont
  {Rieser}}, \bibinfo {author} {\bibfnamefont {F.}~\bibnamefont {K{\"o}ster}},\
  and\ \bibinfo {author} {\bibfnamefont {A.~P.}\ \bibnamefont {Raulf}},\
  }\bibfield  {title} {\bibinfo {title} {Tensor networks for quantum machine
  learning},\ }\bibfield  {journal} {\bibinfo  {journal} {Proc. Roy. Soc. A}\
  }\href {https://doi.org/10.1098/rspa.2023.0218} {10.1098/rspa.2023.0218}
  (\bibinfo {year} {2023})\BibitemShut {NoStop}%
\bibitem [{\citenamefont {Wang}\ \emph {et~al.}(2023)\citenamefont {Wang},
  \citenamefont {Pan}, \citenamefont {Xu}, \citenamefont {Yang}, \citenamefont
  {Li},\ and\ \citenamefont {Cichocki}}]{Wang2023Tensor}%
  \BibitemOpen
  \bibfield  {author} {\bibinfo {author} {\bibfnamefont {M.}~\bibnamefont
  {Wang}}, \bibinfo {author} {\bibfnamefont {Y.}~\bibnamefont {Pan}}, \bibinfo
  {author} {\bibfnamefont {Z.}~\bibnamefont {Xu}}, \bibinfo {author}
  {\bibfnamefont {X.}~\bibnamefont {Yang}}, \bibinfo {author} {\bibfnamefont
  {G.}~\bibnamefont {Li}},\ and\ \bibinfo {author} {\bibfnamefont
  {A.}~\bibnamefont {Cichocki}},\ }\bibfield  {title} {\bibinfo {title} {Tensor
  {{Networks Meet Neural Networks}}: {{A Survey}} and {{Future
  Perspectives}}},\ }\Eprint {https://arxiv.org/abs/2302.09019}
  {arXiv:2302.09019}  (\bibinfo {year} {2023})\BibitemShut {NoStop}%
\bibitem [{\citenamefont {Cichocki}(2014)}]{Cichocki2014Tensor}%
  \BibitemOpen
  \bibfield  {author} {\bibinfo {author} {\bibfnamefont {A.}~\bibnamefont
  {Cichocki}},\ }\bibfield  {title} {\bibinfo {title} {Tensor networks for big
  data analytics and large-scale optimization problems},\ }\href
  {https://arxiv.org/abs/1407.3124} {\bibfield  {journal} {\bibinfo  {journal}
  {arXiv:1407.3124}\ } (\bibinfo {year} {2014})}\BibitemShut {NoStop}%
\bibitem [{\citenamefont {Novikov}\ \emph {et~al.}(2015)\citenamefont
  {Novikov}, \citenamefont {Podoprikhin}, \citenamefont {Osokin},\ and\
  \citenamefont {Vetrov}}]{Novikov2015Tensorizing}%
  \BibitemOpen
  \bibfield  {author} {\bibinfo {author} {\bibfnamefont {A.}~\bibnamefont
  {Novikov}}, \bibinfo {author} {\bibfnamefont {D.}~\bibnamefont
  {Podoprikhin}}, \bibinfo {author} {\bibfnamefont {A.}~\bibnamefont
  {Osokin}},\ and\ \bibinfo {author} {\bibfnamefont {D.~P.}\ \bibnamefont
  {Vetrov}},\ }\bibfield  {title} {\bibinfo {title} {Tensorizing {{Neural
  Networks}}},\ }in\ \href@noop {} {\emph {\bibinfo {booktitle} {Advances in
  {{NeuralIPS}}}}},\ Vol.~\bibinfo {volume} {28}\ (\bibinfo {year}
  {2015})\BibitemShut {NoStop}%
\bibitem [{\citenamefont {Cichocki}\ \emph
  {et~al.}(2016{\natexlab{a}})\citenamefont {Cichocki}, \citenamefont {Lee},
  \citenamefont {Oseledets}, \citenamefont {Phan}, \citenamefont {Zhao},\ and\
  \citenamefont {Mandic}}]{Cichocki2016Tensor}%
  \BibitemOpen
  \bibfield  {author} {\bibinfo {author} {\bibfnamefont {A.}~\bibnamefont
  {Cichocki}}, \bibinfo {author} {\bibfnamefont {N.}~\bibnamefont {Lee}},
  \bibinfo {author} {\bibfnamefont {I.}~\bibnamefont {Oseledets}}, \bibinfo
  {author} {\bibfnamefont {A.-H.}\ \bibnamefont {Phan}}, \bibinfo {author}
  {\bibfnamefont {Q.}~\bibnamefont {Zhao}},\ and\ \bibinfo {author}
  {\bibfnamefont {D.~P.}\ \bibnamefont {Mandic}},\ }\bibfield  {title}
  {\bibinfo {title} {{Tensor Networks for Dimensionality Reduction and
  Large-scale Optimization: Part 1 Low-Rank Tensor Decompositions}},\ }\href
  {https://doi.org/10.1561/2200000059} {\bibfield  {journal} {\bibinfo
  {journal} {Foundations and Trends in Machine Learning}\ }\textbf {\bibinfo
  {volume} {9}},\ \bibinfo {pages} {249} (\bibinfo {year}
  {2016}{\natexlab{a}})}\BibitemShut {NoStop}%
\bibitem [{\citenamefont {Cichocki}\ \emph
  {et~al.}(2016{\natexlab{b}})\citenamefont {Cichocki}, \citenamefont {Phan},
  \citenamefont {Zhao}, \citenamefont {Lee}, \citenamefont {Oseledets},
  \citenamefont {Sugiyama},\ and\ \citenamefont
  {Mandic}}]{Cichocki2016Tensor2}%
  \BibitemOpen
  \bibfield  {author} {\bibinfo {author} {\bibfnamefont {A.}~\bibnamefont
  {Cichocki}}, \bibinfo {author} {\bibfnamefont {A.-H.}\ \bibnamefont {Phan}},
  \bibinfo {author} {\bibfnamefont {Q.}~\bibnamefont {Zhao}}, \bibinfo {author}
  {\bibfnamefont {N.}~\bibnamefont {Lee}}, \bibinfo {author} {\bibfnamefont
  {I.~V.}\ \bibnamefont {Oseledets}}, \bibinfo {author} {\bibfnamefont
  {M.}~\bibnamefont {Sugiyama}},\ and\ \bibinfo {author} {\bibfnamefont
  {D.}~\bibnamefont {Mandic}},\ }\bibfield  {title} {\bibinfo {title} {{Tensor
  Networks for Dimensionality Reduction and Large-scale Optimization: Part 2
  Applications and Future Perspectives}},\ }\href
  {https://doi.org/10.1561/2200000067} {\bibfield  {journal} {\bibinfo
  {journal} {Foundations and Trends in Machine Learning}\ }\textbf {\bibinfo
  {volume} {9}},\ \bibinfo {pages} {431} (\bibinfo {year}
  {2016}{\natexlab{b}})}\BibitemShut {NoStop}%
\bibitem [{\citenamefont {Stoudenmire}\ and\ \citenamefont
  {Schwab}(2016)}]{Stoudenmire2016Supervised}%
  \BibitemOpen
  \bibfield  {author} {\bibinfo {author} {\bibfnamefont {E.}~\bibnamefont
  {Stoudenmire}}\ and\ \bibinfo {author} {\bibfnamefont {D.~J.}\ \bibnamefont
  {Schwab}},\ }\bibfield  {title} {\bibinfo {title} {Supervised learning with
  tensor networks},\ }in\ \href
  {http://papers.nips.cc/paper/6211-supervised-learning-with-tensor-networks.pdf}
  {\emph {\bibinfo {booktitle} {Advances in NeuralIPS}}},\ \bibinfo {editor}
  {edited by\ \bibinfo {editor} {\bibfnamefont {D.~D.}\ \bibnamefont {Lee}},
  \bibinfo {editor} {\bibfnamefont {M.}~\bibnamefont {Sugiyama}}, \bibinfo
  {editor} {\bibfnamefont {U.~V.}\ \bibnamefont {Luxburg}}, \bibinfo {editor}
  {\bibfnamefont {I.}~\bibnamefont {Guyon}},\ and\ \bibinfo {editor}
  {\bibfnamefont {R.}~\bibnamefont {Garnett}}}\ (\bibinfo  {publisher} {{Curran
  Associates, Inc.}},\ \bibinfo {year} {2016})\ pp.\ \bibinfo {pages}
  {4799--4807}\BibitemShut {NoStop}%
\bibitem [{\citenamefont {Novikov}\ \emph {et~al.}(2016)\citenamefont
  {Novikov}, \citenamefont {Trofimov},\ and\ \citenamefont
  {Oseledets}}]{Novikov2016Exponential}%
  \BibitemOpen
  \bibfield  {author} {\bibinfo {author} {\bibfnamefont {A.}~\bibnamefont
  {Novikov}}, \bibinfo {author} {\bibfnamefont {M.}~\bibnamefont {Trofimov}},\
  and\ \bibinfo {author} {\bibfnamefont {I.}~\bibnamefont {Oseledets}},\
  }\bibfield  {title} {\bibinfo {title} {Exponential machines},\ }\href
  {https://arxiv.org/abs/1605.03795} {\bibfield  {journal} {\bibinfo  {journal}
  {arXiv:1605.03795}\ } (\bibinfo {year} {2016})}\BibitemShut {NoStop}%
\bibitem [{\citenamefont {Liu}\ \emph {et~al.}(2018)\citenamefont {Liu},
  \citenamefont {Zhang}, \citenamefont {Lewenstein},\ and\ \citenamefont
  {Ran}}]{Liu2018Entanglement}%
  \BibitemOpen
  \bibfield  {author} {\bibinfo {author} {\bibfnamefont {Y.}~\bibnamefont
  {Liu}}, \bibinfo {author} {\bibfnamefont {X.}~\bibnamefont {Zhang}}, \bibinfo
  {author} {\bibfnamefont {M.}~\bibnamefont {Lewenstein}},\ and\ \bibinfo
  {author} {\bibfnamefont {S.-J.}\ \bibnamefont {Ran}},\ }\bibfield  {title}
  {\bibinfo {title} {Entanglement-guided architectures of machine learning by
  quantum tensor network},\ }\href {https://arxiv.org/abs/1803.09111}
  {\bibfield  {journal} {\bibinfo  {journal} {arXiv:1803.09111}\ } (\bibinfo
  {year} {2018})}\BibitemShut {NoStop}%
\bibitem [{\citenamefont {Chen}\ \emph
  {et~al.}(2018{\natexlab{a}})\citenamefont {Chen}, \citenamefont {Batselier},
  \citenamefont {Suykens},\ and\ \citenamefont {Wong}}]{Chen2018Parallelized}%
  \BibitemOpen
  \bibfield  {author} {\bibinfo {author} {\bibfnamefont {Z.}~\bibnamefont
  {Chen}}, \bibinfo {author} {\bibfnamefont {K.}~\bibnamefont {Batselier}},
  \bibinfo {author} {\bibfnamefont {J.~A.~K.}\ \bibnamefont {Suykens}},\ and\
  \bibinfo {author} {\bibfnamefont {N.}~\bibnamefont {Wong}},\ }\bibfield
  {title} {\bibinfo {title} {Parallelized {{Tensor Train Learning}} of
  {{Polynomial Classifiers}}},\ }\href
  {https://doi.org/10.1109/TNNLS.2017.2771264} {\bibfield  {journal} {\bibinfo
  {journal} {IEEE Trans. Neural Netw. Learn. Syst.}\ }\textbf {\bibinfo
  {volume} {29}},\ \bibinfo {pages} {4621} (\bibinfo {year}
  {2018}{\natexlab{a}})}\BibitemShut {NoStop}%
\bibitem [{\citenamefont {Levine}\ \emph {et~al.}(2018)\citenamefont {Levine},
  \citenamefont {Yakira}, \citenamefont {Cohen},\ and\ \citenamefont
  {Shashua}}]{Levine2018Deep}%
  \BibitemOpen
  \bibfield  {author} {\bibinfo {author} {\bibfnamefont {Y.}~\bibnamefont
  {Levine}}, \bibinfo {author} {\bibfnamefont {D.}~\bibnamefont {Yakira}},
  \bibinfo {author} {\bibfnamefont {N.}~\bibnamefont {Cohen}},\ and\ \bibinfo
  {author} {\bibfnamefont {A.}~\bibnamefont {Shashua}},\ }\bibfield  {title}
  {\bibinfo {title} {Deep learning and quantum entanglement: Fundamental
  connections with implications to network design},\ }in\ \href
  {https://openreview.net/forum?id=SywXXwJAb} {\emph {\bibinfo {booktitle}
  {ICLR}}}\ (\bibinfo {year} {2018})\BibitemShut {NoStop}%
\bibitem [{\citenamefont {Stoudenmire}(2018)}]{Stoudenmire2018Learning}%
  \BibitemOpen
  \bibfield  {author} {\bibinfo {author} {\bibfnamefont {E.~M.}\ \bibnamefont
  {Stoudenmire}},\ }\bibfield  {title} {\bibinfo {title} {Learning relevant
  features of data with multi-scale tensor networks},\ }\href
  {https://doi.org/10.1088/2058-9565/aaba1a} {\bibfield  {journal} {\bibinfo
  {journal} {Quantum Sci. Technol.}\ }\textbf {\bibinfo {volume} {3}},\
  \bibinfo {pages} {034003} (\bibinfo {year} {2018})}\BibitemShut {NoStop}%
\bibitem [{\citenamefont {Han}\ \emph {et~al.}(2018)\citenamefont {Han},
  \citenamefont {Wang}, \citenamefont {Fan}, \citenamefont {Wang},\ and\
  \citenamefont {Zhang}}]{Han2018Unsupervised}%
  \BibitemOpen
  \bibfield  {author} {\bibinfo {author} {\bibfnamefont {Z.-Y.}\ \bibnamefont
  {Han}}, \bibinfo {author} {\bibfnamefont {J.}~\bibnamefont {Wang}}, \bibinfo
  {author} {\bibfnamefont {H.}~\bibnamefont {Fan}}, \bibinfo {author}
  {\bibfnamefont {L.}~\bibnamefont {Wang}},\ and\ \bibinfo {author}
  {\bibfnamefont {P.}~\bibnamefont {Zhang}},\ }\bibfield  {title} {\bibinfo
  {title} {Unsupervised generative modeling using matrix product states},\
  }\href {https://doi.org/10.1103/PhysRevX.8.031012} {\bibfield  {journal}
  {\bibinfo  {journal} {Phys. Rev. X}\ }\textbf {\bibinfo {volume} {8}},\
  \bibinfo {pages} {031012} (\bibinfo {year} {2018})}\BibitemShut {NoStop}%
\bibitem [{\citenamefont {Liu}\ \emph {et~al.}(2019)\citenamefont {Liu},
  \citenamefont {Ran}, \citenamefont {Wittek}, \citenamefont {Peng},
  \citenamefont {Garc{\'\i}a}, \citenamefont {Su},\ and\ \citenamefont
  {Lewenstein}}]{Liu2019Machine}%
  \BibitemOpen
  \bibfield  {author} {\bibinfo {author} {\bibfnamefont {D.}~\bibnamefont
  {Liu}}, \bibinfo {author} {\bibfnamefont {S.-J.}\ \bibnamefont {Ran}},
  \bibinfo {author} {\bibfnamefont {P.}~\bibnamefont {Wittek}}, \bibinfo
  {author} {\bibfnamefont {C.}~\bibnamefont {Peng}}, \bibinfo {author}
  {\bibfnamefont {R.~B.}\ \bibnamefont {Garc{\'\i}a}}, \bibinfo {author}
  {\bibfnamefont {G.}~\bibnamefont {Su}},\ and\ \bibinfo {author}
  {\bibfnamefont {M.}~\bibnamefont {Lewenstein}},\ }\bibfield  {title}
  {\bibinfo {title} {Machine learning by unitary tensor network of hierarchical
  tree structure},\ }\href {https://doi.org/10.1088/1367-2630/ab31ef}
  {\bibfield  {journal} {\bibinfo  {journal} {New J. Phys.}\ }\textbf {\bibinfo
  {volume} {21}},\ \bibinfo {pages} {073059} (\bibinfo {year}
  {2019})}\BibitemShut {NoStop}%
\bibitem [{\citenamefont {Hayashi}\ \emph {et~al.}(2019)\citenamefont
  {Hayashi}, \citenamefont {Yamaguchi}, \citenamefont {Sugawara},\ and\
  \citenamefont {Maeda}}]{Hayashi2019Exploring}%
  \BibitemOpen
  \bibfield  {author} {\bibinfo {author} {\bibfnamefont {K.}~\bibnamefont
  {Hayashi}}, \bibinfo {author} {\bibfnamefont {T.}~\bibnamefont {Yamaguchi}},
  \bibinfo {author} {\bibfnamefont {Y.}~\bibnamefont {Sugawara}},\ and\
  \bibinfo {author} {\bibfnamefont {S.-i.}\ \bibnamefont {Maeda}},\ }\bibfield
  {title} {\bibinfo {title} {Exploring {{Unexplored Tensor Network
  Decompositions}} for {{Convolutional Neural Networks}}},\ }in\ \href@noop {}
  {\emph {\bibinfo {booktitle} {Advances in {{Neural Information Processing
  Systems}}}}},\ Vol.~\bibinfo {volume} {32}\ (\bibinfo {year}
  {2019})\BibitemShut {NoStop}%
\bibitem [{\citenamefont {Liu}\ \emph {et~al.}(2021)\citenamefont {Liu},
  \citenamefont {Li}, \citenamefont {Zhang},\ and\ \citenamefont
  {Zhang}}]{Liu2021Tensor}%
  \BibitemOpen
  \bibfield  {author} {\bibinfo {author} {\bibfnamefont {J.}~\bibnamefont
  {Liu}}, \bibinfo {author} {\bibfnamefont {S.}~\bibnamefont {Li}}, \bibinfo
  {author} {\bibfnamefont {J.}~\bibnamefont {Zhang}},\ and\ \bibinfo {author}
  {\bibfnamefont {P.}~\bibnamefont {Zhang}},\ }\bibfield  {title} {\bibinfo
  {title} {Tensor networks for unsupervised machine learning},\ }\href
  {https://arxiv.org/abs/2106.12974v1} {\bibfield  {journal} {\bibinfo
  {journal} {arXiv:2106.12974v1}\ } (\bibinfo {year} {2021})}\BibitemShut
  {NoStop}%
\bibitem [{\citenamefont {Chen}\ \emph
  {et~al.}(2018{\natexlab{b}})\citenamefont {Chen}, \citenamefont {Cheng},
  \citenamefont {Xie}, \citenamefont {Wang},\ and\ \citenamefont
  {Xiang}}]{Chen2018Equivalence}%
  \BibitemOpen
  \bibfield  {author} {\bibinfo {author} {\bibfnamefont {J.}~\bibnamefont
  {Chen}}, \bibinfo {author} {\bibfnamefont {S.}~\bibnamefont {Cheng}},
  \bibinfo {author} {\bibfnamefont {H.}~\bibnamefont {Xie}}, \bibinfo {author}
  {\bibfnamefont {L.}~\bibnamefont {Wang}},\ and\ \bibinfo {author}
  {\bibfnamefont {T.}~\bibnamefont {Xiang}},\ }\bibfield  {title} {\bibinfo
  {title} {Equivalence of restricted {{Boltzmann}} machines and tensor network
  states},\ }\href {https://doi.org/10.1103/PhysRevB.97.085104} {\bibfield
  {journal} {\bibinfo  {journal} {Phys. Rev. B}\ }\textbf {\bibinfo {volume}
  {97}},\ \bibinfo {pages} {085104} (\bibinfo {year}
  {2018}{\natexlab{b}})}\BibitemShut {NoStop}%
\bibitem [{\citenamefont {Levine}\ \emph {et~al.}(2019)\citenamefont {Levine},
  \citenamefont {Sharir}, \citenamefont {Cohen},\ and\ \citenamefont
  {Shashua}}]{Levine2019Quantum}%
  \BibitemOpen
  \bibfield  {author} {\bibinfo {author} {\bibfnamefont {Y.}~\bibnamefont
  {Levine}}, \bibinfo {author} {\bibfnamefont {O.}~\bibnamefont {Sharir}},
  \bibinfo {author} {\bibfnamefont {N.}~\bibnamefont {Cohen}},\ and\ \bibinfo
  {author} {\bibfnamefont {A.}~\bibnamefont {Shashua}},\ }\bibfield  {title}
  {\bibinfo {title} {Quantum entanglement in deep learning architectures},\
  }\href {https://doi.org/10.1103/PhysRevLett.122.065301} {\bibfield  {journal}
  {\bibinfo  {journal} {Phys. Rev. Lett.}\ }\textbf {\bibinfo {volume} {122}},\
  \bibinfo {pages} {065301} (\bibinfo {year} {2019})}\BibitemShut {NoStop}%
\bibitem [{\citenamefont {Bhatia}\ \emph {et~al.}(2019)\citenamefont {Bhatia},
  \citenamefont {Saggi}, \citenamefont {Kumar},\ and\ \citenamefont
  {Jain}}]{Bhatia2019Matrix}%
  \BibitemOpen
  \bibfield  {author} {\bibinfo {author} {\bibfnamefont {A.~S.}\ \bibnamefont
  {Bhatia}}, \bibinfo {author} {\bibfnamefont {M.~K.}\ \bibnamefont {Saggi}},
  \bibinfo {author} {\bibfnamefont {A.}~\bibnamefont {Kumar}},\ and\ \bibinfo
  {author} {\bibfnamefont {S.}~\bibnamefont {Jain}},\ }\bibfield  {title}
  {\bibinfo {title} {Matrix product state--based quantum classifier},\ }\href
  {https://doi.org/10.1162/neco_a_01202} {\bibfield  {journal} {\bibinfo
  {journal} {Neural Comput.}\ }\textbf {\bibinfo {volume} {31}},\ \bibinfo
  {pages} {1499} (\bibinfo {year} {2019})}\BibitemShut {NoStop}%
\bibitem [{\citenamefont {Efthymiou}\ \emph {et~al.}(2019)\citenamefont
  {Efthymiou}, \citenamefont {Hidary},\ and\ \citenamefont
  {Leichenauer}}]{Efthymiou2019Tensornetwork}%
  \BibitemOpen
  \bibfield  {author} {\bibinfo {author} {\bibfnamefont {S.}~\bibnamefont
  {Efthymiou}}, \bibinfo {author} {\bibfnamefont {J.}~\bibnamefont {Hidary}},\
  and\ \bibinfo {author} {\bibfnamefont {S.}~\bibnamefont {Leichenauer}},\
  }\bibfield  {title} {\bibinfo {title} {Tensornetwork for machine learning},\
  }\href {https://arxiv.org/abs/1906.06329} {\bibfield  {journal} {\bibinfo
  {journal} {arXiv:1906.06329}\ } (\bibinfo {year} {2019})}\BibitemShut
  {NoStop}%
\bibitem [{\citenamefont {Huggins}\ \emph {et~al.}(2019)\citenamefont
  {Huggins}, \citenamefont {Patil}, \citenamefont {Mitchell}, \citenamefont
  {Whaley},\ and\ \citenamefont {Stoudenmire}}]{Huggins2019Towards}%
  \BibitemOpen
  \bibfield  {author} {\bibinfo {author} {\bibfnamefont {W.}~\bibnamefont
  {Huggins}}, \bibinfo {author} {\bibfnamefont {P.}~\bibnamefont {Patil}},
  \bibinfo {author} {\bibfnamefont {B.}~\bibnamefont {Mitchell}}, \bibinfo
  {author} {\bibfnamefont {K.~B.}\ \bibnamefont {Whaley}},\ and\ \bibinfo
  {author} {\bibfnamefont {E.~M.}\ \bibnamefont {Stoudenmire}},\ }\bibfield
  {title} {\bibinfo {title} {Towards quantum machine learning with tensor
  networks},\ }\href {https://doi.org/10.1088/2058-9565/aaea94} {\bibfield
  {journal} {\bibinfo  {journal} {Quantum Sci. Technol.}\ }\textbf {\bibinfo
  {volume} {4}},\ \bibinfo {pages} {024001} (\bibinfo {year}
  {2019})}\BibitemShut {NoStop}%
\bibitem [{\citenamefont {Glasser}\ \emph {et~al.}(2020)\citenamefont
  {Glasser}, \citenamefont {Pancotti},\ and\ \citenamefont
  {Cirac}}]{Glasser2020From}%
  \BibitemOpen
  \bibfield  {author} {\bibinfo {author} {\bibfnamefont {I.}~\bibnamefont
  {Glasser}}, \bibinfo {author} {\bibfnamefont {N.}~\bibnamefont {Pancotti}},\
  and\ \bibinfo {author} {\bibfnamefont {J.~I.}\ \bibnamefont {Cirac}},\
  }\bibfield  {title} {\bibinfo {title} {From probabilistic graphical models to
  generalized tensor networks for supervised learning},\ }\href
  {https://doi.org/10.1109/ACCESS.2020.2986279} {\bibfield  {journal} {\bibinfo
   {journal} {IEEE Access}\ }\textbf {\bibinfo {volume} {8}},\ \bibinfo {pages}
  {68169} (\bibinfo {year} {2020})}\BibitemShut {NoStop}%
\bibitem [{\citenamefont {Su}\ \emph {et~al.}(2020)\citenamefont {Su},
  \citenamefont {Byeon}, \citenamefont {Kossaifi}, \citenamefont {Huang},
  \citenamefont {Kautz},\ and\ \citenamefont
  {Anandkumar}}]{Su2020Convolutional}%
  \BibitemOpen
  \bibfield  {author} {\bibinfo {author} {\bibfnamefont {J.}~\bibnamefont
  {Su}}, \bibinfo {author} {\bibfnamefont {W.}~\bibnamefont {Byeon}}, \bibinfo
  {author} {\bibfnamefont {J.}~\bibnamefont {Kossaifi}}, \bibinfo {author}
  {\bibfnamefont {F.}~\bibnamefont {Huang}}, \bibinfo {author} {\bibfnamefont
  {J.}~\bibnamefont {Kautz}},\ and\ \bibinfo {author} {\bibfnamefont
  {A.}~\bibnamefont {Anandkumar}},\ }\bibfield  {title} {\bibinfo {title}
  {Convolutional {{Tensor-Train LSTM}} for {{Spatio-Temporal Learning}}},\ }in\
  \href@noop {} {\emph {\bibinfo {booktitle} {Advances in {{Neural Information
  Processing Systems}}}}},\ Vol.~\bibinfo {volume} {33}\ (\bibinfo {year}
  {2020})\ pp.\ \bibinfo {pages} {13714--13726}\BibitemShut {NoStop}%
\bibitem [{\citenamefont {Sun}\ \emph {et~al.}(2020{\natexlab{a}})\citenamefont
  {Sun}, \citenamefont {Ran},\ and\ \citenamefont {Su}}]{Sun2020Tangent}%
  \BibitemOpen
  \bibfield  {author} {\bibinfo {author} {\bibfnamefont {Z.-Z.}\ \bibnamefont
  {Sun}}, \bibinfo {author} {\bibfnamefont {S.-J.}\ \bibnamefont {Ran}},\ and\
  \bibinfo {author} {\bibfnamefont {G.}~\bibnamefont {Su}},\ }\bibfield
  {title} {\bibinfo {title} {Tangent-space gradient optimization of tensor
  network for machine learning},\ }\href
  {https://doi.org/10.1103/PhysRevE.102.012152} {\bibfield  {journal} {\bibinfo
   {journal} {Phys. Rev. E}\ }\textbf {\bibinfo {volume} {102}},\ \bibinfo
  {pages} {012152} (\bibinfo {year} {2020}{\natexlab{a}})}\BibitemShut
  {NoStop}%
\bibitem [{\citenamefont {Sun}\ \emph {et~al.}(2020{\natexlab{b}})\citenamefont
  {Sun}, \citenamefont {Gao}, \citenamefont {Lu}, \citenamefont {Li},\ and\
  \citenamefont {Yan}}]{Sun2020Model}%
  \BibitemOpen
  \bibfield  {author} {\bibinfo {author} {\bibfnamefont {X.}~\bibnamefont
  {Sun}}, \bibinfo {author} {\bibfnamefont {Z.-F.}\ \bibnamefont {Gao}},
  \bibinfo {author} {\bibfnamefont {Z.-Y.}\ \bibnamefont {Lu}}, \bibinfo
  {author} {\bibfnamefont {J.}~\bibnamefont {Li}},\ and\ \bibinfo {author}
  {\bibfnamefont {Y.}~\bibnamefont {Yan}},\ }\bibfield  {title} {\bibinfo
  {title} {A {{Model Compression Method With Matrix Product Operators}} for
  {{Speech Enhancement}}},\ }\href {https://doi.org/10.1109/TASLP.2020.3030495}
  {\bibfield  {journal} {\bibinfo  {journal} {IEEE/ACM Trans. Audio Speech
  Lang. Process.}\ }\textbf {\bibinfo {volume} {28}},\ \bibinfo {pages} {2837}
  (\bibinfo {year} {2020}{\natexlab{b}})}\BibitemShut {NoStop}%
\bibitem [{\citenamefont {Chen}\ \emph {et~al.}(2020)\citenamefont {Chen},
  \citenamefont {Huang}, \citenamefont {Hsing},\ and\ \citenamefont
  {Kao}}]{Chen2020Hybrid}%
  \BibitemOpen
  \bibfield  {author} {\bibinfo {author} {\bibfnamefont {S.~Y.-C.}\
  \bibnamefont {Chen}}, \bibinfo {author} {\bibfnamefont {C.-M.}\ \bibnamefont
  {Huang}}, \bibinfo {author} {\bibfnamefont {C.-W.}\ \bibnamefont {Hsing}},\
  and\ \bibinfo {author} {\bibfnamefont {Y.-J.}\ \bibnamefont {Kao}},\
  }\bibfield  {title} {\bibinfo {title} {Hybrid quantum-classical classifier
  based on tensor network and variational quantum circuit},\ }\Eprint
  {https://arxiv.org/abs/2011.14651} {arXiv:2011.14651}  (\bibinfo {year}
  {2020})\BibitemShut {NoStop}%
\bibitem [{\citenamefont {Sun}\ \emph {et~al.}(2020{\natexlab{c}})\citenamefont
  {Sun}, \citenamefont {Peng}, \citenamefont {Liu}, \citenamefont {Ran},\ and\
  \citenamefont {Su}}]{Sun2020Generative}%
  \BibitemOpen
  \bibfield  {author} {\bibinfo {author} {\bibfnamefont {Z.-Z.}\ \bibnamefont
  {Sun}}, \bibinfo {author} {\bibfnamefont {C.}~\bibnamefont {Peng}}, \bibinfo
  {author} {\bibfnamefont {D.}~\bibnamefont {Liu}}, \bibinfo {author}
  {\bibfnamefont {S.-J.}\ \bibnamefont {Ran}},\ and\ \bibinfo {author}
  {\bibfnamefont {G.}~\bibnamefont {Su}},\ }\bibfield  {title} {\bibinfo
  {title} {Generative tensor network classification model for supervised
  machine learning},\ }\href {https://doi.org/10.1103/PhysRevB.101.075135}
  {\bibfield  {journal} {\bibinfo  {journal} {Phys. Rev. B}\ }\textbf {\bibinfo
  {volume} {101}},\ \bibinfo {pages} {075135} (\bibinfo {year}
  {2020}{\natexlab{c}})}\BibitemShut {NoStop}%
\bibitem [{\citenamefont {Wang}\ \emph {et~al.}(2020)\citenamefont {Wang},
  \citenamefont {Roberts}, \citenamefont {Vidal},\ and\ \citenamefont
  {Leichenauer}}]{Wang2020Anomaly}%
  \BibitemOpen
  \bibfield  {author} {\bibinfo {author} {\bibfnamefont {J.}~\bibnamefont
  {Wang}}, \bibinfo {author} {\bibfnamefont {C.}~\bibnamefont {Roberts}},
  \bibinfo {author} {\bibfnamefont {G.}~\bibnamefont {Vidal}},\ and\ \bibinfo
  {author} {\bibfnamefont {S.}~\bibnamefont {Leichenauer}},\ }\bibfield
  {title} {\bibinfo {title} {Anomaly detection with tensor networks},\ }\href
  {https://arxiv.org/abs/2006.02516} {\bibfield  {journal} {\bibinfo  {journal}
  {arXiv:2006.02516}\ } (\bibinfo {year} {2020})}\BibitemShut {NoStop}%
\bibitem [{\citenamefont {Gao}\ \emph {et~al.}(2020)\citenamefont {Gao},
  \citenamefont {Cheng}, \citenamefont {He}, \citenamefont {Xie}, \citenamefont
  {Zhao}, \citenamefont {Lu},\ and\ \citenamefont
  {Xiang}}]{Gao2020Compressing}%
  \BibitemOpen
  \bibfield  {author} {\bibinfo {author} {\bibfnamefont {Z.-F.}\ \bibnamefont
  {Gao}}, \bibinfo {author} {\bibfnamefont {S.}~\bibnamefont {Cheng}}, \bibinfo
  {author} {\bibfnamefont {R.-Q.}\ \bibnamefont {He}}, \bibinfo {author}
  {\bibfnamefont {Z.~Y.}\ \bibnamefont {Xie}}, \bibinfo {author} {\bibfnamefont
  {H.-H.}\ \bibnamefont {Zhao}}, \bibinfo {author} {\bibfnamefont {Z.-Y.}\
  \bibnamefont {Lu}},\ and\ \bibinfo {author} {\bibfnamefont {T.}~\bibnamefont
  {Xiang}},\ }\bibfield  {title} {\bibinfo {title} {Compressing deep neural
  networks by matrix product operators},\ }\href
  {https://doi.org/10.1103/PhysRevResearch.2.023300} {\bibfield  {journal}
  {\bibinfo  {journal} {Phys. Rev. Res.}\ }\textbf {\bibinfo {volume} {2}},\
  \bibinfo {pages} {023300} (\bibinfo {year} {2020})}\BibitemShut {NoStop}%
\bibitem [{\citenamefont {Wall}\ \emph {et~al.}(2021)\citenamefont {Wall},
  \citenamefont {Abernathy},\ and\ \citenamefont
  {Quiroz}}]{Wall2021Generative}%
  \BibitemOpen
  \bibfield  {author} {\bibinfo {author} {\bibfnamefont {M.~L.}\ \bibnamefont
  {Wall}}, \bibinfo {author} {\bibfnamefont {M.~R.}\ \bibnamefont
  {Abernathy}},\ and\ \bibinfo {author} {\bibfnamefont {G.}~\bibnamefont
  {Quiroz}},\ }\bibfield  {title} {\bibinfo {title} {Generative machine
  learning with tensor networks: Benchmarks on near-term quantum computers},\
  }\href {https://doi.org/10.1103/PhysRevResearch.3.023010} {\bibfield
  {journal} {\bibinfo  {journal} {Phys. Rev. Res.}\ }\textbf {\bibinfo {volume}
  {3}},\ \bibinfo {pages} {023010} (\bibinfo {year} {2021})}\BibitemShut
  {NoStop}%
\bibitem [{\citenamefont {Cheng}\ \emph {et~al.}(2021)\citenamefont {Cheng},
  \citenamefont {Wang},\ and\ \citenamefont {Zhang}}]{Cheng2021Supervised}%
  \BibitemOpen
  \bibfield  {author} {\bibinfo {author} {\bibfnamefont {S.}~\bibnamefont
  {Cheng}}, \bibinfo {author} {\bibfnamefont {L.}~\bibnamefont {Wang}},\ and\
  \bibinfo {author} {\bibfnamefont {P.}~\bibnamefont {Zhang}},\ }\bibfield
  {title} {\bibinfo {title} {Supervised learning with projected entangled pair
  states},\ }\href {https://doi.org/10.1103/PhysRevB.103.125117} {\bibfield
  {journal} {\bibinfo  {journal} {Phys. Rev. B}\ }\textbf {\bibinfo {volume}
  {103}},\ \bibinfo {pages} {125117} (\bibinfo {year} {2021})}\BibitemShut
  {NoStop}%
\bibitem [{\citenamefont {da~Costa}\ \emph {et~al.}(2021)\citenamefont
  {da~Costa}, \citenamefont {Attux}, \citenamefont {Cichocki},\ and\
  \citenamefont {Romano}}]{Costa2021Tensortrain}%
  \BibitemOpen
  \bibfield  {author} {\bibinfo {author} {\bibfnamefont {M.~N.}\ \bibnamefont
  {da~Costa}}, \bibinfo {author} {\bibfnamefont {R.}~\bibnamefont {Attux}},
  \bibinfo {author} {\bibfnamefont {A.}~\bibnamefont {Cichocki}},\ and\
  \bibinfo {author} {\bibfnamefont {J.~M.}\ \bibnamefont {Romano}},\ }\bibfield
   {title} {\bibinfo {title} {Tensor-train networks for learning predictive
  modeling of multidimensional data},\ }\href
  {https://arxiv.org/abs/2101.09184} {\bibfield  {journal} {\bibinfo  {journal}
  {arXiv:2101.09184}\ } (\bibinfo {year} {2021})}\BibitemShut {NoStop}%
\bibitem [{\citenamefont {Kardashin}\ \emph {et~al.}(2021)\citenamefont
  {Kardashin}, \citenamefont {Uvarov},\ and\ \citenamefont
  {Biamonte}}]{Kardashin2021Quantum}%
  \BibitemOpen
  \bibfield  {author} {\bibinfo {author} {\bibfnamefont {A.}~\bibnamefont
  {Kardashin}}, \bibinfo {author} {\bibfnamefont {A.}~\bibnamefont {Uvarov}},\
  and\ \bibinfo {author} {\bibfnamefont {J.}~\bibnamefont {Biamonte}},\
  }\bibfield  {title} {\bibinfo {title} {{Quantum Machine Learning Tensor
  Network States}},\ }\href {https://doi.org/10.3389/fphy.2020.586374}
  {\bibfield  {journal} {\bibinfo  {journal} {Front. Phys.}\ }\textbf {\bibinfo
  {volume} {8}},\ \bibinfo {pages} {644} (\bibinfo {year} {2021})}\BibitemShut
  {NoStop}%
\bibitem [{\citenamefont {Felser}\ \emph {et~al.}(2021)\citenamefont {Felser},
  \citenamefont {Trenti}, \citenamefont {Sestini}, \citenamefont {Gianelle},
  \citenamefont {Zuliani}, \citenamefont {Lucchesi},\ and\ \citenamefont
  {Montangero}}]{Felser2021Quantuminspired}%
  \BibitemOpen
  \bibfield  {author} {\bibinfo {author} {\bibfnamefont {T.}~\bibnamefont
  {Felser}}, \bibinfo {author} {\bibfnamefont {M.}~\bibnamefont {Trenti}},
  \bibinfo {author} {\bibfnamefont {L.}~\bibnamefont {Sestini}}, \bibinfo
  {author} {\bibfnamefont {A.}~\bibnamefont {Gianelle}}, \bibinfo {author}
  {\bibfnamefont {D.}~\bibnamefont {Zuliani}}, \bibinfo {author} {\bibfnamefont
  {D.}~\bibnamefont {Lucchesi}},\ and\ \bibinfo {author} {\bibfnamefont
  {S.}~\bibnamefont {Montangero}},\ }\bibfield  {title} {\bibinfo {title}
  {Quantum-inspired machine learning on high-energy physics data},\ }\href
  {https://doi.org/10.1038/s41534-021-00443-w} {\bibfield  {journal} {\bibinfo
  {journal} {npj Quantum Inf.}\ }\textbf {\bibinfo {volume} {7}},\ \bibinfo
  {pages} {1} (\bibinfo {year} {2021})}\BibitemShut {NoStop}%
\bibitem [{\citenamefont {Wang}\ \emph {et~al.}(2021)\citenamefont {Wang},
  \citenamefont {Xiao}, \citenamefont {Yi}, \citenamefont {Ran},\ and\
  \citenamefont {Xue}}]{Wang2021Experimental}%
  \BibitemOpen
  \bibfield  {author} {\bibinfo {author} {\bibfnamefont {K.}~\bibnamefont
  {Wang}}, \bibinfo {author} {\bibfnamefont {L.}~\bibnamefont {Xiao}}, \bibinfo
  {author} {\bibfnamefont {W.}~\bibnamefont {Yi}}, \bibinfo {author}
  {\bibfnamefont {S.-J.}\ \bibnamefont {Ran}},\ and\ \bibinfo {author}
  {\bibfnamefont {P.}~\bibnamefont {Xue}},\ }\bibfield  {title} {\bibinfo
  {title} {Experimental realization of a quantum image classifier via
  tensor-network-based machine learning},\ }\href
  {https://doi.org/10.1364/PRJ.434217} {\bibfield  {journal} {\bibinfo
  {journal} {Photon. Res.}\ }\textbf {\bibinfo {volume} {9}},\ \bibinfo {pages}
  {2332} (\bibinfo {year} {2021})}\BibitemShut {NoStop}%
\bibitem [{\citenamefont {Hawkins}\ and\ \citenamefont
  {Zhang}(2021)}]{Hawkins2021Bayesian}%
  \BibitemOpen
  \bibfield  {author} {\bibinfo {author} {\bibfnamefont {C.}~\bibnamefont
  {Hawkins}}\ and\ \bibinfo {author} {\bibfnamefont {Z.}~\bibnamefont
  {Zhang}},\ }\bibfield  {title} {\bibinfo {title} {Bayesian tensorized neural
  networks with automatic rank selection},\ }\href
  {https://doi.org/10.1016/j.neucom.2021.04.117} {\bibfield  {journal}
  {\bibinfo  {journal} {Neurocomputing}\ }\textbf {\bibinfo {volume} {453}},\
  \bibinfo {pages} {172} (\bibinfo {year} {2021})}\BibitemShut {NoStop}%
\bibitem [{\citenamefont {Chen}\ \emph {et~al.}(2022)\citenamefont {Chen},
  \citenamefont {Batselier}, \citenamefont {Yu},\ and\ \citenamefont
  {Wong}}]{Chen2022Kernelized}%
  \BibitemOpen
  \bibfield  {author} {\bibinfo {author} {\bibfnamefont {C.}~\bibnamefont
  {Chen}}, \bibinfo {author} {\bibfnamefont {K.}~\bibnamefont {Batselier}},
  \bibinfo {author} {\bibfnamefont {W.}~\bibnamefont {Yu}},\ and\ \bibinfo
  {author} {\bibfnamefont {N.}~\bibnamefont {Wong}},\ }\bibfield  {title}
  {\bibinfo {title} {Kernelized support tensor train machines},\ }\href
  {https://doi.org/10.1016/j.patcog.2021.108337} {\bibfield  {journal}
  {\bibinfo  {journal} {Pattern Recognit.}\ }\textbf {\bibinfo {volume}
  {122}},\ \bibinfo {pages} {108337} (\bibinfo {year} {2022})}\BibitemShut
  {NoStop}%
\bibitem [{\citenamefont {Vieijra}\ \emph {et~al.}(2022)\citenamefont
  {Vieijra}, \citenamefont {Vanderstraeten},\ and\ \citenamefont
  {Verstraete}}]{Vieijra2022Generative}%
  \BibitemOpen
  \bibfield  {author} {\bibinfo {author} {\bibfnamefont {T.}~\bibnamefont
  {Vieijra}}, \bibinfo {author} {\bibfnamefont {L.}~\bibnamefont
  {Vanderstraeten}},\ and\ \bibinfo {author} {\bibfnamefont {F.}~\bibnamefont
  {Verstraete}},\ }\bibfield  {title} {\bibinfo {title} {Generative modeling
  with projected entangled-pair states},\ }\Eprint
  {https://arxiv.org/abs/2202.08177} {arXiv:2202.08177}  (\bibinfo {year}
  {2022})\BibitemShut {NoStop}%
\bibitem [{\citenamefont {Shi}\ \emph {et~al.}(2022)\citenamefont {Shi},
  \citenamefont {Shang},\ and\ \citenamefont {Guo}}]{Shi2022Clustering}%
  \BibitemOpen
  \bibfield  {author} {\bibinfo {author} {\bibfnamefont {X.}~\bibnamefont
  {Shi}}, \bibinfo {author} {\bibfnamefont {Y.}~\bibnamefont {Shang}},\ and\
  \bibinfo {author} {\bibfnamefont {C.}~\bibnamefont {Guo}},\ }\bibfield
  {title} {\bibinfo {title} {Clustering using matrix product states},\ }\href
  {https://doi.org/10.1103/PhysRevA.105.052424} {\bibfield  {journal} {\bibinfo
   {journal} {Phys. Rev. A}\ }\textbf {\bibinfo {volume} {105}},\ \bibinfo
  {pages} {052424} (\bibinfo {year} {2022})}\BibitemShut {NoStop}%
\bibitem [{\citenamefont {Convy}\ \emph {et~al.}(2022)\citenamefont {Convy},
  \citenamefont {Huggins}, \citenamefont {Liao},\ and\ \citenamefont
  {Whaley}}]{Convy2022Mutual}%
  \BibitemOpen
  \bibfield  {author} {\bibinfo {author} {\bibfnamefont {I.}~\bibnamefont
  {Convy}}, \bibinfo {author} {\bibfnamefont {W.}~\bibnamefont {Huggins}},
  \bibinfo {author} {\bibfnamefont {H.}~\bibnamefont {Liao}},\ and\ \bibinfo
  {author} {\bibfnamefont {K.~B.}\ \bibnamefont {Whaley}},\ }\bibfield  {title}
  {\bibinfo {title} {Mutual information scaling for tensor network machine
  learning},\ }\href {https://doi.org/10.1088/2632-2153/ac44a9} {\bibfield
  {journal} {\bibinfo  {journal} {Mach. Learn.: Sci. Technol.}\ }\textbf
  {\bibinfo {volume} {3}},\ \bibinfo {pages} {015017} (\bibinfo {year}
  {2022})}\BibitemShut {NoStop}%
\bibitem [{\citenamefont {Metz}\ and\ \citenamefont
  {Bukov}(2023)}]{Metz2023Selfcorrecting}%
  \BibitemOpen
  \bibfield  {author} {\bibinfo {author} {\bibfnamefont {F.}~\bibnamefont
  {Metz}}\ and\ \bibinfo {author} {\bibfnamefont {M.}~\bibnamefont {Bukov}},\
  }\bibfield  {title} {\bibinfo {title} {Self-correcting quantum many-body
  control using reinforcement learning with tensor networks},\ }\href
  {https://doi.org/10.1038/s42256-023-00687-5} {\bibfield  {journal} {\bibinfo
  {journal} {Nat. Mach. Intell.}\ }\textbf {\bibinfo {volume} {5}},\ \bibinfo
  {pages} {780} (\bibinfo {year} {2023})}\BibitemShut {NoStop}%
\bibitem [{\citenamefont {Liao}\ \emph {et~al.}(2023)\citenamefont {Liao},
  \citenamefont {Convy}, \citenamefont {Yang},\ and\ \citenamefont
  {Whaley}}]{Liao2023Decohering}%
  \BibitemOpen
  \bibfield  {author} {\bibinfo {author} {\bibfnamefont {H.}~\bibnamefont
  {Liao}}, \bibinfo {author} {\bibfnamefont {I.}~\bibnamefont {Convy}},
  \bibinfo {author} {\bibfnamefont {Z.}~\bibnamefont {Yang}},\ and\ \bibinfo
  {author} {\bibfnamefont {K.~B.}\ \bibnamefont {Whaley}},\ }\bibfield  {title}
  {\bibinfo {title} {Decohering tensor network quantum machine learning
  models},\ }\href {https://doi.org/10.1007/s42484-022-00095-9} {\bibfield
  {journal} {\bibinfo  {journal} {Quantum Mach. Intell.}\ }\textbf {\bibinfo
  {volume} {5}},\ \bibinfo {pages} {7} (\bibinfo {year} {2023})}\BibitemShut
  {NoStop}%
\bibitem [{\citenamefont {Ran}\ and\ \citenamefont {Su}(2023)}]{Ran2023Tensor}%
  \BibitemOpen
  \bibfield  {author} {\bibinfo {author} {\bibfnamefont {S.-J.}\ \bibnamefont
  {Ran}}\ and\ \bibinfo {author} {\bibfnamefont {G.}~\bibnamefont {Su}},\
  }\bibfield  {title} {\bibinfo {title} {Tensor {{Networks}} for
  {{Interpretable}} and {{Efficient Quantum-Inspired Machine Learning}}},\
  }\href {https://doi.org/10.34133/icomputing.0061} {\bibfield  {journal}
  {\bibinfo  {journal} {Intell. Comput.}\ }\textbf {\bibinfo {volume} {2}},\
  \bibinfo {pages} {0061} (\bibinfo {year} {2023})}\BibitemShut {NoStop}%
\bibitem [{\citenamefont {Wu}\ \emph {et~al.}(2023)\citenamefont {Wu},
  \citenamefont {Rossi}, \citenamefont {Vicentini},\ and\ \citenamefont
  {Carleo}}]{Wu2023Tensornetwork}%
  \BibitemOpen
  \bibfield  {author} {\bibinfo {author} {\bibfnamefont {D.}~\bibnamefont
  {Wu}}, \bibinfo {author} {\bibfnamefont {R.}~\bibnamefont {Rossi}}, \bibinfo
  {author} {\bibfnamefont {F.}~\bibnamefont {Vicentini}},\ and\ \bibinfo
  {author} {\bibfnamefont {G.}~\bibnamefont {Carleo}},\ }\bibfield  {title}
  {\bibinfo {title} {From tensor-network quantum states to tensorial recurrent
  neural networks},\ }\href {https://doi.org/10.1103/PhysRevResearch.5.L032001}
  {\bibfield  {journal} {\bibinfo  {journal} {Phys. Rev. Res.}\ }\textbf
  {\bibinfo {volume} {5}},\ \bibinfo {pages} {L032001} (\bibinfo {year}
  {2023})}\BibitemShut {NoStop}%
\bibitem [{\citenamefont {{Lopez-Piqueres}}\ \emph {et~al.}(2023)\citenamefont
  {{Lopez-Piqueres}}, \citenamefont {Chen},\ and\ \citenamefont
  {{Perdomo-Ortiz}}}]{Lopez-Piqueres2023Symmetric}%
  \BibitemOpen
  \bibfield  {author} {\bibinfo {author} {\bibfnamefont {J.}~\bibnamefont
  {{Lopez-Piqueres}}}, \bibinfo {author} {\bibfnamefont {J.}~\bibnamefont
  {Chen}},\ and\ \bibinfo {author} {\bibfnamefont {A.}~\bibnamefont
  {{Perdomo-Ortiz}}},\ }\bibfield  {title} {\bibinfo {title} {Symmetric tensor
  networks for generative modeling and constrained combinatorial
  optimization},\ }\href {https://doi.org/10.1088/2632-2153/ace0f5} {\bibfield
  {journal} {\bibinfo  {journal} {Mach. Learn.: Sci. Technol.}\ }\textbf
  {\bibinfo {volume} {4}},\ \bibinfo {pages} {035009} (\bibinfo {year}
  {2023})}\BibitemShut {NoStop}%
\bibitem [{\citenamefont {Meng}\ \emph {et~al.}(2023)\citenamefont {Meng},
  \citenamefont {Zhang}, \citenamefont {Zhang}, \citenamefont {Gao},\ and\
  \citenamefont {Ran}}]{Meng2023Residual}%
  \BibitemOpen
  \bibfield  {author} {\bibinfo {author} {\bibfnamefont {Y.-M.}\ \bibnamefont
  {Meng}}, \bibinfo {author} {\bibfnamefont {J.}~\bibnamefont {Zhang}},
  \bibinfo {author} {\bibfnamefont {P.}~\bibnamefont {Zhang}}, \bibinfo
  {author} {\bibfnamefont {C.}~\bibnamefont {Gao}},\ and\ \bibinfo {author}
  {\bibfnamefont {S.-J.}\ \bibnamefont {Ran}},\ }\bibfield  {title} {\bibinfo
  {title} {Residual matrix product state for machine learning},\ }\href
  {https://doi.org/10.21468/SciPostPhys.14.6.142} {\bibfield  {journal}
  {\bibinfo  {journal} {SciPost Phys.}\ }\textbf {\bibinfo {volume} {14}},\
  \bibinfo {pages} {142} (\bibinfo {year} {2023})}\BibitemShut {NoStop}%
\bibitem [{\citenamefont {Shin}\ \emph {et~al.}(2024)\citenamefont {Shin},
  \citenamefont {Teo},\ and\ \citenamefont {Jeong}}]{Shin2024Dequantizing}%
  \BibitemOpen
  \bibfield  {author} {\bibinfo {author} {\bibfnamefont {S.}~\bibnamefont
  {Shin}}, \bibinfo {author} {\bibfnamefont {Y.~S.}\ \bibnamefont {Teo}},\ and\
  \bibinfo {author} {\bibfnamefont {H.}~\bibnamefont {Jeong}},\ }\bibfield
  {title} {\bibinfo {title} {Dequantizing quantum machine learning models using
  tensor networks},\ }\href {https://doi.org/10.1103/PhysRevResearch.6.023218}
  {\bibfield  {journal} {\bibinfo  {journal} {Phys. Rev. Res.}\ }\textbf
  {\bibinfo {volume} {6}},\ \bibinfo {pages} {023218} (\bibinfo {year}
  {2024})}\BibitemShut {NoStop}%
\bibitem [{\citenamefont {Wesel}\ and\ \citenamefont
  {Batselier}(2024)}]{Wesel2024Tensor}%
  \BibitemOpen
  \bibfield  {author} {\bibinfo {author} {\bibfnamefont {F.}~\bibnamefont
  {Wesel}}\ and\ \bibinfo {author} {\bibfnamefont {K.}~\bibnamefont
  {Batselier}},\ }\bibfield  {title} {\bibinfo {title} {Tensor
  {{Network-Constrained Kernel Machines}} as {{Gaussian Processes}}},\ }\Eprint
  {https://arxiv.org/abs/2403.19500} {arXiv:2403.19500}  (\bibinfo {year}
  {2024})\BibitemShut {NoStop}%
\bibitem [{\citenamefont {Bhatia}\ and\ \citenamefont
  {Neira}(2024)}]{Bhatia2024Federated}%
  \BibitemOpen
  \bibfield  {author} {\bibinfo {author} {\bibfnamefont {A.~S.}\ \bibnamefont
  {Bhatia}}\ and\ \bibinfo {author} {\bibfnamefont {D.~E.~B.}\ \bibnamefont
  {Neira}},\ }\bibfield  {title} {\bibinfo {title} {Federated {{Hierarchical
  Tensor Networks}}: A {{Collaborative Learning Quantum AI-Driven Framework}}
  for {{Healthcare}}},\ }\Eprint {https://arxiv.org/abs/2405.07735}
  {arXiv:2405.07735}  (\bibinfo {year} {2024})\BibitemShut {NoStop}%
\bibitem [{\citenamefont {Tomut}\ \emph {et~al.}(2024)\citenamefont {Tomut},
  \citenamefont {Jahromi}, \citenamefont {Singh}, \citenamefont {Ishtiaq},
  \citenamefont {Mu{\~n}oz}, \citenamefont {Bajaj}, \citenamefont {Elborady},
  \citenamefont {{del Bimbo}}, \citenamefont {Alizadeh}, \citenamefont
  {Montero}, \citenamefont {{Martin-Ramiro}}, \citenamefont {Ibrahim},
  \citenamefont {Alaoui}, \citenamefont {Malcolm}, \citenamefont {Mugel},\ and\
  \citenamefont {Orus}}]{Tomut2024CompactifAI}%
  \BibitemOpen
  \bibfield  {author} {\bibinfo {author} {\bibfnamefont {A.}~\bibnamefont
  {Tomut}}, \bibinfo {author} {\bibfnamefont {S.~S.}\ \bibnamefont {Jahromi}},
  \bibinfo {author} {\bibfnamefont {S.}~\bibnamefont {Singh}}, \bibinfo
  {author} {\bibfnamefont {F.}~\bibnamefont {Ishtiaq}}, \bibinfo {author}
  {\bibfnamefont {C.}~\bibnamefont {Mu{\~n}oz}}, \bibinfo {author}
  {\bibfnamefont {P.~S.}\ \bibnamefont {Bajaj}}, \bibinfo {author}
  {\bibfnamefont {A.}~\bibnamefont {Elborady}}, \bibinfo {author}
  {\bibfnamefont {G.}~\bibnamefont {{del Bimbo}}}, \bibinfo {author}
  {\bibfnamefont {M.}~\bibnamefont {Alizadeh}}, \bibinfo {author}
  {\bibfnamefont {D.}~\bibnamefont {Montero}}, \bibinfo {author} {\bibfnamefont
  {P.}~\bibnamefont {{Martin-Ramiro}}}, \bibinfo {author} {\bibfnamefont
  {M.}~\bibnamefont {Ibrahim}}, \bibinfo {author} {\bibfnamefont {O.~T.}\
  \bibnamefont {Alaoui}}, \bibinfo {author} {\bibfnamefont {J.}~\bibnamefont
  {Malcolm}}, \bibinfo {author} {\bibfnamefont {S.}~\bibnamefont {Mugel}},\
  and\ \bibinfo {author} {\bibfnamefont {R.}~\bibnamefont {Orus}},\ }\bibfield
  {title} {\bibinfo {title} {{{CompactifAI}}: {{Extreme Compression}} of
  {{Large Language Models}} using {{Quantum-Inspired Tensor Networks}}},\
  }\Eprint {https://arxiv.org/abs/2401.14109} {arXiv:2401.14109}  (\bibinfo
  {year} {2024})\BibitemShut {NoStop}%
\bibitem [{\citenamefont {Teng}\ \emph {et~al.}(2024)\citenamefont {Teng},
  \citenamefont {Samajdar}, \citenamefont {Van~Kirk}, \citenamefont {Wilde},
  \citenamefont {Sachdev}, \citenamefont {Eisert}, \citenamefont {Sweke},\ and\
  \citenamefont {Najafi}}]{Teng2024Learning}%
  \BibitemOpen
  \bibfield  {author} {\bibinfo {author} {\bibfnamefont {Y.}~\bibnamefont
  {Teng}}, \bibinfo {author} {\bibfnamefont {R.}~\bibnamefont {Samajdar}},
  \bibinfo {author} {\bibfnamefont {K.}~\bibnamefont {Van~Kirk}}, \bibinfo
  {author} {\bibfnamefont {F.}~\bibnamefont {Wilde}}, \bibinfo {author}
  {\bibfnamefont {S.}~\bibnamefont {Sachdev}}, \bibinfo {author} {\bibfnamefont
  {J.}~\bibnamefont {Eisert}}, \bibinfo {author} {\bibfnamefont
  {R.}~\bibnamefont {Sweke}},\ and\ \bibinfo {author} {\bibfnamefont
  {K.}~\bibnamefont {Najafi}},\ }\bibfield  {title} {\bibinfo {title} {Learning
  topological states from randomized measurements using variational tensor
  network tomography},\ }\Eprint {https://arxiv.org/abs/2406.00193}
  {arXiv:2406.00193}  (\bibinfo {year} {2024})\BibitemShut {NoStop}%
\bibitem [{\citenamefont {Bermejo}\ \emph {et~al.}(2024)\citenamefont
  {Bermejo}, \citenamefont {Braccia}, \citenamefont {Rudolph}, \citenamefont
  {Holmes}, \citenamefont {Cincio},\ and\ \citenamefont
  {Cerezo}}]{Bermejo2024Quantum}%
  \BibitemOpen
  \bibfield  {author} {\bibinfo {author} {\bibfnamefont {P.}~\bibnamefont
  {Bermejo}}, \bibinfo {author} {\bibfnamefont {P.}~\bibnamefont {Braccia}},
  \bibinfo {author} {\bibfnamefont {M.~S.}\ \bibnamefont {Rudolph}}, \bibinfo
  {author} {\bibfnamefont {Z.}~\bibnamefont {Holmes}}, \bibinfo {author}
  {\bibfnamefont {L.}~\bibnamefont {Cincio}},\ and\ \bibinfo {author}
  {\bibfnamefont {M.}~\bibnamefont {Cerezo}},\ }\bibfield  {title} {\bibinfo
  {title} {Quantum {{Convolutional Neural Networks}} are ({{Effectively}})
  {{Classically Simulable}}},\ }\Eprint {https://arxiv.org/abs/2408.12739}
  {arXiv:2408.12739}  (\bibinfo {year} {2024})\BibitemShut {NoStop}%
\bibitem [{\citenamefont {Casagrande}\ \emph {et~al.}(2024)\citenamefont
  {Casagrande}, \citenamefont {Xing}, \citenamefont {Munro}, \citenamefont
  {Guo},\ and\ \citenamefont {Poletti}}]{Casagrande2024TensorNetworksbased}%
  \BibitemOpen
  \bibfield  {author} {\bibinfo {author} {\bibfnamefont {H.~P.}\ \bibnamefont
  {Casagrande}}, \bibinfo {author} {\bibfnamefont {B.}~\bibnamefont {Xing}},
  \bibinfo {author} {\bibfnamefont {W.~J.}\ \bibnamefont {Munro}}, \bibinfo
  {author} {\bibfnamefont {C.}~\bibnamefont {Guo}},\ and\ \bibinfo {author}
  {\bibfnamefont {D.}~\bibnamefont {Poletti}},\ }\bibfield  {title} {\bibinfo
  {title} {Tensor-{{Networks-based Learning}} of {{Probabilistic Cellular
  Automata Dynamics}}},\ }\Eprint {https://arxiv.org/abs/2404.11768}
  {arXiv:2404.11768}  (\bibinfo {year} {2024})\BibitemShut {NoStop}%
\bibitem [{\citenamefont {Chen}\ and\ \citenamefont
  {Barthel}(2024)}]{Chen2024Machine}%
  \BibitemOpen
  \bibfield  {author} {\bibinfo {author} {\bibfnamefont {H.}~\bibnamefont
  {Chen}}\ and\ \bibinfo {author} {\bibfnamefont {T.}~\bibnamefont {Barthel}},\
  }\bibfield  {title} {\bibinfo {title} {Machine learning with tree tensor
  networks, {{CP}} rank constraints, and tensor dropout},\ }\href
  {https://doi.org/10.1109/TPAMI.2024.3396386} {\bibfield  {journal} {\bibinfo
  {journal} {IEEE Trans. Pattern Anal. Mach. Intell.}\ }\textbf {\bibinfo
  {volume} {46}},\ \bibinfo {pages} {7825} (\bibinfo {year}
  {2024})}\BibitemShut {NoStop}%
\bibitem [{\citenamefont {Su}\ \emph {et~al.}(2024)\citenamefont {Su},
  \citenamefont {Zhou}, \citenamefont {Mo},\ and\ \citenamefont
  {Simonsen}}]{Su2024Language}%
  \BibitemOpen
  \bibfield  {author} {\bibinfo {author} {\bibfnamefont {Z.}~\bibnamefont
  {Su}}, \bibinfo {author} {\bibfnamefont {Y.}~\bibnamefont {Zhou}}, \bibinfo
  {author} {\bibfnamefont {F.}~\bibnamefont {Mo}},\ and\ \bibinfo {author}
  {\bibfnamefont {J.~G.}\ \bibnamefont {Simonsen}},\ }\bibfield  {title}
  {\bibinfo {title} {Language {{Modeling Using Tensor Trains}}},\ }\Eprint
  {https://arxiv.org/abs/2405.04590} {arXiv:2405.04590}  (\bibinfo {year}
  {2024})\BibitemShut {NoStop}%
\bibitem [{\citenamefont {Guo}\ \emph {et~al.}(2018)\citenamefont {Guo},
  \citenamefont {Jie}, \citenamefont {Lu},\ and\ \citenamefont
  {Poletti}}]{Guo2018Matrix}%
  \BibitemOpen
  \bibfield  {author} {\bibinfo {author} {\bibfnamefont {C.}~\bibnamefont
  {Guo}}, \bibinfo {author} {\bibfnamefont {Z.}~\bibnamefont {Jie}}, \bibinfo
  {author} {\bibfnamefont {W.}~\bibnamefont {Lu}},\ and\ \bibinfo {author}
  {\bibfnamefont {D.}~\bibnamefont {Poletti}},\ }\bibfield  {title} {\bibinfo
  {title} {Matrix product operators for sequence-to-sequence learning},\ }\href
  {https://doi.org/10.1103/PhysRevE.98.042114} {\bibfield  {journal} {\bibinfo
  {journal} {Phys. Rev. E}\ }\textbf {\bibinfo {volume} {98}},\ \bibinfo
  {pages} {042114} (\bibinfo {year} {2018})}\BibitemShut {NoStop}%
\bibitem [{\citenamefont {Meichanetzidis}\ \emph {et~al.}(2020)\citenamefont
  {Meichanetzidis}, \citenamefont {Gogioso}, \citenamefont {De~Felice},
  \citenamefont {Chiappori}, \citenamefont {Toumi},\ and\ \citenamefont
  {Coecke}}]{Meichanetzidis2020Quantum}%
  \BibitemOpen
  \bibfield  {author} {\bibinfo {author} {\bibfnamefont {K.}~\bibnamefont
  {Meichanetzidis}}, \bibinfo {author} {\bibfnamefont {S.}~\bibnamefont
  {Gogioso}}, \bibinfo {author} {\bibfnamefont {G.}~\bibnamefont {De~Felice}},
  \bibinfo {author} {\bibfnamefont {N.}~\bibnamefont {Chiappori}}, \bibinfo
  {author} {\bibfnamefont {A.}~\bibnamefont {Toumi}},\ and\ \bibinfo {author}
  {\bibfnamefont {B.}~\bibnamefont {Coecke}},\ }\bibfield  {title} {\bibinfo
  {title} {Quantum natural language processing on near-term quantum
  computers},\ }\href {https://arxiv.org/abs/2005.04147} {\bibfield  {journal}
  {\bibinfo  {journal} {arXiv:2005.04147}\ } (\bibinfo {year}
  {2020})}\BibitemShut {NoStop}%
\bibitem [{\citenamefont {Gao}\ \emph {et~al.}(2018)\citenamefont {Gao},
  \citenamefont {Zhang},\ and\ \citenamefont {Duan}}]{Gao2018Quantum}%
  \BibitemOpen
  \bibfield  {author} {\bibinfo {author} {\bibfnamefont {X.}~\bibnamefont
  {Gao}}, \bibinfo {author} {\bibfnamefont {Z.-Y.}\ \bibnamefont {Zhang}},\
  and\ \bibinfo {author} {\bibfnamefont {L.-M.}\ \bibnamefont {Duan}},\
  }\bibfield  {title} {\bibinfo {title} {A quantum machine learning algorithm
  based on generative models},\ }\href {https://doi.org/10.1126/sciadv.aat9004}
  {\bibfield  {journal} {\bibinfo  {journal} {Sci. Adv.}\ }\textbf {\bibinfo
  {volume} {4}},\ \bibinfo {pages} {eaat9004} (\bibinfo {year}
  {2018})}\BibitemShut {NoStop}%
\bibitem [{\citenamefont {Gao}\ \emph {et~al.}(2022)\citenamefont {Gao},
  \citenamefont {Anschuetz}, \citenamefont {Wang}, \citenamefont {Cirac},\ and\
  \citenamefont {Lukin}}]{Gao2021Enhancing}%
  \BibitemOpen
  \bibfield  {author} {\bibinfo {author} {\bibfnamefont {X.}~\bibnamefont
  {Gao}}, \bibinfo {author} {\bibfnamefont {E.~R.}\ \bibnamefont {Anschuetz}},
  \bibinfo {author} {\bibfnamefont {S.-T.}\ \bibnamefont {Wang}}, \bibinfo
  {author} {\bibfnamefont {J.~I.}\ \bibnamefont {Cirac}},\ and\ \bibinfo
  {author} {\bibfnamefont {M.~D.}\ \bibnamefont {Lukin}},\ }\bibfield  {title}
  {\bibinfo {title} {Enhancing generative models via quantum correlations},\
  }\href {https://journals.aps.org/prx/abstract/10.1103/PhysRevX.12.021037}
  {\bibfield  {journal} {\bibinfo  {journal} {Phys. Rev. X}\ }\textbf {\bibinfo
  {volume} {12}},\ \bibinfo {pages} {021037} (\bibinfo {year}
  {2022})}\BibitemShut {NoStop}%
\bibitem [{\citenamefont {Rudolph}\ \emph {et~al.}(2023)\citenamefont
  {Rudolph}, \citenamefont {Miller}, \citenamefont {Motlagh}, \citenamefont
  {Chen}, \citenamefont {Acharya},\ and\ \citenamefont
  {{Perdomo-Ortiz}}}]{Rudolph2023Synergistica}%
  \BibitemOpen
  \bibfield  {author} {\bibinfo {author} {\bibfnamefont {M.~S.}\ \bibnamefont
  {Rudolph}}, \bibinfo {author} {\bibfnamefont {J.}~\bibnamefont {Miller}},
  \bibinfo {author} {\bibfnamefont {D.}~\bibnamefont {Motlagh}}, \bibinfo
  {author} {\bibfnamefont {J.}~\bibnamefont {Chen}}, \bibinfo {author}
  {\bibfnamefont {A.}~\bibnamefont {Acharya}},\ and\ \bibinfo {author}
  {\bibfnamefont {A.}~\bibnamefont {{Perdomo-Ortiz}}},\ }\bibfield  {title}
  {\bibinfo {title} {Synergistic pretraining of parametrized quantum circuits
  via tensor networks},\ }\href {https://doi.org/10.1038/s41467-023-43908-6}
  {\bibfield  {journal} {\bibinfo  {journal} {Nat. Commun.}\ }\textbf {\bibinfo
  {volume} {14}},\ \bibinfo {pages} {8367} (\bibinfo {year}
  {2023})}\BibitemShut {NoStop}%
\bibitem [{\citenamefont {Abadi}\ \emph {et~al.}(2016)\citenamefont {Abadi},
  \citenamefont {Barham}, \citenamefont {Chen}, \citenamefont {Chen},
  \citenamefont {Davis}, \citenamefont {Dean}, \citenamefont {Devin},
  \citenamefont {Ghemawat}, \citenamefont {Irving}, \citenamefont {Isard} \emph
  {et~al.}}]{Abadi2016TensorFlow}%
  \BibitemOpen
  \bibfield  {author} {\bibinfo {author} {\bibfnamefont {M.}~\bibnamefont
  {Abadi}}, \bibinfo {author} {\bibfnamefont {P.}~\bibnamefont {Barham}},
  \bibinfo {author} {\bibfnamefont {J.}~\bibnamefont {Chen}}, \bibinfo {author}
  {\bibfnamefont {Z.}~\bibnamefont {Chen}}, \bibinfo {author} {\bibfnamefont
  {A.}~\bibnamefont {Davis}}, \bibinfo {author} {\bibfnamefont
  {J.}~\bibnamefont {Dean}}, \bibinfo {author} {\bibfnamefont {M.}~\bibnamefont
  {Devin}}, \bibinfo {author} {\bibfnamefont {S.}~\bibnamefont {Ghemawat}},
  \bibinfo {author} {\bibfnamefont {G.}~\bibnamefont {Irving}}, \bibinfo
  {author} {\bibfnamefont {M.}~\bibnamefont {Isard}}, \emph {et~al.},\
  }\bibfield  {title} {\bibinfo {title} {{{TensorFlow}}: A system for
  large-scale machine learning},\ }in\ \href
  {https://dl.acm.org/doi/10.5555/3026877.3026899} {\emph {\bibinfo {booktitle}
  {Proceedings of the 12th {{USENIX}} Conference on {{Operating Systems
  Design}} and {{Implementation}}}}},\ \bibinfo {series and number}
  {{{OSDI}}'16}\ (\bibinfo  {publisher} {{USENIX Association}},\ \bibinfo
  {address} {{USA}},\ \bibinfo {year} {2016})\ pp.\ \bibinfo {pages}
  {265--283}\BibitemShut {NoStop}%
\bibitem [{\citenamefont {Roberts}\ \emph {et~al.}(2019)\citenamefont
  {Roberts}, \citenamefont {Milsted}, \citenamefont {Ganahl}, \citenamefont
  {Zalcman}, \citenamefont {Fontaine}, \citenamefont {Zou}, \citenamefont
  {Hidary}, \citenamefont {Vidal},\ and\ \citenamefont
  {Leichenauer}}]{Roberts2019tensornetwork}%
  \BibitemOpen
  \bibfield  {author} {\bibinfo {author} {\bibfnamefont {C.}~\bibnamefont
  {Roberts}}, \bibinfo {author} {\bibfnamefont {A.}~\bibnamefont {Milsted}},
  \bibinfo {author} {\bibfnamefont {M.}~\bibnamefont {Ganahl}}, \bibinfo
  {author} {\bibfnamefont {A.}~\bibnamefont {Zalcman}}, \bibinfo {author}
  {\bibfnamefont {B.}~\bibnamefont {Fontaine}}, \bibinfo {author}
  {\bibfnamefont {Y.}~\bibnamefont {Zou}}, \bibinfo {author} {\bibfnamefont
  {J.}~\bibnamefont {Hidary}}, \bibinfo {author} {\bibfnamefont
  {G.}~\bibnamefont {Vidal}},\ and\ \bibinfo {author} {\bibfnamefont
  {S.}~\bibnamefont {Leichenauer}},\ }\bibfield  {title} {\bibinfo {title}
  {{TensorNetwork: A Library for Physics and Machine Learning}},\ }\href
  {https://arxiv.org/abs/1905.01330} {\bibfield  {journal} {\bibinfo  {journal}
  {arXiv:1905.01330}\ } (\bibinfo {year} {2019})}\BibitemShut {NoStop}%
\bibitem [{\citenamefont {Liu}\ \emph {et~al.}(2022)\citenamefont {Liu},
  \citenamefont {Yu}, \citenamefont {Duan},\ and\ \citenamefont
  {Deng}}]{Liu2022Presence}%
  \BibitemOpen
  \bibfield  {author} {\bibinfo {author} {\bibfnamefont {Z.}~\bibnamefont
  {Liu}}, \bibinfo {author} {\bibfnamefont {L.-W.}\ \bibnamefont {Yu}},
  \bibinfo {author} {\bibfnamefont {L.-M.}\ \bibnamefont {Duan}},\ and\
  \bibinfo {author} {\bibfnamefont {D.-L.}\ \bibnamefont {Deng}},\ }\bibfield
  {title} {\bibinfo {title} {Presence and absence of barren plateaus in
  tensor-network based machine learning},\ }\href
  {https://doi.org/10.1103/PhysRevLett.129.270501} {\bibfield  {journal}
  {\bibinfo  {journal} {Phys. Rev. Lett.}\ }\textbf {\bibinfo {volume} {129}},\
  \bibinfo {pages} {270501} (\bibinfo {year} {2022})}\BibitemShut {NoStop}%
\bibitem [{\citenamefont {Liu}\ \emph {et~al.}(2023)\citenamefont {Liu},
  \citenamefont {Ye}, \citenamefont {Yu}, \citenamefont {Duan},\ and\
  \citenamefont {Deng}}]{Liu2023Theory}%
  \BibitemOpen
  \bibfield  {author} {\bibinfo {author} {\bibfnamefont {Z.}~\bibnamefont
  {Liu}}, \bibinfo {author} {\bibfnamefont {Q.}~\bibnamefont {Ye}}, \bibinfo
  {author} {\bibfnamefont {L.-W.}\ \bibnamefont {Yu}}, \bibinfo {author}
  {\bibfnamefont {L.-M.}\ \bibnamefont {Duan}},\ and\ \bibinfo {author}
  {\bibfnamefont {D.-L.}\ \bibnamefont {Deng}},\ }\bibfield  {title} {\bibinfo
  {title} {Theory on variational high-dimensional tensor networks},\ }\Eprint
  {https://arxiv.org/abs/2303.17452} {arXiv:2303.17452}  (\bibinfo {year}
  {2023})\BibitemShut {NoStop}%
\bibitem [{\citenamefont {Garcia}\ \emph {et~al.}(2023)\citenamefont {Garcia},
  \citenamefont {Zhao}, \citenamefont {Bu},\ and\ \citenamefont
  {Jaffe}}]{Garcia2023Barren}%
  \BibitemOpen
  \bibfield  {author} {\bibinfo {author} {\bibfnamefont {R.~J.}\ \bibnamefont
  {Garcia}}, \bibinfo {author} {\bibfnamefont {C.}~\bibnamefont {Zhao}},
  \bibinfo {author} {\bibfnamefont {K.}~\bibnamefont {Bu}},\ and\ \bibinfo
  {author} {\bibfnamefont {A.}~\bibnamefont {Jaffe}},\ }\bibfield  {title}
  {\bibinfo {title} {Barren plateaus from learning scramblers with local cost
  functions},\ }\href {https://doi.org/10.1007/JHEP01(2023)090} {\bibfield
  {journal} {\bibinfo  {journal} {J. High Energ. Phys.}\ }\textbf {\bibinfo
  {volume} {2023}}\bibinfo  {number} { (1)},\ \bibinfo {pages}
  {90}}\BibitemShut {NoStop}%
\bibitem [{\citenamefont {Miao}\ and\ \citenamefont
  {Barthel}(2024)}]{Miao2024Isometric}%
  \BibitemOpen
\bibfield  {number} {  }\bibfield  {author} {\bibinfo {author} {\bibfnamefont
  {Q.}~\bibnamefont {Miao}}\ and\ \bibinfo {author} {\bibfnamefont
  {T.}~\bibnamefont {Barthel}},\ }\bibfield  {title} {\bibinfo {title}
  {Isometric tensor network optimization for extensive {{Hamiltonians}} is free
  of barren plateaus},\ }\href {https://doi.org/10.1103/PhysRevA.109.L050402}
  {\bibfield  {journal} {\bibinfo  {journal} {Phys. Rev. A}\ }\textbf {\bibinfo
  {volume} {109}},\ \bibinfo {pages} {L050402} (\bibinfo {year}
  {2024})}\BibitemShut {NoStop}%
\bibitem [{\citenamefont {Barthel}\ and\ \citenamefont
  {Miao}(2024)}]{Barthel2024Absence}%
  \BibitemOpen
  \bibfield  {author} {\bibinfo {author} {\bibfnamefont {T.}~\bibnamefont
  {Barthel}}\ and\ \bibinfo {author} {\bibfnamefont {Q.}~\bibnamefont {Miao}},\
  }\bibfield  {title} {\bibinfo {title} {Absence of barren plateaus and scaling
  of gradients in the energy optimization of isometric tensor network states},\
  }\Eprint {https://arxiv.org/abs/2304.00161} {arXiv:2304.00161}  (\bibinfo
  {year} {2024})\BibitemShut {NoStop}%
\bibitem [{\citenamefont {Strashko}\ and\ \citenamefont
  {Stoudenmire}(2022)}]{Strashko2022Generalization}%
  \BibitemOpen
  \bibfield  {author} {\bibinfo {author} {\bibfnamefont {A.}~\bibnamefont
  {Strashko}}\ and\ \bibinfo {author} {\bibfnamefont {E.~M.}\ \bibnamefont
  {Stoudenmire}},\ }\bibfield  {title} {\bibinfo {title} {Generalization and
  {{Overfitting}} in {{Matrix Product State Machine Learning Architectures}}},\
  }\Eprint {https://arxiv.org/abs/2208.04372} {arXiv:2208.04372}  (\bibinfo
  {year} {2022})\BibitemShut {NoStop}%
\bibitem [{\citenamefont {Schaffer}(1994)}]{Schaffer1994Conservation}%
  \BibitemOpen
  \bibfield  {author} {\bibinfo {author} {\bibfnamefont {C.}~\bibnamefont
  {Schaffer}},\ }\bibfield  {title} {\bibinfo {title} {A {{Conservation Law}}
  for {{Generalization Performance}}},\ }in\ \href
  {https://doi.org/10.1016/B978-1-55860-335-6.50039-8} {\emph {\bibinfo
  {booktitle} {Machine {{Learning Proceedings}} 1994}}},\ \bibinfo {editor}
  {edited by\ \bibinfo {editor} {\bibfnamefont {W.~W.}\ \bibnamefont {Cohen}}\
  and\ \bibinfo {editor} {\bibfnamefont {H.}~\bibnamefont {Hirsh}}}\ (\bibinfo
  {address} {San Francisco (CA)},\ \bibinfo {year} {1994})\ pp.\ \bibinfo
  {pages} {259--265}\BibitemShut {NoStop}%
\bibitem [{\citenamefont {Wolpert}(1996)}]{Wolpert1996Lack}%
  \BibitemOpen
  \bibfield  {author} {\bibinfo {author} {\bibfnamefont {D.~H.}\ \bibnamefont
  {Wolpert}},\ }\bibfield  {title} {\bibinfo {title} {The {{Lack}} of {{A
  Priori Distinctions Between Learning Algorithms}}},\ }\href
  {https://doi.org/10.1162/neco.1996.8.7.1341} {\bibfield  {journal} {\bibinfo
  {journal} {Neur. Comput.}\ }\textbf {\bibinfo {volume} {8}},\ \bibinfo
  {pages} {1341} (\bibinfo {year} {1996})}\BibitemShut {NoStop}%
\bibitem [{\citenamefont {Wolpert}\ and\ \citenamefont
  {Macready}(1997)}]{Wolpert1997No}%
  \BibitemOpen
  \bibfield  {author} {\bibinfo {author} {\bibfnamefont {D.}~\bibnamefont
  {Wolpert}}\ and\ \bibinfo {author} {\bibfnamefont {W.}~\bibnamefont
  {Macready}},\ }\bibfield  {title} {\bibinfo {title} {No free lunch theorems
  for optimization},\ }\href {https://doi.org/10.1109/4235.585893} {\bibfield
  {journal} {\bibinfo  {journal} {IEEE Trans. Evol. Computat.}\ }\textbf
  {\bibinfo {volume} {1}},\ \bibinfo {pages} {67} (\bibinfo {year}
  {1997})}\BibitemShut {NoStop}%
\bibitem [{\citenamefont {Poland}\ \emph {et~al.}(2020)\citenamefont {Poland},
  \citenamefont {Beer},\ and\ \citenamefont {Osborne}}]{Poland2020No}%
  \BibitemOpen
  \bibfield  {author} {\bibinfo {author} {\bibfnamefont {K.}~\bibnamefont
  {Poland}}, \bibinfo {author} {\bibfnamefont {K.}~\bibnamefont {Beer}},\ and\
  \bibinfo {author} {\bibfnamefont {T.~J.}\ \bibnamefont {Osborne}},\
  }\bibfield  {title} {\bibinfo {title} {No free lunch for quantum machine
  learning},\ }\Eprint {https://arxiv.org/abs/2003.14103} {arXiv:2003.14103}
  (\bibinfo {year} {2020})\BibitemShut {NoStop}%
\bibitem [{\citenamefont {Volkoff}\ \emph {et~al.}(2021)\citenamefont
  {Volkoff}, \citenamefont {Holmes},\ and\ \citenamefont
  {Sornborger}}]{Volkoff2021Universal}%
  \BibitemOpen
  \bibfield  {author} {\bibinfo {author} {\bibfnamefont {T.}~\bibnamefont
  {Volkoff}}, \bibinfo {author} {\bibfnamefont {Z.}~\bibnamefont {Holmes}},\
  and\ \bibinfo {author} {\bibfnamefont {A.}~\bibnamefont {Sornborger}},\
  }\bibfield  {title} {\bibinfo {title} {Universal {{Compiling}} and
  ({{No-}}){{Free-Lunch Theorems}} for {{Continuous-Variable Quantum
  Learning}}},\ }\href {https://doi.org/10.1103/PRXQuantum.2.040327} {\bibfield
   {journal} {\bibinfo  {journal} {PRX Quantum}\ }\textbf {\bibinfo {volume}
  {2}},\ \bibinfo {pages} {040327} (\bibinfo {year} {2021})}\BibitemShut
  {NoStop}%
\bibitem [{\citenamefont {Sharma}\ \emph {et~al.}(2022)\citenamefont {Sharma},
  \citenamefont {Cerezo}, \citenamefont {Holmes}, \citenamefont {Cincio},
  \citenamefont {Sornborger},\ and\ \citenamefont
  {Coles}}]{Sharma2022Reformulation}%
  \BibitemOpen
  \bibfield  {author} {\bibinfo {author} {\bibfnamefont {K.}~\bibnamefont
  {Sharma}}, \bibinfo {author} {\bibfnamefont {M.}~\bibnamefont {Cerezo}},
  \bibinfo {author} {\bibfnamefont {Z.}~\bibnamefont {Holmes}}, \bibinfo
  {author} {\bibfnamefont {L.}~\bibnamefont {Cincio}}, \bibinfo {author}
  {\bibfnamefont {A.}~\bibnamefont {Sornborger}},\ and\ \bibinfo {author}
  {\bibfnamefont {P.~J.}\ \bibnamefont {Coles}},\ }\bibfield  {title} {\bibinfo
  {title} {Reformulation of the no-free-lunch theorem for entangled datasets},\
  }\href {https://doi.org/10.1103/PhysRevLett.128.070501} {\bibfield  {journal}
  {\bibinfo  {journal} {Phys. Rev. Lett.}\ }\textbf {\bibinfo {volume} {128}},\
  \bibinfo {pages} {070501} (\bibinfo {year} {2022})}\BibitemShut {NoStop}%
\bibitem [{\citenamefont {Brierley}(2022)}]{Brierley2022No}%
  \BibitemOpen
  \bibfield  {author} {\bibinfo {author} {\bibfnamefont {R.}~\bibnamefont
  {Brierley}},\ }\bibfield  {title} {\bibinfo {title} {No free lunch for
  {{Schrödinger}}’s cat},\ }\href
  {https://doi.org/10.1038/s41567-022-01593-0} {\bibfield  {journal} {\bibinfo
  {journal} {Nat. Phys.}\ }\textbf {\bibinfo {volume} {18}},\ \bibinfo {pages}
  {373} (\bibinfo {year} {2022})}\BibitemShut {NoStop}%
\bibitem [{\citenamefont {Zhao}\ \emph {et~al.}(2023)\citenamefont {Zhao},
  \citenamefont {Lewis}, \citenamefont {Kannan}, \citenamefont {Quek},
  \citenamefont {Huang},\ and\ \citenamefont {Caro}}]{Zhao2023Learning}%
  \BibitemOpen
  \bibfield  {author} {\bibinfo {author} {\bibfnamefont {H.}~\bibnamefont
  {Zhao}}, \bibinfo {author} {\bibfnamefont {L.}~\bibnamefont {Lewis}},
  \bibinfo {author} {\bibfnamefont {I.}~\bibnamefont {Kannan}}, \bibinfo
  {author} {\bibfnamefont {Y.}~\bibnamefont {Quek}}, \bibinfo {author}
  {\bibfnamefont {H.-Y.}\ \bibnamefont {Huang}},\ and\ \bibinfo {author}
  {\bibfnamefont {M.~C.}\ \bibnamefont {Caro}},\ }\bibfield  {title} {\bibinfo
  {title} {Learning quantum states and unitaries of bounded gate complexity},\
  }\Eprint {https://arxiv.org/abs/2310.19882} {arXiv:2310.19882}  (\bibinfo
  {year} {2023})\BibitemShut {NoStop}%
\bibitem [{\citenamefont {Wang}\ \emph
  {et~al.}(2024{\natexlab{a}})\citenamefont {Wang}, \citenamefont {Du},
  \citenamefont {Tu}, \citenamefont {Luo}, \citenamefont {Yuan},\ and\
  \citenamefont {Tao}}]{Wang2024Transition}%
  \BibitemOpen
  \bibfield  {author} {\bibinfo {author} {\bibfnamefont {X.}~\bibnamefont
  {Wang}}, \bibinfo {author} {\bibfnamefont {Y.}~\bibnamefont {Du}}, \bibinfo
  {author} {\bibfnamefont {Z.}~\bibnamefont {Tu}}, \bibinfo {author}
  {\bibfnamefont {Y.}~\bibnamefont {Luo}}, \bibinfo {author} {\bibfnamefont
  {X.}~\bibnamefont {Yuan}},\ and\ \bibinfo {author} {\bibfnamefont
  {D.}~\bibnamefont {Tao}},\ }\bibfield  {title} {\bibinfo {title} {Transition
  role of entangled data in quantum machine learning},\ }\href
  {https://doi.org/10.1038/s41467-024-47983-1} {\bibfield  {journal} {\bibinfo
  {journal} {Nat. Commun.}\ }\textbf {\bibinfo {volume} {15}},\ \bibinfo
  {pages} {3716} (\bibinfo {year} {2024}{\natexlab{a}})}\BibitemShut {NoStop}%
\bibitem [{\citenamefont {Wang}\ \emph
  {et~al.}(2024{\natexlab{b}})\citenamefont {Wang}, \citenamefont {Du},
  \citenamefont {Liu}, \citenamefont {Luo}, \citenamefont {Du},\ and\
  \citenamefont {Tao}}]{Wang2024Separable}%
  \BibitemOpen
  \bibfield  {author} {\bibinfo {author} {\bibfnamefont {X.}~\bibnamefont
  {Wang}}, \bibinfo {author} {\bibfnamefont {Y.}~\bibnamefont {Du}}, \bibinfo
  {author} {\bibfnamefont {K.}~\bibnamefont {Liu}}, \bibinfo {author}
  {\bibfnamefont {Y.}~\bibnamefont {Luo}}, \bibinfo {author} {\bibfnamefont
  {B.}~\bibnamefont {Du}},\ and\ \bibinfo {author} {\bibfnamefont
  {D.}~\bibnamefont {Tao}},\ }\bibfield  {title} {\bibinfo {title} {Separable
  {{Power}} of {{Classical}} and {{Quantum Learning Protocols Through}} the
  {{Lens}} of {{No-Free-Lunch Theorem}}},\ }\Eprint
  {https://arxiv.org/abs/2405.07226} {arXiv:2405.07226}  (\bibinfo {year}
  {2024}{\natexlab{b}})\BibitemShut {NoStop}%
\bibitem [{\citenamefont {Perez-Garcia}\ \emph {et~al.}(2007)\citenamefont
  {Perez-Garcia}, \citenamefont {Verstraete}, \citenamefont {Wolf},\ and\
  \citenamefont {Cirac}}]{Perez2007Matrix}%
  \BibitemOpen
  \bibfield  {author} {\bibinfo {author} {\bibfnamefont {D.}~\bibnamefont
  {Perez-Garcia}}, \bibinfo {author} {\bibfnamefont {F.}~\bibnamefont
  {Verstraete}}, \bibinfo {author} {\bibfnamefont {M.~M.}\ \bibnamefont
  {Wolf}},\ and\ \bibinfo {author} {\bibfnamefont {J.~I.}\ \bibnamefont
  {Cirac}},\ }\bibfield  {title} {\bibinfo {title} {Matrix product state
  representations},\ }\href {https://dl.acm.org/doi/10.5555/2011832.2011833}
  {\bibfield  {journal} {\bibinfo  {journal} {Quantum Inf. Comput.}\ }\textbf
  {\bibinfo {volume} {7}},\ \bibinfo {pages} {401} (\bibinfo {year}
  {2007})}\BibitemShut {NoStop}%
\bibitem [{\citenamefont {Gross}\ and\ \citenamefont
  {Eisert}(2010)}]{Gross2010Quantum}%
  \BibitemOpen
  \bibfield  {author} {\bibinfo {author} {\bibfnamefont {D.}~\bibnamefont
  {Gross}}\ and\ \bibinfo {author} {\bibfnamefont {J.}~\bibnamefont {Eisert}},\
  }\bibfield  {title} {\bibinfo {title} {Quantum computational webs},\ }\href
  {https://doi.org/10.1103/PhysRevA.82.040303} {\bibfield  {journal} {\bibinfo
  {journal} {Phys. Rev. A}\ }\textbf {\bibinfo {volume} {82}},\ \bibinfo
  {pages} {040303(R)} (\bibinfo {year} {2010})}\BibitemShut {NoStop}%
\bibitem [{\citenamefont {Garnerone}\ \emph
  {et~al.}(2010{\natexlab{a}})\citenamefont {Garnerone}, \citenamefont
  {de~Oliveira}, \citenamefont {Haas},\ and\ \citenamefont
  {Zanardi}}]{Silvano2010Statistical}%
  \BibitemOpen
  \bibfield  {author} {\bibinfo {author} {\bibfnamefont {S.}~\bibnamefont
  {Garnerone}}, \bibinfo {author} {\bibfnamefont {T.~R.}\ \bibnamefont
  {de~Oliveira}}, \bibinfo {author} {\bibfnamefont {S.}~\bibnamefont {Haas}},\
  and\ \bibinfo {author} {\bibfnamefont {P.}~\bibnamefont {Zanardi}},\
  }\bibfield  {title} {\bibinfo {title} {Statistical properties of random
  matrix product states},\ }\href {https://doi.org/10.1103/PhysRevA.82.052312}
  {\bibfield  {journal} {\bibinfo  {journal} {Phys. Rev. A}\ }\textbf {\bibinfo
  {volume} {82}},\ \bibinfo {pages} {052312} (\bibinfo {year}
  {2010}{\natexlab{a}})}\BibitemShut {NoStop}%
\bibitem [{\citenamefont {Garnerone}\ \emph
  {et~al.}(2010{\natexlab{b}})\citenamefont {Garnerone}, \citenamefont
  {de~Oliveira},\ and\ \citenamefont {Zanardi}}]{Silvano2010Typicality}%
  \BibitemOpen
  \bibfield  {author} {\bibinfo {author} {\bibfnamefont {S.}~\bibnamefont
  {Garnerone}}, \bibinfo {author} {\bibfnamefont {T.~R.}\ \bibnamefont
  {de~Oliveira}},\ and\ \bibinfo {author} {\bibfnamefont {P.}~\bibnamefont
  {Zanardi}},\ }\bibfield  {title} {\bibinfo {title} {Typicality in random
  matrix product states},\ }\href {https://doi.org/10.1103/PhysRevA.81.032336}
  {\bibfield  {journal} {\bibinfo  {journal} {Phys. Rev. A}\ }\textbf {\bibinfo
  {volume} {81}},\ \bibinfo {pages} {032336} (\bibinfo {year}
  {2010}{\natexlab{b}})}\BibitemShut {NoStop}%
\bibitem [{\citenamefont {Collins}\ \emph {et~al.}(2012)\citenamefont
  {Collins}, \citenamefont {Gonz{\'a}lez-Guill{\'e}n},\ and\ \citenamefont
  {P{\'e}rez-Garc{\'\i}a}}]{collins2012matrix}%
  \BibitemOpen
  \bibfield  {author} {\bibinfo {author} {\bibfnamefont {B.}~\bibnamefont
  {Collins}}, \bibinfo {author} {\bibfnamefont {C.~E.}\ \bibnamefont
  {Gonz{\'a}lez-Guill{\'e}n}},\ and\ \bibinfo {author} {\bibfnamefont
  {D.}~\bibnamefont {P{\'e}rez-Garc{\'\i}a}},\ }\bibfield  {title} {\bibinfo
  {title} {Matrix product states, random matrix theory and the principle of
  maximum entropy},\ }\Eprint {https://arxiv.org/abs/1201.6324}
  {arXiv:1201.6324 [quant-ph]}  (\bibinfo {year} {2012})\BibitemShut {NoStop}%
\bibitem [{\citenamefont {Haferkamp}\ \emph {et~al.}(2021)\citenamefont
  {Haferkamp}, \citenamefont {Bertoni}, \citenamefont {Roth},\ and\
  \citenamefont {Eisert}}]{Haferkamp2021Emergent}%
  \BibitemOpen
  \bibfield  {author} {\bibinfo {author} {\bibfnamefont {J.}~\bibnamefont
  {Haferkamp}}, \bibinfo {author} {\bibfnamefont {C.}~\bibnamefont {Bertoni}},
  \bibinfo {author} {\bibfnamefont {I.}~\bibnamefont {Roth}},\ and\ \bibinfo
  {author} {\bibfnamefont {J.}~\bibnamefont {Eisert}},\ }\bibfield  {title}
  {\bibinfo {title} {Emergent statistical mechanics from properties of
  disordered random matrix product states},\ }\href
  {https://doi.org/10.1103/PRXQuantum.2.040308} {\bibfield  {journal} {\bibinfo
   {journal} {PRX Quantum}\ }\textbf {\bibinfo {volume} {2}},\ \bibinfo {pages}
  {040308} (\bibinfo {year} {2021})}\BibitemShut {NoStop}%
\bibitem [{\citenamefont {Caro}\ \emph {et~al.}(2022)\citenamefont {Caro},
  \citenamefont {Huang}, \citenamefont {Cerezo}, \citenamefont {Sharma},
  \citenamefont {Sornborger}, \citenamefont {Cincio},\ and\ \citenamefont
  {Coles}}]{Caro2022Generalization}%
  \BibitemOpen
  \bibfield  {author} {\bibinfo {author} {\bibfnamefont {M.~C.}\ \bibnamefont
  {Caro}}, \bibinfo {author} {\bibfnamefont {H.-Y.}\ \bibnamefont {Huang}},
  \bibinfo {author} {\bibfnamefont {M.}~\bibnamefont {Cerezo}}, \bibinfo
  {author} {\bibfnamefont {K.}~\bibnamefont {Sharma}}, \bibinfo {author}
  {\bibfnamefont {A.}~\bibnamefont {Sornborger}}, \bibinfo {author}
  {\bibfnamefont {L.}~\bibnamefont {Cincio}},\ and\ \bibinfo {author}
  {\bibfnamefont {P.~J.}\ \bibnamefont {Coles}},\ }\bibfield  {title} {\bibinfo
  {title} {Generalization in quantum machine learning from few training data},\
  }\href {https://doi.org/10.1038/s41467-022-32550-3} {\bibfield  {journal}
  {\bibinfo  {journal} {Nat. Commun.}\ }\textbf {\bibinfo {volume} {13}},\
  \bibinfo {pages} {4919} (\bibinfo {year} {2022})}\BibitemShut {NoStop}%
\bibitem [{NFL()}]{NFLTN_Supplementary}%
  \BibitemOpen
  \href@noop {} {}\bibinfo {note} {See Supplemental Material at [URL will be
  inserted by publisher] for details on the notations and techniques, the
  proofs of both Theorem 1 and 2, and the numerical simulations.}\BibitemShut
  {Stop}%
\bibitem [{\citenamefont {Verstraete}\ \emph {et~al.}(2006)\citenamefont
  {Verstraete}, \citenamefont {Wolf}, \citenamefont {Perez-Garcia},\ and\
  \citenamefont {Cirac}}]{Verstraete2006Criticality}%
  \BibitemOpen
  \bibfield  {author} {\bibinfo {author} {\bibfnamefont {F.}~\bibnamefont
  {Verstraete}}, \bibinfo {author} {\bibfnamefont {M.~M.}\ \bibnamefont
  {Wolf}}, \bibinfo {author} {\bibfnamefont {D.}~\bibnamefont {Perez-Garcia}},\
  and\ \bibinfo {author} {\bibfnamefont {J.~I.}\ \bibnamefont {Cirac}},\
  }\bibfield  {title} {\bibinfo {title} {Criticality, the area law, and the
  computational power of projected entangled pair states},\ }\href
  {https://doi.org/10.1103/PhysRevLett.96.220601} {\bibfield  {journal}
  {\bibinfo  {journal} {Phys. Rev. Lett.}\ }\textbf {\bibinfo {volume} {96}},\
  \bibinfo {pages} {220601} (\bibinfo {year} {2006})}\BibitemShut {NoStop}%
\bibitem [{\citenamefont {Jordan}\ \emph {et~al.}(2008)\citenamefont {Jordan},
  \citenamefont {Or{\'u}s}, \citenamefont {Vidal}, \citenamefont {Verstraete},\
  and\ \citenamefont {Cirac}}]{Jordan2008Classical}%
  \BibitemOpen
  \bibfield  {author} {\bibinfo {author} {\bibfnamefont {J.}~\bibnamefont
  {Jordan}}, \bibinfo {author} {\bibfnamefont {R.}~\bibnamefont {Or{\'u}s}},
  \bibinfo {author} {\bibfnamefont {G.}~\bibnamefont {Vidal}}, \bibinfo
  {author} {\bibfnamefont {F.}~\bibnamefont {Verstraete}},\ and\ \bibinfo
  {author} {\bibfnamefont {J.~I.}\ \bibnamefont {Cirac}},\ }\bibfield  {title}
  {\bibinfo {title} {Classical {{Simulation}} of {{Infinite-Size Quantum
  Lattice Systems}} in {{Two Spatial Dimensions}}},\ }\href
  {https://doi.org/10.1103/PhysRevLett.101.250602} {\bibfield  {journal}
  {\bibinfo  {journal} {Phys. Rev. Lett.}\ }\textbf {\bibinfo {volume} {101}},\
  \bibinfo {pages} {250602} (\bibinfo {year} {2008})}\BibitemShut {NoStop}%
\bibitem [{\citenamefont {Gu}\ \emph {et~al.}(2009)\citenamefont {Gu},
  \citenamefont {Levin}, \citenamefont {Swingle},\ and\ \citenamefont
  {Wen}}]{Gu2009Tensor}%
  \BibitemOpen
  \bibfield  {author} {\bibinfo {author} {\bibfnamefont {Z.-C.}\ \bibnamefont
  {Gu}}, \bibinfo {author} {\bibfnamefont {M.}~\bibnamefont {Levin}}, \bibinfo
  {author} {\bibfnamefont {B.}~\bibnamefont {Swingle}},\ and\ \bibinfo {author}
  {\bibfnamefont {X.-G.}\ \bibnamefont {Wen}},\ }\bibfield  {title} {\bibinfo
  {title} {Tensor-product representations for string-net condensed states},\
  }\href {https://doi.org/10.1103/PhysRevB.79.085118} {\bibfield  {journal}
  {\bibinfo  {journal} {Phys. Rev. B}\ }\textbf {\bibinfo {volume} {79}},\
  \bibinfo {pages} {085118} (\bibinfo {year} {2009})}\BibitemShut {NoStop}%
\bibitem [{\citenamefont {Buerschaper}\ \emph {et~al.}(2009)\citenamefont
  {Buerschaper}, \citenamefont {Aguado},\ and\ \citenamefont
  {Vidal}}]{Buerschaper2009Explicit}%
  \BibitemOpen
  \bibfield  {author} {\bibinfo {author} {\bibfnamefont {O.}~\bibnamefont
  {Buerschaper}}, \bibinfo {author} {\bibfnamefont {M.}~\bibnamefont
  {Aguado}},\ and\ \bibinfo {author} {\bibfnamefont {G.}~\bibnamefont
  {Vidal}},\ }\bibfield  {title} {\bibinfo {title} {Explicit tensor network
  representation for the ground states of string-net models},\ }\href
  {https://doi.org/10.1103/PhysRevB.79.085119} {\bibfield  {journal} {\bibinfo
  {journal} {Phys. Rev. B}\ }\textbf {\bibinfo {volume} {79}},\ \bibinfo
  {pages} {085119} (\bibinfo {year} {2009})}\BibitemShut {NoStop}%
\bibitem [{\citenamefont {Piroli}\ and\ \citenamefont
  {Cirac}(2020)}]{Piroli2020Quantum}%
  \BibitemOpen
  \bibfield  {author} {\bibinfo {author} {\bibfnamefont {L.}~\bibnamefont
  {Piroli}}\ and\ \bibinfo {author} {\bibfnamefont {J.~I.}\ \bibnamefont
  {Cirac}},\ }\bibfield  {title} {\bibinfo {title} {Quantum {{Cellular
  Automata}}, {{Tensor Networks}}, and {{Area Laws}}},\ }\href
  {https://doi.org/10.1103/PhysRevLett.125.190402} {\bibfield  {journal}
  {\bibinfo  {journal} {Phys. Rev. Lett.}\ }\textbf {\bibinfo {volume} {125}},\
  \bibinfo {pages} {190402} (\bibinfo {year} {2020})}\BibitemShut {NoStop}%
\bibitem [{\citenamefont {Haghshenas}\ \emph {et~al.}(2022)\citenamefont
  {Haghshenas}, \citenamefont {Gray}, \citenamefont {Potter},\ and\
  \citenamefont {Chan}}]{Haghshenas2022Variational}%
  \BibitemOpen
  \bibfield  {author} {\bibinfo {author} {\bibfnamefont {R.}~\bibnamefont
  {Haghshenas}}, \bibinfo {author} {\bibfnamefont {J.}~\bibnamefont {Gray}},
  \bibinfo {author} {\bibfnamefont {A.~C.}\ \bibnamefont {Potter}},\ and\
  \bibinfo {author} {\bibfnamefont {G.~K.-L.}\ \bibnamefont {Chan}},\
  }\bibfield  {title} {\bibinfo {title} {Variational {{Power}} of {{Quantum
  Circuit Tensor Networks}}},\ }\href
  {https://doi.org/10.1103/PhysRevX.12.011047} {\bibfield  {journal} {\bibinfo
  {journal} {Phys. Rev. X}\ }\textbf {\bibinfo {volume} {12}},\ \bibinfo
  {pages} {011047} (\bibinfo {year} {2022})}\BibitemShut {NoStop}%
\bibitem [{\citenamefont {Pan}\ and\ \citenamefont
  {Zhang}(2022)}]{Pan2022Simulation}%
  \BibitemOpen
  \bibfield  {author} {\bibinfo {author} {\bibfnamefont {F.}~\bibnamefont
  {Pan}}\ and\ \bibinfo {author} {\bibfnamefont {P.}~\bibnamefont {Zhang}},\
  }\bibfield  {title} {\bibinfo {title} {Simulation of {{Quantum Circuits
  Using}} the {{Big-Batch Tensor Network Method}}},\ }\href
  {https://doi.org/10.1103/PhysRevLett.128.030501} {\bibfield  {journal}
  {\bibinfo  {journal} {Phys. Rev. Lett.}\ }\textbf {\bibinfo {volume} {128}},\
  \bibinfo {pages} {030501} (\bibinfo {year} {2022})}\BibitemShut {NoStop}%
\bibitem [{\citenamefont {Pan}\ \emph {et~al.}(2022)\citenamefont {Pan},
  \citenamefont {Chen},\ and\ \citenamefont {Zhang}}]{Pan2022Solving}%
  \BibitemOpen
  \bibfield  {author} {\bibinfo {author} {\bibfnamefont {F.}~\bibnamefont
  {Pan}}, \bibinfo {author} {\bibfnamefont {K.}~\bibnamefont {Chen}},\ and\
  \bibinfo {author} {\bibfnamefont {P.}~\bibnamefont {Zhang}},\ }\bibfield
  {title} {\bibinfo {title} {Solving the {{Sampling Problem}} of the {{Sycamore
  Quantum Circuits}}},\ }\href {https://doi.org/10.1103/PhysRevLett.129.090502}
  {\bibfield  {journal} {\bibinfo  {journal} {Phys. Rev. Lett.}\ }\textbf
  {\bibinfo {volume} {129}},\ \bibinfo {pages} {090502} (\bibinfo {year}
  {2022})}\BibitemShut {NoStop}%
\bibitem [{\citenamefont {Wang}\ \emph
  {et~al.}(2024{\natexlab{c}})\citenamefont {Wang}, \citenamefont {Zhang},
  \citenamefont {Pan},\ and\ \citenamefont {Zhang}}]{Wang2024Tensor}%
  \BibitemOpen
  \bibfield  {author} {\bibinfo {author} {\bibfnamefont {Y.}~\bibnamefont
  {Wang}}, \bibinfo {author} {\bibfnamefont {Y.~E.}\ \bibnamefont {Zhang}},
  \bibinfo {author} {\bibfnamefont {F.}~\bibnamefont {Pan}},\ and\ \bibinfo
  {author} {\bibfnamefont {P.}~\bibnamefont {Zhang}},\ }\bibfield  {title}
  {\bibinfo {title} {Tensor {{Network Message Passing}}},\ }\href
  {https://doi.org/10.1103/PhysRevLett.132.117401} {\bibfield  {journal}
  {\bibinfo  {journal} {Phys. Rev. Lett.}\ }\textbf {\bibinfo {volume} {132}},\
  \bibinfo {pages} {117401} (\bibinfo {year} {2024}{\natexlab{c}})}\BibitemShut
  {NoStop}%
\bibitem [{\citenamefont {Eisert}\ \emph {et~al.}(2010)\citenamefont {Eisert},
  \citenamefont {Cramer},\ and\ \citenamefont {Plenio}}]{Eisert2010Colloquium}%
  \BibitemOpen
  \bibfield  {author} {\bibinfo {author} {\bibfnamefont {J.}~\bibnamefont
  {Eisert}}, \bibinfo {author} {\bibfnamefont {M.}~\bibnamefont {Cramer}},\
  and\ \bibinfo {author} {\bibfnamefont {M.~B.}\ \bibnamefont {Plenio}},\
  }\bibfield  {title} {\bibinfo {title} {Colloquium: {{Area}} laws for the
  entanglement entropy},\ }\href {https://doi.org/10.1103/RevModPhys.82.277}
  {\bibfield  {journal} {\bibinfo  {journal} {Rev. Mod. Phys.}\ }\textbf
  {\bibinfo {volume} {82}},\ \bibinfo {pages} {277} (\bibinfo {year}
  {2010})}\BibitemShut {NoStop}%
\bibitem [{\citenamefont {Schuch}\ \emph {et~al.}(2007)\citenamefont {Schuch},
  \citenamefont {Wolf}, \citenamefont {Verstraete},\ and\ \citenamefont
  {Cirac}}]{Schuch2007Computational}%
  \BibitemOpen
  \bibfield  {author} {\bibinfo {author} {\bibfnamefont {N.}~\bibnamefont
  {Schuch}}, \bibinfo {author} {\bibfnamefont {M.~M.}\ \bibnamefont {Wolf}},
  \bibinfo {author} {\bibfnamefont {F.}~\bibnamefont {Verstraete}},\ and\
  \bibinfo {author} {\bibfnamefont {J.~I.}\ \bibnamefont {Cirac}},\ }\bibfield
  {title} {\bibinfo {title} {Computational {{Complexity}} of {{Projected
  Entangled Pair States}}},\ }\href
  {https://doi.org/10.1103/PhysRevLett.98.140506} {\bibfield  {journal}
  {\bibinfo  {journal} {Phys. Rev. Lett.}\ }\textbf {\bibinfo {volume} {98}},\
  \bibinfo {pages} {140506} (\bibinfo {year} {2007})}\BibitemShut {NoStop}%
\bibitem [{\citenamefont {{Bousquet-M{\'e}lou}}(1998)}]{Bousquet-Melou1998New}%
  \BibitemOpen
  \bibfield  {author} {\bibinfo {author} {\bibfnamefont {M.}~\bibnamefont
  {{Bousquet-M{\'e}lou}}},\ }\bibfield  {title} {\bibinfo {title} {New
  enumerative results on two-dimensional directed animals},\ }\href
  {https://doi.org/10.1016/S0012-365X(97)00109-X} {\bibfield  {journal}
  {\bibinfo  {journal} {Discr. Math.}\ }\bibinfo {series} {Proceedings of the
  7th {{Conference}} on {{Formal Power Series}} and {{Algebraic
  Combinatorics}}},\ \textbf {\bibinfo {volume} {180}},\ \bibinfo {pages} {73}
  (\bibinfo {year} {1998})}\BibitemShut {NoStop}%
\bibitem [{\citenamefont {Fishman}\ \emph
  {et~al.}(2022{\natexlab{a}})\citenamefont {Fishman}, \citenamefont {White},\
  and\ \citenamefont {Stoudenmire}}]{ITensor}%
  \BibitemOpen
  \bibfield  {author} {\bibinfo {author} {\bibfnamefont {M.}~\bibnamefont
  {Fishman}}, \bibinfo {author} {\bibfnamefont {S.~R.}\ \bibnamefont {White}},\
  and\ \bibinfo {author} {\bibfnamefont {E.~M.}\ \bibnamefont {Stoudenmire}},\
  }\bibfield  {title} {\bibinfo {title} {{The ITensor Software Library for
  Tensor Network Calculations}},\ }\href
  {https://doi.org/10.21468/SciPostPhysCodeb.4} {\bibfield  {journal} {\bibinfo
   {journal} {SciPost Phys. Codebases}\ ,\ \bibinfo {pages} {4}} (\bibinfo
  {year} {2022}{\natexlab{a}})}\BibitemShut {NoStop}%
\bibitem [{\citenamefont {Fishman}\ \emph
  {et~al.}(2022{\natexlab{b}})\citenamefont {Fishman}, \citenamefont {White},\
  and\ \citenamefont {Stoudenmire}}]{ITensor-r0.3}%
  \BibitemOpen
  \bibfield  {author} {\bibinfo {author} {\bibfnamefont {M.}~\bibnamefont
  {Fishman}}, \bibinfo {author} {\bibfnamefont {S.~R.}\ \bibnamefont {White}},\
  and\ \bibinfo {author} {\bibfnamefont {E.~M.}\ \bibnamefont {Stoudenmire}},\
  }\bibfield  {title} {\bibinfo {title} {{Codebase release 0.3 for ITensor}},\
  }\href {https://doi.org/10.21468/SciPostPhysCodeb.4-r0.3} {\bibfield
  {journal} {\bibinfo  {journal} {SciPost Phys. Codebases}\ ,\ \bibinfo {pages}
  {4}} (\bibinfo {year} {2022}{\natexlab{b}})}\BibitemShut {NoStop}%
\bibitem [{\citenamefont {Beer}\ \emph {et~al.}(2020)\citenamefont {Beer},
  \citenamefont {Bondarenko}, \citenamefont {Farrelly}, \citenamefont
  {Osborne}, \citenamefont {Salzmann}, \citenamefont {Scheiermann},\ and\
  \citenamefont {Wolf}}]{Beer2020Training}%
  \BibitemOpen
  \bibfield  {author} {\bibinfo {author} {\bibfnamefont {K.}~\bibnamefont
  {Beer}}, \bibinfo {author} {\bibfnamefont {D.}~\bibnamefont {Bondarenko}},
  \bibinfo {author} {\bibfnamefont {T.}~\bibnamefont {Farrelly}}, \bibinfo
  {author} {\bibfnamefont {T.~J.}\ \bibnamefont {Osborne}}, \bibinfo {author}
  {\bibfnamefont {R.}~\bibnamefont {Salzmann}}, \bibinfo {author}
  {\bibfnamefont {D.}~\bibnamefont {Scheiermann}},\ and\ \bibinfo {author}
  {\bibfnamefont {R.}~\bibnamefont {Wolf}},\ }\bibfield  {title} {\bibinfo
  {title} {Training deep quantum neural networks},\ }\href
  {https://doi.org/10.1038/s41467-020-14454-2} {\bibfield  {journal} {\bibinfo
  {journal} {Nat. Commun.}\ }\textbf {\bibinfo {volume} {11}},\ \bibinfo
  {pages} {808} (\bibinfo {year} {2020})}\BibitemShut {NoStop}%
\bibitem [{\citenamefont {Pan}\ \emph {et~al.}(2023)\citenamefont {Pan},
  \citenamefont {Lu}, \citenamefont {Wang}, \citenamefont {Hua}, \citenamefont
  {Xu}, \citenamefont {Li}, \citenamefont {Cai}, \citenamefont {Li},
  \citenamefont {Wang}, \citenamefont {Song}, \citenamefont {Zou},
  \citenamefont {Deng},\ and\ \citenamefont {Sun}}]{Pan2023Deep}%
  \BibitemOpen
  \bibfield  {author} {\bibinfo {author} {\bibfnamefont {X.}~\bibnamefont
  {Pan}}, \bibinfo {author} {\bibfnamefont {Z.}~\bibnamefont {Lu}}, \bibinfo
  {author} {\bibfnamefont {W.}~\bibnamefont {Wang}}, \bibinfo {author}
  {\bibfnamefont {Z.}~\bibnamefont {Hua}}, \bibinfo {author} {\bibfnamefont
  {Y.}~\bibnamefont {Xu}}, \bibinfo {author} {\bibfnamefont {W.}~\bibnamefont
  {Li}}, \bibinfo {author} {\bibfnamefont {W.}~\bibnamefont {Cai}}, \bibinfo
  {author} {\bibfnamefont {X.}~\bibnamefont {Li}}, \bibinfo {author}
  {\bibfnamefont {H.}~\bibnamefont {Wang}}, \bibinfo {author} {\bibfnamefont
  {Y.-P.}\ \bibnamefont {Song}}, \bibinfo {author} {\bibfnamefont {C.-L.}\
  \bibnamefont {Zou}}, \bibinfo {author} {\bibfnamefont {D.-L.}\ \bibnamefont
  {Deng}},\ and\ \bibinfo {author} {\bibfnamefont {L.}~\bibnamefont {Sun}},\
  }\bibfield  {title} {\bibinfo {title} {Deep quantum neural networks on a
  superconducting processor},\ }\href
  {https://doi.org/10.1038/s41467-023-39785-8} {\bibfield  {journal} {\bibinfo
  {journal} {Nat. Commun.}\ }\textbf {\bibinfo {volume} {14}},\ \bibinfo
  {pages} {4006} (\bibinfo {year} {2023})}\BibitemShut {NoStop}%
\end{thebibliography}%

\clearpage
\onecolumngrid
\makeatletter
\setcounter{page}{1}
\setcounter{figure}{0}
\setcounter{equation}{0}
\setcounter{theorem}{0}

\setcounter{secnumdepth}{3}

\renewcommand{\thefigure}{S\@arabic\c@figure}
\renewcommand \theequation{S\@arabic\c@equation}
\renewcommand \thetable{S\@arabic\c@table}

\newpage

\begin{center} 
	{\large \bf Supplemental Materials: 
	No-Free-Lunch Theories for Tensor-Network Machine Learning Models}
\end{center}

\renewcommand\thesection{S\Roman{section}}
{
  \hypersetup{linkcolor=blue}
  \tableofcontents
}

\section{No-free-lunch theory for 1D tensor-network-based machine learning models}

In this section, we rigorously prove the no-free-lunch theory of 1D tensor-network based machine learning models, i.e., the matrix product state (MPS). Concretely, we consider two representative cases of supervised MPS machine learning tasks, the first case is to learn an arbitrary target unitary-embedded matrix product operator (MPO) based on the input of quantum states, and the second  case is to learn a target unitary matrix  based on the input of MPSs. We note that the second case constitutes the main part of our Theorem 1 in the main text.

For the convenience of analytical calculation, we suppose that all of the MPOs or MPSs appeared in this work are unitary embedded. 

\subsection{ Learning target matrix product operator based on the input of quantum states}
We first study the case of learning an arbitrary target unitary-embedded MPO $M$ based on the input of quantum states, where the data is encoded through unitary quantum circuits, and the variational structure is MPO. Define the training data set $\mathcal{S}$,  
\begin{equation}
\mathcal{S}=\left\{\left(|\psi_j\rangle,|\phi_j\rangle\right):|\psi_j\rangle\in \mathcal{H}_{x},|\phi_j\rangle\in \mathcal{H}_{y}\right\}_{j=1}^t,
\end{equation}
with set size $|\mathcal{S}|=t$ and $|\phi_j\rangle = M |\psi_j\rangle/\langle\psi_j|M^\dag M|\psi_j\rangle$, where $M$ denotes the variational MPO, with bond dimension.  It is worthwhile noting that $M^\dag M$ does not necessarily equal to the identity matrix. 

If the model is perfectly trained, then the trained operator $P_{\mathcal{S}}$ should result in
\begin{equation}
|\tilde{\phi_j}\rangle:=\frac{P_{\mathcal{S}}|\psi_j\rangle}{\langle\psi_j|P_{\mathcal{S}}^\dag P_{\mathcal{S}}|\psi_j\rangle} = e^{i\theta_j}\frac{M |\psi_j\rangle}{\langle\psi_j|M^\dag M|\psi_j\rangle}.
\end{equation}
One can then define the risk function for such machine learning model, as 
\begin{equation}
R_M(P_{\mathcal{S}}) = \int dx\left\| \frac{ M|x\rangle\langle x|M^\dag}{\langle x|M^\dag M|x\rangle}-\frac{P_{\mathcal{S}}|x\rangle\langle x|P_{\mathcal{S}}^\dag}{\langle x|P_{\mathcal{S}}^\dag P_{\mathcal{S}}| x\rangle}\right\|_1^2,
\end{equation} 
where  $\|A\|_1 = \frac{1}{2}\tr[\sqrt{A^\dag A}]$ denotes the trace norm of $A$. We are interested in the average risk over all training sets ${\mathcal{S}}$ and all (possible output) MPOs $M$,
\begin{equation}
\mathbb{E}_{M}\left[\mathbb{E}_{\mathcal{S}}\left[R_{M}(P_{\mathcal{S}})\right]\right].
\end{equation}
We introduce one lemma for the norm-concentration of MPS based on the Chebyschev's inequality:
\begin{equation}\label{MPS_Concentration}
\Pr_{M}\left(\left|\langle x|M^\dag M|x\rangle-1\right|\geq\epsilon\right)\leq \epsilon^2 \mathcal{O}(d^{-n}),
\end{equation}
where $d$ denotes the physical dimension of the MPO, and $n$ denote the number of qudits.

For the convenience of analytical calculations, we thus simplify the risk function $R_M(P_{\mathcal{S}})$ to be 
\begin{equation}
\begin{aligned}
R_M(P_{\mathcal{S}}) \rightarrow & \int dx\left\|  M|x\rangle\langle x|M^\dag-P_{\mathcal{S}}|x\rangle\langle x|P_{\mathcal{S}}^\dag\right\|_1^2.
\end{aligned}
\end{equation}
Suppose that the input states $|x\rangle$ form the approximate unitary-2 design, i.e., approaching to the 2-moment integral of unitary group under the Haar measure. Then the risk function has the following formula
\begin{equation}
\begin{aligned}
R_M(P_{\mathcal{S}}) \rightarrow & \int dx\left\|  M|x\rangle\langle x|M^\dag-P_{\mathcal{S}}|x\rangle\langle x|P_{\mathcal{S}}^\dag\right\|_1^2\\
\approx &  1-\int_{\rm Haar} dx \left|\langle x|M^\dag P_{\mathcal{S}}|x\rangle\right|^2\\
=&  1-\frac{N+\left|\tr(M^\dag P_{\mathcal{S}})\right|^2}{N(N+1)}.
\end{aligned}
\end{equation}
Here the term $M^\dag P_{\mathcal{S}}$ is represented by the unitary-embedded MPO.

For perfect learning strategy, where all of the  training samples in $\mathcal{S}$ are  exactly learned, but up to an overall phase, one has
\begin{equation}
\langle\psi_j|M^\dag P_{\mathcal{S}}|\psi_j\rangle=e^{i\theta_j}.
\end{equation}
We consider three cases of training states \cite{Sharma2022Reformulation}:
\begin{itemize}
\item[(a)] States in $\mathcal{S}$ are orthonormal. $W=M^\dag P_{\mathcal{S}} =e^{i\theta_1}\oplus e^{i\theta_2} \cdots \oplus  e^{i\theta_t}\oplus Y$. $Y$ is the $(d^n-t)$-dimensional matrix composed by MPOs. $\left|\tr(M^\dag P_{\mathcal{S}})\right|^2=\left|\sum_{j=1}^t e^{i\theta_j}+\tr[Y]\right|^2$.
\item[(b)] States in $\mathcal{S}$ are non-orthonormal, but linear independent. $W=M^\dag P_{\mathcal{S}} =e^{i\theta I_{t}} \oplus Y$. $Y$ is the $(d^n-t)$-dimensional matrix composed by MPOs. $\left|\tr(M^\dag P_{\mathcal{S}})\right|^2=\left|t e^{i\theta}+\tr[Y]\right|^2$.
\item[(c)] States in $\mathcal{S}$ are linear dependent. $W=M^\dag P_{\mathcal{S}} =e^{i\theta I_{t'}} \oplus Y$, $t'$ denotes the linear independent bases of the training set.  $Y$ is the $(d^n-t')$-dimensional matrix composed by MPOs.
\end{itemize}

In the case (a), the matrix $Y$ is $(d^n-t)$-dimensional. For the convenience of analytical calculation, we suppose that $Y$ can be represented by a new unitary-embedded MPO with bond dimension $D'$,  physical dimension $d'$, and $n'$ physical sites, as long as it satisfies $d'^{n'} = d^n-t$.  We can always make such assumption, since the integral $\int dY \tr[Y]\tr[Y^*]=1$ holds for random unitary-embedded  MPO $Y$ with arbitrary bond dimension  and physical dimension, see the following proof.

Now we calculate the average risk over all possible MPO $M$, 
\begin{equation}\label{Risk_aver_M}
\begin{aligned}
 \mathbb{E}_{M}\left[R_{M}(P_{\mathcal{S}})\right] =& 1-\frac{d^n+\int dM |\tr(M^\dag P_{\mathcal{S}})|^2}{d^n(d^n+1)}\\
=& 1-\frac{d^n+ \int dY\left|\sum_{j=1}^t e^{i\theta_j}+\tr[Y]\right|^2}{d^n(d^n+1)}\\
=& 1-\frac{d^n+  \int dY\left[\left|\sum_{j=1}^t e^{i\theta_j}\right|^2+\left|\tr[Y]\right|^2+\sum_{j=1}^t e^{i\theta_j}\tr[Y^\dag]+\sum_{j=1}^t e^{-i\theta_j}\tr[Y]\right]}{d^n(d^n+1)}\\
\geq & 1-\frac{d^n+t^2+\int dY \tr[Y]\tr[Y^*] }{d^n(d^n+1)} = 1-\frac{d^n+t^2+1}{d^n(d^n+1)}.
\end{aligned}
\end{equation}

Now we prove $\int dY \tr[Y]\tr[Y^*]=1$ for random unitary-embedded  MPO $Y$in periodic boundary condition.
\begin{proof}
The unitary embedded MPO $Y$ takes the following form 
\begin{equation}\label{YMPO_PIC}
Y = \ipic{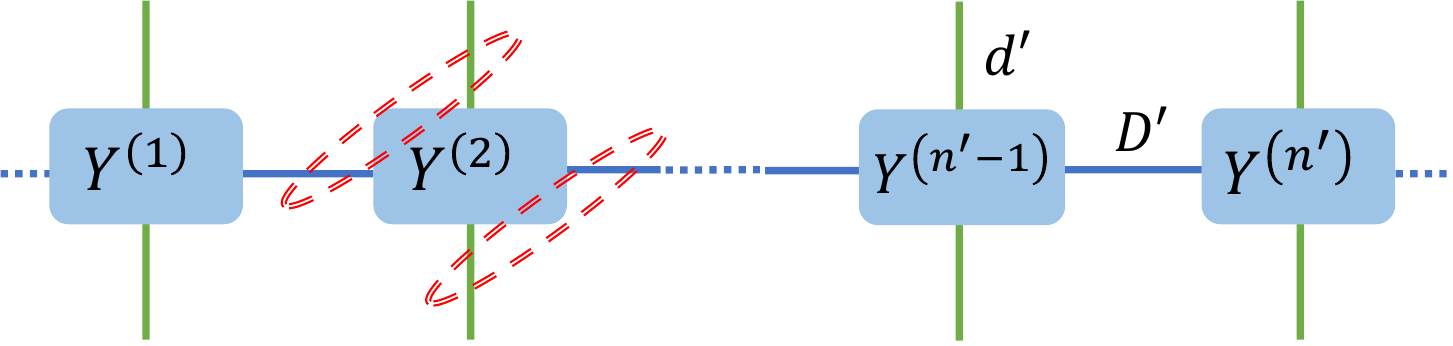}{0.3} ,
\end{equation}
where $Y^{(k)}$  locates on the $k$-th physical site. Through unitary embedding \cite{Haferkamp2021Emergent,Liu2022Presence}, one can map the 4-rank tensor $Y^{(k)}$ to a 2-rank unitary matrix in the unitary group $SU(D'd')$. Thus, the four dangling legs connected to $Y^{(k)}$ are divided into two parities, with each party (circled by the red lines in Eq.~(\ref{YMPO_PIC}) ) hosts $D'd'$-dimension. 

The random tensor $Y^{(k)}$ forms the approximate unitary 1-design, and the 1-moment integral of $Y^{(k)}$ reads
\begin{equation}
\int_{\rm Haar} dY^{(k)} Y^{(k)}_{ij,lm}Y^{*(k)}_{i'j',l'm'}=\int_{\rm Haar} dY^{(k)}\ipic{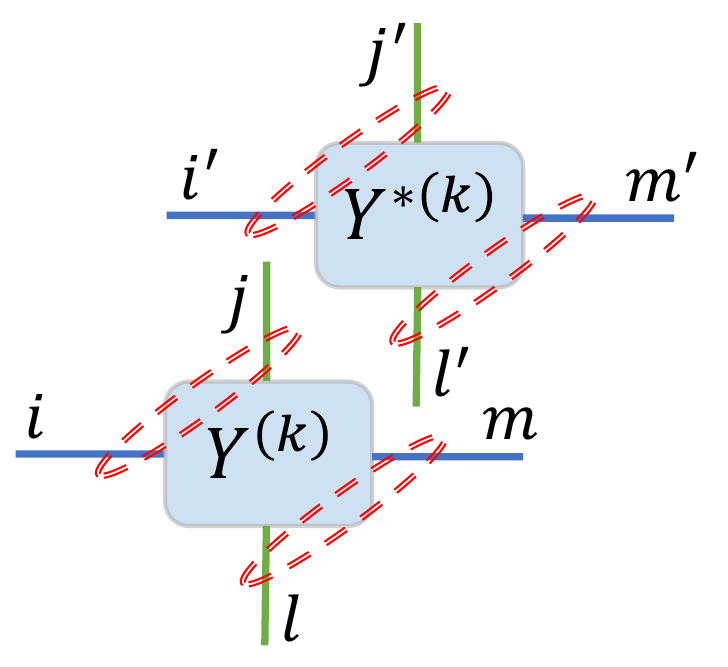}{0.25} =\frac{1}{D'd'} \ipic{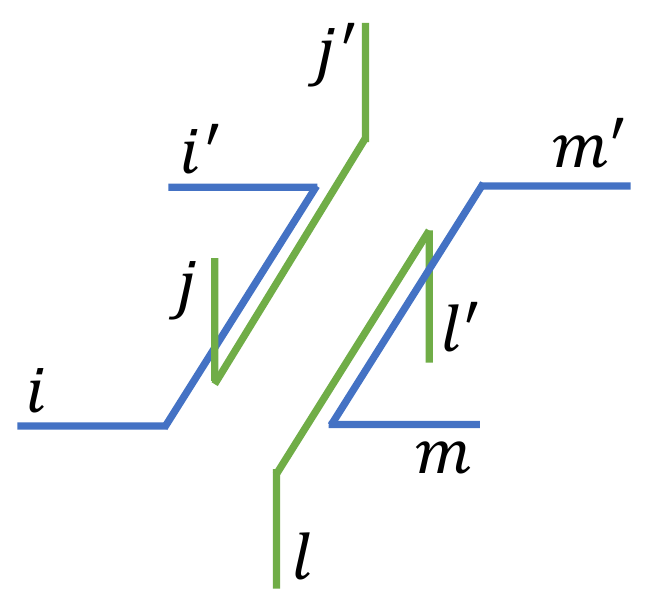}{0.25}=\frac{1}{D'd'} \delta_{ii'}\delta_{jj'}\delta_{ll'}\delta_{mm'}.
\end{equation}
Then the 1-moment integral of $Y$ in Eq.~(\ref{Risk_aver_M}) can be expressed as
\begin{equation}
\begin{aligned}
&\int dY \tr[Y]\tr[Y^*]\\
=&\int dY\ipic{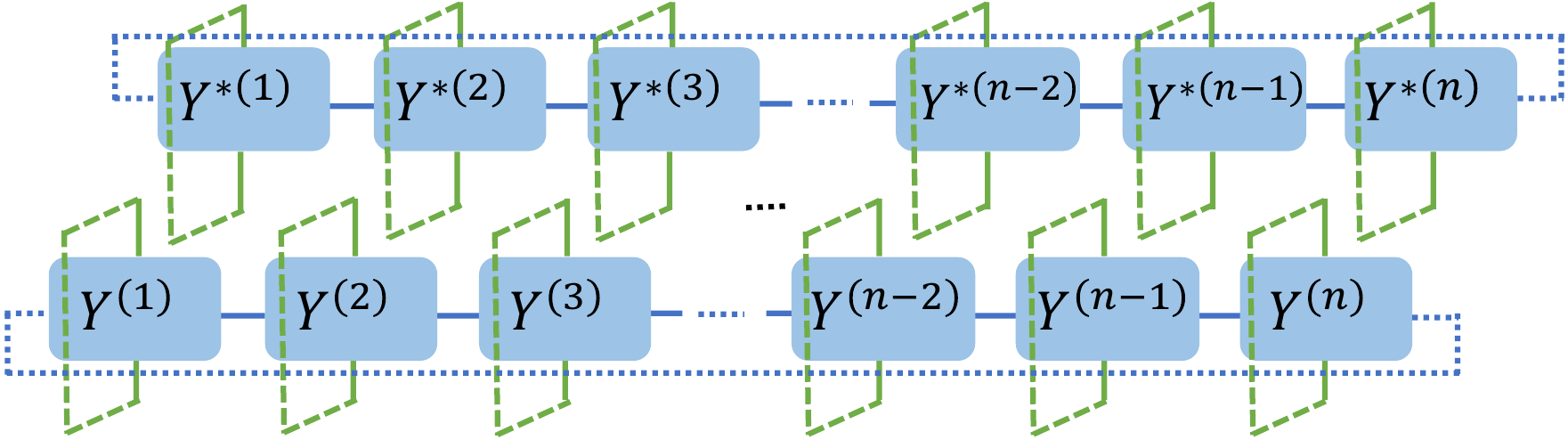}{0.25} \\
=&\frac{1}{(D'd')^n} \ipic{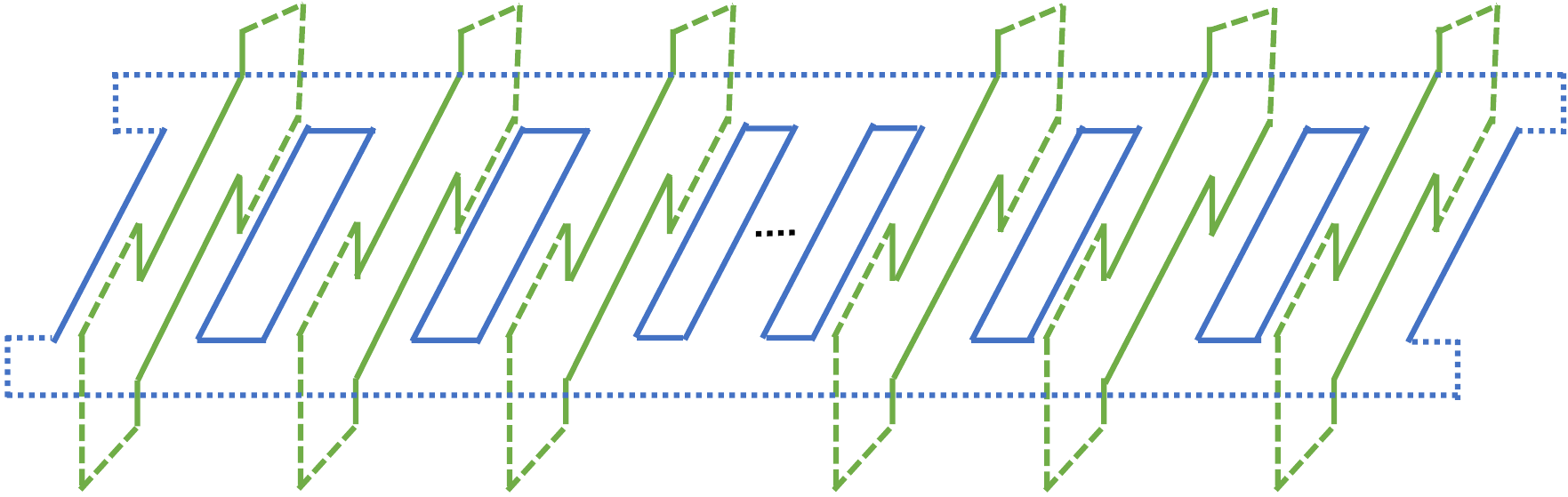}{0.25}\\
=&1,
\end{aligned}
\end{equation}
where each green loop contributes to one $d'$-factor, and each blue loop contributes to one $D'$-factor.
We see from the above equation that the value of integral $\int dY \tr[Y]\tr[Y^*]$ is independent of the concrete values of bond dimension $D'$ and physical dimension $d'$. Therefore our assumption on representing $Y$ by a new random MPO is reasonable. This completes our proof. 
\end{proof}
Since the result on  $\mathbb{E}_{M}\left[R_{M}(P_{\mathcal{S}})\right]$ is independent of the choice of training set, thus the average risk over all training sets ${\mathcal{S}}$ and all (possible output) MPOs $M$ has the same formula as $\mathbb{E}_{M}\left[R_{M}(P_{\mathcal{S}})\right]$:

\begin{equation}
\mathbb{E}_{M}\left[\mathbb{E}_{\mathcal{S}}\left[R_{M}(P_{\mathcal{S}})\right]\right]=\mathbb{E}_{M}\left[R_{M}(P_{\mathcal{S}})\right]\geq 1-\frac{d^n+t^2+1}{d^n(d^n+1)}.
\end{equation}

Similar calculations also apply to the cases (b) and (c). 

Thus for arbitrary cases of training states, we obtain that 
\begin{itemize}
\item Case (a): States in $\mathcal{S}$ are orthonormal. $W=M^\dag P_{\mathcal{S}} =e^{i\theta_1}\oplus e^{i\theta_2} \cdots \oplus  e^{i\theta_t}\oplus Y$. $Y$ is the $(d^n-t)$-dimensional matrix composed by MPOs. $\left|\tr(M^\dag P_{\mathcal{S}})\right|^2=\left|\sum_{j=1}^t e^{i\theta_j}+\tr[Y]\right|^2$.
\begin{equation}
\mathbb{E}_{M}\left[\mathbb{E}_{\mathcal{S}}\left[R_{M}(P_{\mathcal{S}})\right]\right]=\mathbb{E}_{M}\left[R_{M}(P_{\mathcal{S}})\right]\geq 1-\frac{d^n+t^2+1}{d^n(d^n+1)}.
\end{equation}
\item Case (b): States in $\mathcal{S}$ are non-orthonormal, but linear independent. $W=M^\dag P_{\mathcal{S}} =e^{i\theta I_{t}} \oplus Y$. $Y$ is the $(d^n-t)$-dimensional matrix composed by MPOs. $\left|\tr(M^\dag P_{\mathcal{S}})\right|^2=\left|t e^{i\theta}+\tr[Y]\right|^2$.
\begin{equation}
\mathbb{E}_{M}\left[\mathbb{E}_{\mathcal{S}}\left[R_{M}(P_{\mathcal{S}})\right]\right]=\mathbb{E}_{M}\left[R_{M}(P_{\mathcal{S}})\right]= 1-\frac{d^n+t^2+1}{d^n(d^n+1)}.
\end{equation}
\item Case (c): States in $\mathcal{S}$ are linear dependent. $W=M^\dag P_{\mathcal{S}} =e^{i\theta I_{t'}} \oplus Y$, $t'$ denotes the linear independent bases of the training set.  $Y$ is the $(d^n-t')$-dimensional matrix composed by MPOs.
\begin{equation}
\mathbb{E}_{M}\left[\mathbb{E}_{\mathcal{S}}\left[R_{M}(P_{\mathcal{S}})\right]\right]=\mathbb{E}_{M}\left[R_{M}(P_{\mathcal{S}})\right]= 1-\frac{d^n+t'^2+1}{d^n(d^n+1)}.
\end{equation}
\end{itemize}

\subsection{Learning target quantum unitary operator based on the input of matrix product states}
Now we consider the case of learning target quantum unitary operator based on the input of MPSs, where the training states are encoded into the unitary-embedded MPSs, and the variational structures are unitary quantum circuits. 

Owing to the norm-concentration of MPSs in Eq.~(\ref{MPS_Concentration}), one thus simplifies the risk function $R_M(P_{\mathcal{S}})$ to be 
\begin{equation}\label{risk_x_mps_vs_vc}
\begin{aligned}
R_M(P_{\mathcal{S}}) \rightarrow &\frac{1}{4} \int dx\left\|  M|x\rangle\langle x|M^\dag-P_{\mathcal{S}}|x\rangle\langle x|P_{\mathcal{S}}^\dag\right\|_1^2\\
\approx &  1-\int dx \left|\langle x|M^\dag P_{\mathcal{S}}|x\rangle\right|^2.
\end{aligned}
\end{equation}
Here $|x\rangle$ is the random unitary-embedded MPS,
\begin{equation}
|x\rangle = \ipic{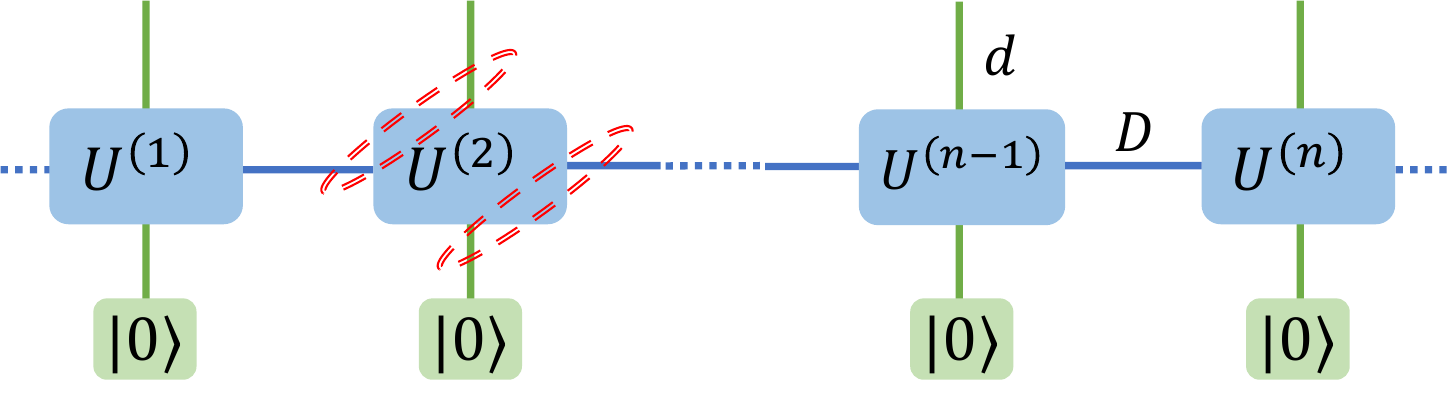}{0.25},
\end{equation}
where $U^{(k)}$ denotes the $Dd\times Dd$ unitary matrix locating on the $k$-th site.
In Eq.~(\ref{risk_x_mps_vs_vc}), One needs to calculate the 2-moment integral of the random MPS $|x\rangle$. We represent the  variational circuit by the unitary  $W = M^\dag P_{\mathcal{S}}$. Following the above statements on the training states, we have the three following cases
\begin{itemize}
\item[(a)] States in $\mathcal{S}$ are orthonormal. $W=M^\dag P_{\mathcal{S}} =e^{i\theta_1}\oplus e^{i\theta_2} \cdots \oplus  e^{i\theta_t}\oplus Y$. $Y$ is the $(d^n-t)$-dimensional unitary matrix. 
\item[(b)] States in $\mathcal{S}$ are non-orthonormal, but linear independent. $W=M^\dag P_{\mathcal{S}} =e^{i\theta I_{t}} \oplus Y$. $Y$ is the $(d^n-t)$-dimensional unitary matrix. 
\item[(c)] States in $\mathcal{S}$ are linear dependent. $W=M^\dag P_{\mathcal{S}} =e^{i\theta I_{t'}} \oplus Y$, $t'$ denotes the linear independent bases of the training set.  $Y$ is the $(d^n-t')$-dimensional unitary matrix.
\end{itemize} 

Without loss of generality, we focus on the case (a). To calculate the risk function,  one needs to calculate the 2-moment integral of random unitary-embedded MPS $|x\rangle$ 
\begin{equation}\label{x_2_moment_integral}
\begin{aligned}
\int dx \left|\langle x|M^\dag P_{\mathcal{S}}|x\rangle\right|^2=\ipic{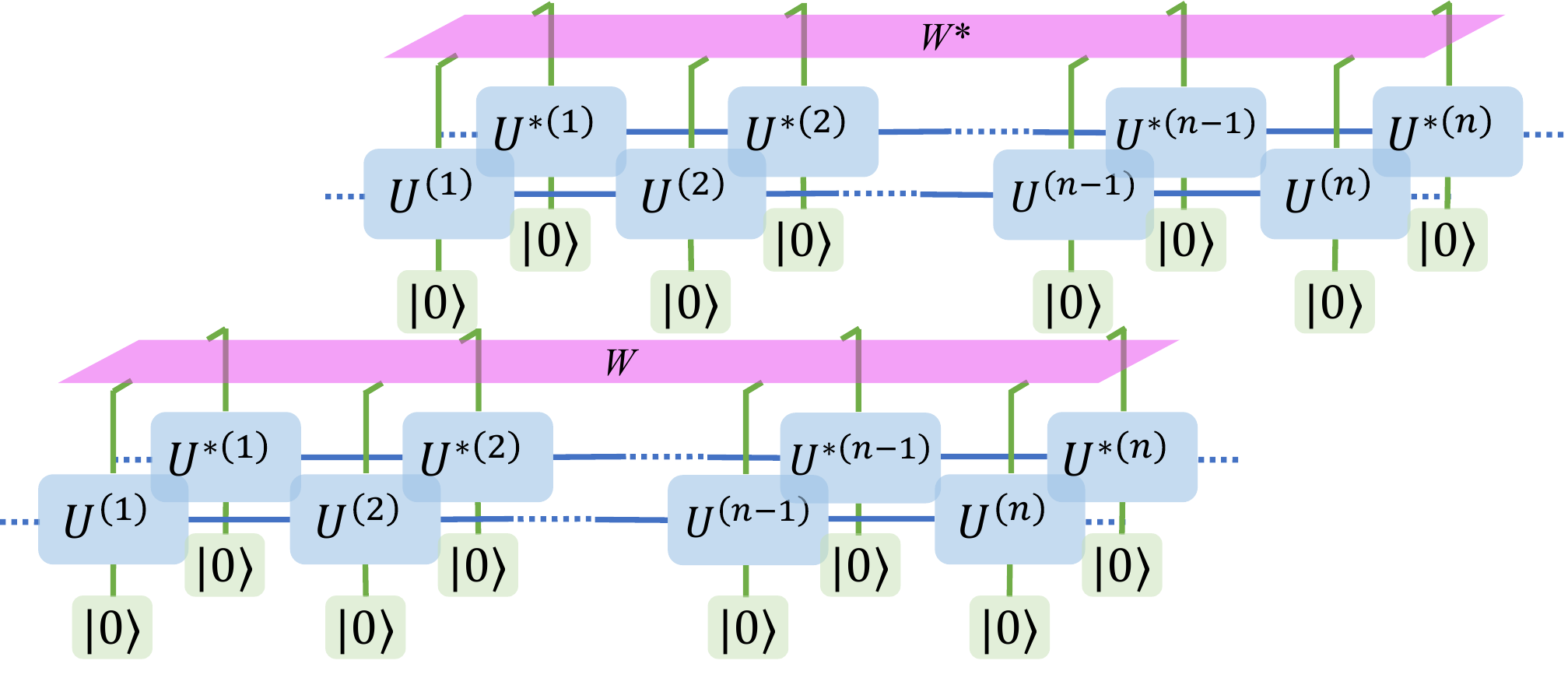}{0.25},
\end{aligned}
\end{equation}
where  all of the $U^{(k)}$'s are supposed to distribute randomly and form the approximate unitary 2-design. In previous works \cite{Haferkamp2021Emergent,Liu2022Presence}, it has been noted that the calculation of Eq.~(\ref{x_2_moment_integral}) is equivalent to calculating the partition function of 1D classical Ising model. For each local tensor $U^{(k)}$, one has
\begin{equation}
\begin{aligned}
\int d U^{(k)} \ipic{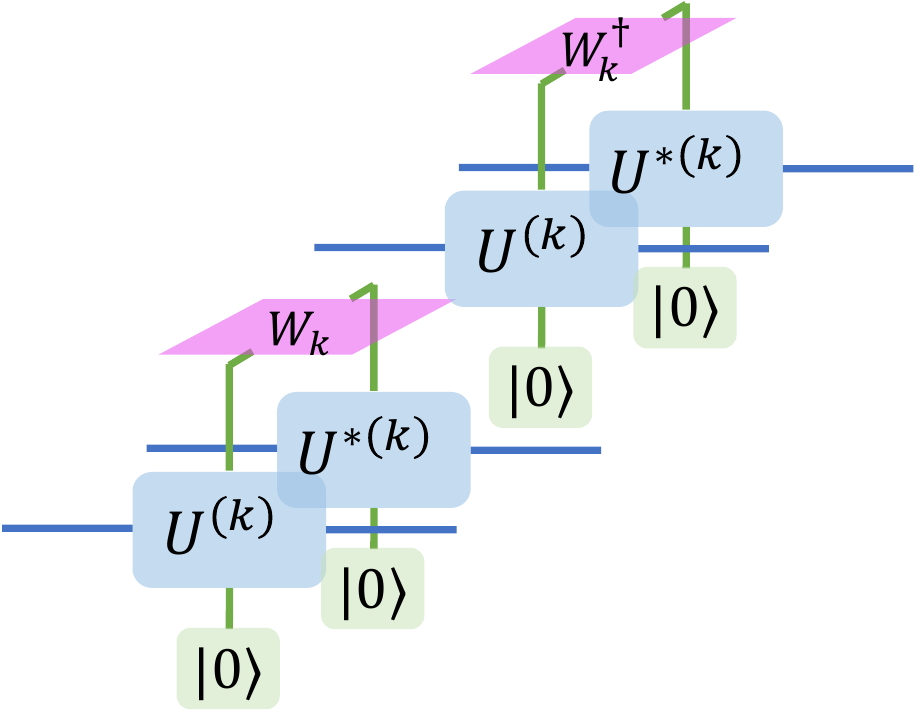}{0.25} = & \frac{1}{(Dd)^2-1}\left(\ipic{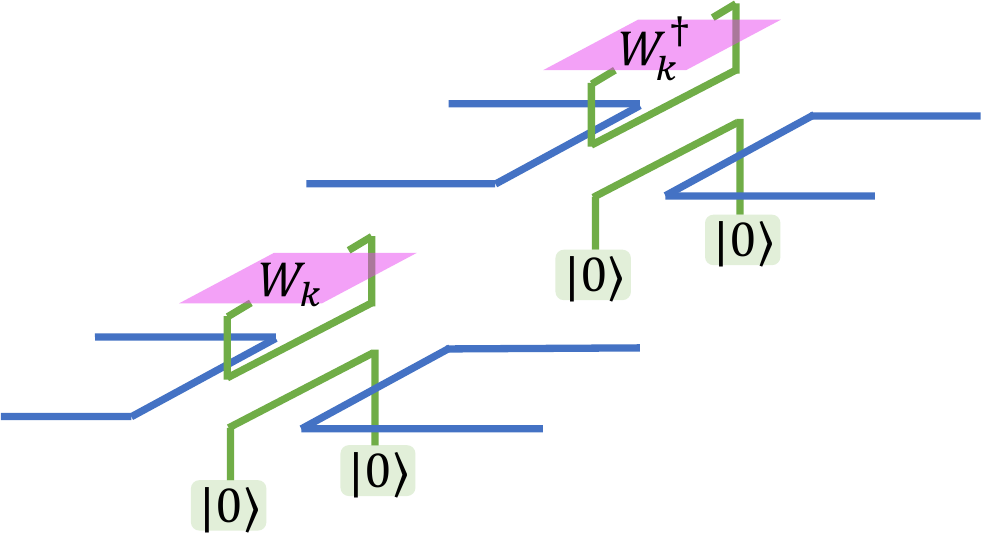}{0.25}+\ipic{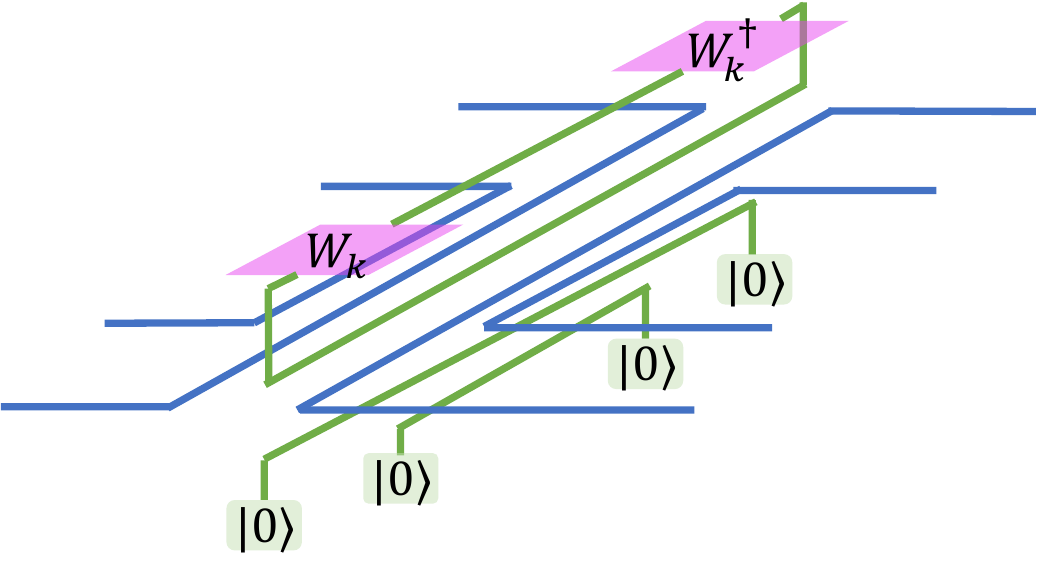}{0.25}\right)\\
&-\frac{1}{(Dd)^3-Dd}\left(\ipic{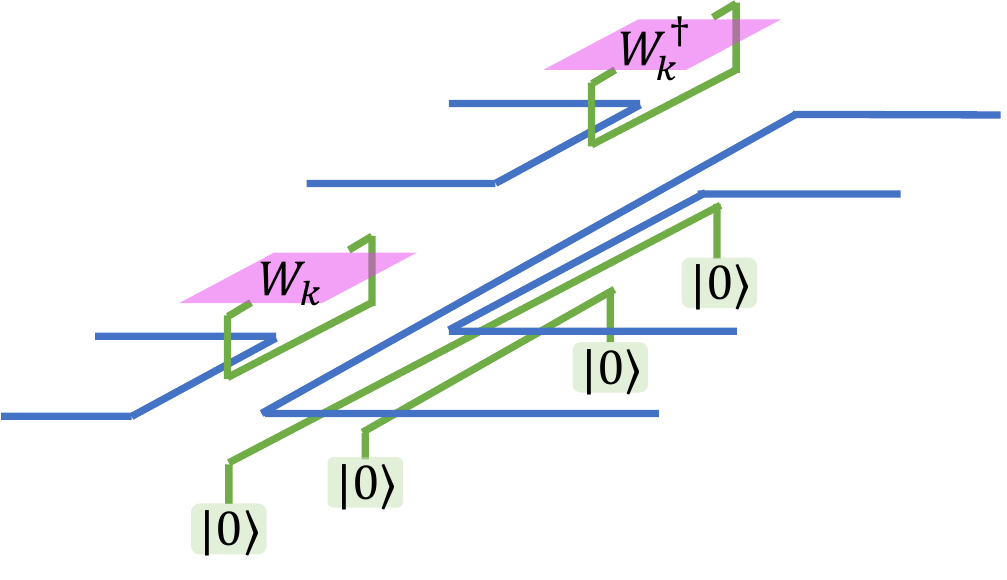}{0.25}+\ipic{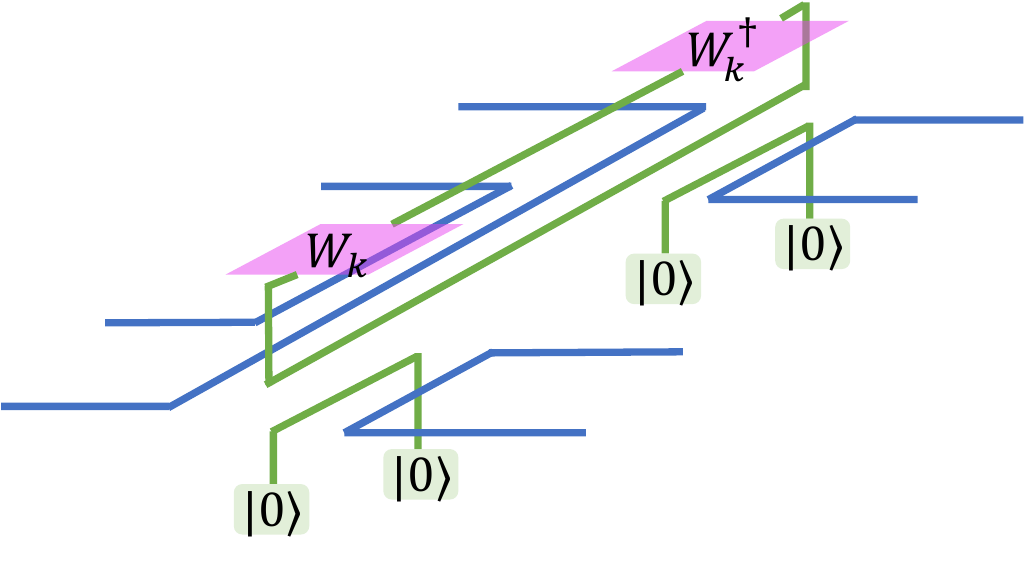}{0.25}\right)\\
=&\frac{1}{(Dd)^2-1}\left(S^{(k)} \tr[W_k]\tr[W_k^\dag] S^{(k)}+A^{(k)} \tr(W_kW_k^\dag) A^{(k)}\right)\\
&-\frac{1}{(Dd)^3-Dd}\left(S^{(k)} \tr[W_k]\tr[W_k^\dag] A^{(k)}+A^{(k)} \tr[W_kW_k^\dag] S^{(k)}\right) \\
= &\frac{1}{(Dd)^2-1}\left(S^{(k)} \tr[W_k\otimes W_k^\dag] S^{(k)}+A^{(k)}  \tr[{\rm SWAP}_{k}W_k\otimes W_k^\dag] A^{(k)}\right)\\
&-\frac{1}{(Dd)^3-Dd}\left(S^{(k)} \tr[W_k\otimes W_k^\dag] A^{(k)}+A^{(k)} \tr[{\rm SWAP}_{k}W_k\otimes W_k^\dag] S^{(k)}\right) ,
\end{aligned}
\end{equation}
where ${\rm SWAP}_{k}$ denotes the swapping operation on the $k$-th site of $W^\dag$ and $W$, $S^{(k)}$ and $A^{(k)}$ denote the symmetric and anti-symmetric connections of the bond indices of $U^{(k)}$  \cite{Liu2022Presence}, 
\begin{equation}
S = \ipic{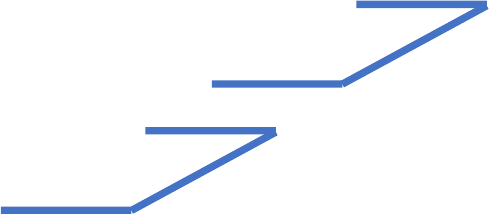}{0.25},\quad A = \ipic{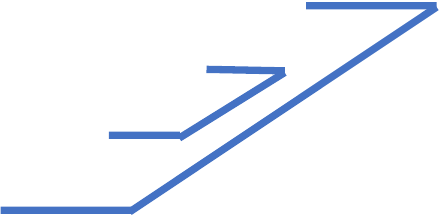}{0.25},
\end{equation}
and the term $\langle0|$---$|0\rangle=1$.
Here  the notation $W_k$ just denotes how the local indices of $W_{i_1i_2...i_k...i_n,j_1j_2...j_k...j_n}$ contract on the $k$-th physical site, rather than the reduced matrix under partial trace. 

One can embed the calculation of 2-moment integral of random MPS $|x\rangle$ in Eq.~(\ref{x_2_moment_integral})  into the calculation of partition function. Here we utilize the transfer matrix method to calculate the partition function,
\begin{eqnarray}
\ipic{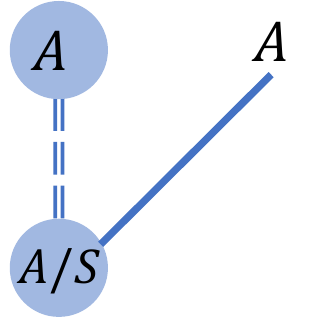}{0.25} = \frac{D^2}{(Dd)^2-1}-\frac{D}{(Dd)^3-Dd}=\frac{D^2d-1}{D^2d^3-d},\\
\ipic{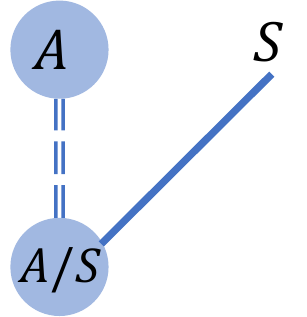}{0.25} = \frac{D}{(Dd)^2-1}-\frac{D^2}{(Dd)^3-Dd}=\frac{Dd-D}{D^2d^3-d},\\
\ipic{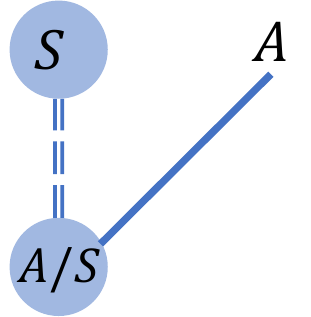}{0.25} =-\frac{D^2}{(Dd)^3-Dd}+ \frac{D}{(Dd)^2-1}=\frac{Dd-D}{D^2d^3-d},\\
\ipic{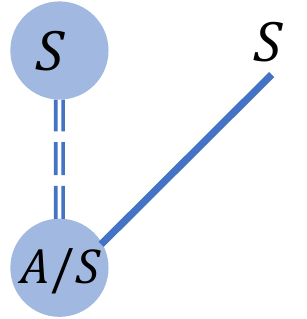}{0.25} =-\frac{D}{(Dd)^3-Dd}+ \frac{D^2}{(Dd)^2-1}=\frac{D^2d-1}{D^2d^3-d}.
\end{eqnarray} 
We thus obtain the transfer matrix 
\begin{equation}
T=\left[\begin{matrix}
&T_{AA}, & T_{AS} \\
&T_{SA}, & T_{SS}
\end{matrix}\right]=\left[\begin{matrix}
&\frac{D^2d-1}{D^2d^3-d}, & \frac{Dd-D}{D^2d^3-d}\\
&\frac{Dd-D}{D^2d^3-d}, & \frac{D^2d-1}{D^2d^3-d}
\end{matrix}\right].
\end{equation}

Not that $\tr[{\rm SWAP } A\otimes B]=\tr[AB]$, thus the Eq.~(\ref{x_2_moment_integral}) takes the form
\begin{equation}
\begin{aligned}
\int dx \left|\langle x|M^\dag P_{\mathcal{S}}|x\rangle\right|^2 = \int dx \left|\langle x|W|x\rangle\right|^2 = \tr\left[\mathcal{F}\left(\left\{{\rm SWAP}_{i}| i=1,2,...n\right\}\right)W\otimes W^\dag\right].
\end{aligned}
\end{equation}
In the periodic boundary condition of MPS, the term 
\begin{equation}
\begin{aligned}
&\mathcal{F}\left(\left\{{\rm SWAP}_{i}| i=1,2,...n\right\}\right) \\
=& \bar{\tr}\left[\left[\begin{matrix}
&{\rm SWAP}_{1}, & 0 \\
&0, & \mathbb{I}
\end{matrix}\right]\left[\begin{matrix}
&T_{AA}, & T_{AS} \\
&T_{SA}, & T_{SS}
\end{matrix}\right]\left[\begin{matrix}
&{\rm SWAP}_{2}, & 0 \\
&0, & \mathbb{I}
\end{matrix}\right]\left[\begin{matrix}
&T_{AA}, & T_{AS} \\
&T_{SA}, & T_{SS}
\end{matrix}\right]...\left[\begin{matrix}
&{\rm SWAP}_{n}, & 0 \\
&0, & \mathbb{I}
\end{matrix}\right]\left[\begin{matrix}
&T_{AA}, & T_{AS} \\
&T_{SA}, & T_{SS}
\end{matrix}\right]\right],
\end{aligned}
\end{equation}
Here the trace operation $\bar{\tr}$ only applies on the space of transfer matrix, rather than the swapping space. Graphically, one has
\begin{equation}
\tr\left[\mathcal{F}\left(\left\{{\rm SWAP}_{i}| i=1,2,...n\right\}\right)W\otimes W^\dag \right]= \ipic{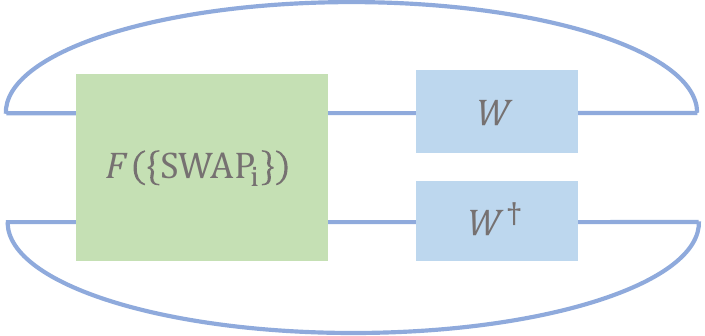}{0.5}
\end{equation}

Now we calculate the average of risk function over all (possible output) unitaries $M$.
\begin{equation}
\begin{aligned}
\mathbb{E}_{M}\left[R_{M}(P_{\mathcal{S}})\right] =1- \int dW \tr\left[\mathcal{F}\left(\left\{{\rm SWAP}_{i}| i=1,2,...n\right\}\right)W\otimes W^\dag\right].
\end{aligned}
\end{equation}
 We consider two special cases of the training set size $t$, the empty training set with $t=0$, and the full training set with $t=2^n$.
 
 For the case of empty training set, $t=0$, and the 1-moment integral over the Haar ensemble of $W$ 
 \begin{equation}
\begin{aligned}
&\mathbb{E}_{M}\left[R_{M}(P_{\mathcal{S}})\right] \\
=&1- \int_{\rm Haar} dW \tr\left[\mathcal{F}\left(\left\{{\rm SWAP}_{i}| i=1,2,...n\right\}\right)W\otimes W^\dag\right]\\
=&1- \frac{1}{d^n}\ipic{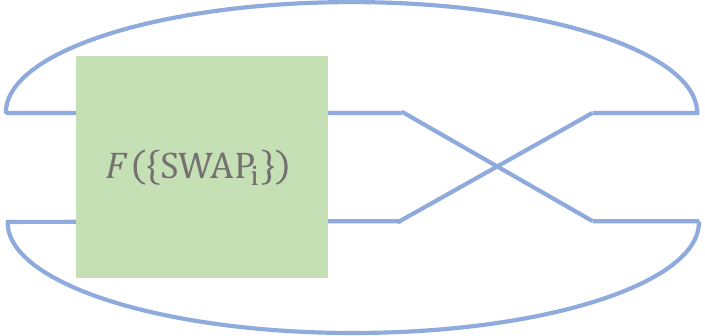}{0.5}\\
=&1- \frac{1}{d^n}\tr\left[\mathcal{F}\left(\left\{{\rm SWAP}_{i}| i=1,2,...n\right\}\right)\otimes_{i=1}^n{\rm SWAP}_{i}\right]\\
=&1- \frac{1}{d^n}\tr\left[\left[\begin{matrix}
&d^2, & 0 \\
&0, & d
\end{matrix}\right]\left[\begin{matrix}
&T_{AA}, & T_{AS} \\
&T_{SA}, & T_{SS}
\end{matrix}\right]\left[\begin{matrix}
&d^2, & 0 \\
&0, & d
\end{matrix}\right]\left[\begin{matrix}
&T_{AA}, & T_{AS} \\
&T_{SA}, & T_{SS}
\end{matrix}\right]...\left[\begin{matrix}
&d^2, & 0 \\
&0, & d
\end{matrix}\right]\left[\begin{matrix}
&T_{AA}, & T_{AS} \\
&T_{SA}, & T_{SS}
\end{matrix}\right]\right]\\
=& 1- \tr\left[\left(\left[\begin{matrix}
&d, & 0 \\
&0, & 1
\end{matrix}\right]\left[\begin{matrix}
&T_{AA}, & T_{AS} \\
&T_{SA}, & T_{SS}
\end{matrix}\right]\right)^n\right]\\
=& 1-\tr\left[\left[\begin{matrix}&\frac{D^2d-1}{D^2d^2-1}, & \frac{Dd-D}{D^2d^2-1}\\
&\frac{Dd-D}{D^2d^3-d}, & \frac{D^2d-1}{D^2d^3-d}\end{matrix}\right]^n\right]\\
=&1-\left[\frac{1}{d^n}+\left(\frac{D^2-1}{D^2d^2-1}\right)^n\right]\\
\approx  & 1.
\end{aligned}
\end{equation}

For the case of full training set, $t=d^n$, then  the learned unitary $M$ obeys   $M^\dag P_{\mathcal{S}}  =W =e^{i\theta_1}\oplus e^{i\theta_2} \cdots \oplus  e^{i\theta_t}\oplus e^{i\theta_{2^n}}$, thus
 \begin{equation}
\begin{aligned}
&\mathbb{E}_{M}\left[R_{M}(P_{\mathcal{S}})\right] \\
=&1-  \tr\left[\mathcal{F}\left(\left\{{\rm SWAP}_{i}| i=1,2,...n\right\}\right)W\otimes W^\dag\right]\\
\geq&1-\left( \tr\left[\mathcal{F}\left(\left\{{\rm SWAP}_{i}| i=1,2,...n\right\}\right)(WW^\dag)\otimes I\right]\tr\left[\mathcal{F}\left(\left\{{\rm SWAP}_{i}| i=1,2,...n\right\}\right)I \otimes (W^\dag W)\right]\right)^{1/2}\\
=&1- \ipic{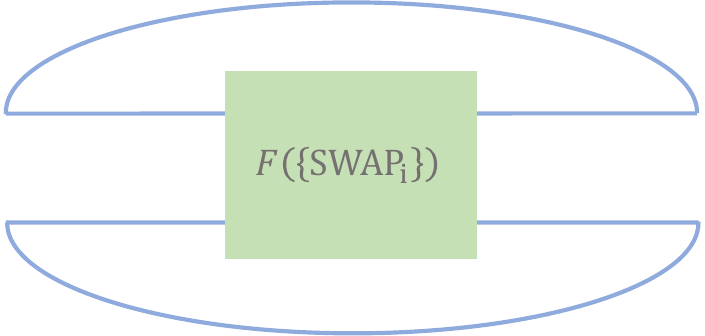}{0.5}\\
=&1- \tr\left[\mathcal{F}\left(\left\{{\rm SWAP}_{i}| i=1,2,...n\right\}\right)\right]\\
=&1- \tr\left[\left[\begin{matrix}
&d, & 0 \\
&0, & d^2
\end{matrix}\right]\left[\begin{matrix}
&T_{AA}, & T_{AS} \\
&T_{SA}, & T_{SS}
\end{matrix}\right]\left[\begin{matrix}
&d, & 0 \\
&0, & d^2
\end{matrix}\right]\left[\begin{matrix}
&T_{AA}, & T_{AS} \\
&T_{SA}, & T_{SS}
\end{matrix}\right]...\left[\begin{matrix}
&d, & 0 \\
&0, & d^2
\end{matrix}\right]\left[\begin{matrix}
&T_{AA}, & T_{AS} \\
&T_{SA}, & T_{SS}
\end{matrix}\right]\right]\\
=& 1- \tr\left[\left(\left[\begin{matrix}
&d, & 0 \\
&0, & d^2
\end{matrix}\right]\left[\begin{matrix}
&T_{AA}, & T_{AS} \\
&T_{SA}, & T_{SS}
\end{matrix}\right]\right)^n\right]\\
=& 1-\tr\left[\left[\begin{matrix}&\frac{D^2d-1}{D^2d^2-1}, & \frac{Dd-D}{D^2d^2-1}\\
&\frac{Dd^2-Dd}{D^2d^2-1}, & \frac{D^2d^2-d}{D^2d^2-1}\end{matrix}\right]^n\right]\\
=&1-\left[1+\left(\frac{D^2d-d}{D^2d^2-1}\right)^n\right]\\
\approx  & 0.
\end{aligned}
\end{equation}
 In the above proof, we utilize the equality  $\tr\left[\mathcal{F}\left(\left\{{\rm SWAP}_{i}| i=1,2,...n\right\}\right)W\otimes W^\dag\right]= \tr\left[\mathcal{F}\left(\left\{{\rm SWAP}_{i}| i=1,2,...n\right\}\right)W^\dag \otimes W\right]$, and  the matrix trace inequality
\begin{equation}
\left|\tr[F A^\dag B]\right|^2\leq\tr[F A^\dag A]\tr[F B^\dag B],
\end{equation}
where $F$ is positive semi-definite with $\tr(F)=1$. 

For the convenience of analytical calculations,  we consider the case of $t_k=d^n-d^{n-k}$ training samples, and suppose that the learned unitary $P_{\mathcal{S}}$ obeys   $M^\dag P_{\mathcal{S}} =W=e^{i\theta I_{t_k}}\oplus Y$, such that $Y$ is a $d^{n-k}\times d^{n-k}$ unitary matrix. For simplicity, one suppose that   the unitary $Y$ locates in a $(n-k)$-qudit subsystem. Thus the average risk function over $M$ reads
 \begin{equation}\label{Supp_Y_tk}
\begin{aligned}
&\mathbb{E}_{M}\left[R_{M}(P_{\mathcal{S}})\right] \\
=&1-  \int_{\rm Haar} dY \tr\left[\mathcal{F}\left(\left\{{\rm SWAP}_{i}| i=1,2,...n\right\}\right)(e^{i\theta I_{t_k}}\oplus Y)\otimes (e^{-i\theta I_{t_k}}\oplus Y^\dag)\right]\\
=&1-  \int_{\rm Haar} dY \tr\left[\mathcal{F}\left(\left\{{\rm SWAP}_{i}| i=1,2,...n\right\}\right)
(e^{i\theta}(I_d^{\otimes n}-\Sigma^{\otimes k}\otimes I_d^{\otimes n-k}) +\Sigma^{\otimes k}\otimes Y)\right.\\
&\qquad\qquad\qquad \quad \left.\otimes (e^{-i\theta}(I_d^{\otimes n}-\Sigma^{\otimes k}\otimes I_d^{\otimes n-k}) +\Sigma^{\otimes k}\otimes Y^\dag)\right]\\
=&1-  \int_{\rm Haar} dY \tr\left[\mathcal{F}\left(\left\{{\rm SWAP}_{i}| i=1,2,...n\right\}\right)
(I_d^{\otimes n}-\Sigma^{\otimes k}\otimes I_d^{\otimes n-k})^{\otimes2} +\Sigma^{\otimes k}\otimes Y\otimes\Sigma^{\otimes k}\otimes Y^\dag\right],
\end{aligned}
\end{equation}
 where $\Sigma = {\rm Diag}_d\{0,0,....,1\}$ denotes a $d\times d$ matrix.
 
 We first calculate the term 
 \begin{equation}
 \begin{aligned}
 &\tr\left[\mathcal{F}\left(\left\{{\rm SWAP}_{i}| i=1,2,...n\right\}\right)
(I_d^{\otimes n}-\Sigma^{\otimes k}\otimes I_d^{\otimes n-k})^{\otimes2}\right]\\
=&\tr\left[\mathcal{F}\left(\left\{{\rm SWAP}_{i}| i=1,2,...n\right\}\right)
((I_d^{\otimes n})^{\otimes2}-I_d^{\otimes n}\otimes\Sigma^{\otimes k}\otimes I_d^{\otimes n-k}-\Sigma^{\otimes k}\otimes I_d^{\otimes n-k}\otimes I_d^{\otimes n}+(\Sigma^{\otimes k}\otimes I_d^{\otimes n-k})^{\otimes2})\right]
\end{aligned}
\end{equation}
For arbitrary local term, one has
\begin{eqnarray}
&&\tr\left[{\rm SWAP}_i I_{d,i}\otimes I_{d,i}\right] = d,\qquad \tr\left[ I_{d,i}\otimes I_{d,i}\right] = d^2;\\
&&\tr\left[{\rm SWAP}_i I_{d,i}\otimes \Sigma_i\right] = 1,\qquad \tr\left[ I_{d,i}\otimes \Sigma_i\right] = d;\\
&&\tr\left[{\rm SWAP}_i \Sigma_i\otimes I_{d,i}\right] = 1,\qquad \tr\left[ \Sigma_i\otimes I_{d,i}\right] = d;\\
&&\tr\left[{\rm SWAP}_i \Sigma_i\otimes \Sigma_i\right] = 1,\qquad \tr\left[ \Sigma_i\otimes \Sigma_i\right] = 1.
\end{eqnarray}
Thus
\begin{equation}
\tr\left[\mathcal{F}\left(\left\{{\rm SWAP}_{i}| i=1,2,...n\right\}\right)
(I_d^{\otimes n})^{\otimes2}\right] =\tr\left[\mathcal{F}\left(\left\{{\rm SWAP}_{i}| i=1,2,...n\right\}\right)\right] = 1+\left(\frac{D^2d-d}{D^2d^2-1}\right)^n,
\end{equation}
\begin{equation}
\begin{aligned}
\tr\left[\mathcal{F}\left(\left\{{\rm SWAP}_{i}| i=1,2,...n\right\}\right)
(I_d^{\otimes n}\otimes(\Sigma^{\otimes k}\otimes I_d^{\otimes n-k}))\right]=&\tr\left[\left(\left[\begin{matrix}
&1, & 0 \\
&0, & d
\end{matrix}\right]T\right)^{k}\left(\left[\begin{matrix}
&d, & 0 \\
&0, & d^2
\end{matrix}\right]T\right)^{n-k}\right]\\
=& \frac{1}{d^k}\left(1+\left(\frac{D^2d-d}{D^2d^2-1}\right)^n\right),
\end{aligned}
\end{equation}
\begin{equation}
\begin{aligned}
\tr\left[\mathcal{F}\left(\left\{{\rm SWAP}_{i}| i=1,2,...n\right\}\right)(
(\Sigma^{\otimes k}\otimes I_d^{\otimes n-k})\otimes I_d^{\otimes n})\right]=&\tr\left[\left(\left[\begin{matrix}
&1, & 0 \\
&0, & d
\end{matrix}\right]T\right)^{k}\left(\left[\begin{matrix}
&d, & 0 \\
&0, & d^2
\end{matrix}\right]T\right)^{n-k}\right]\\
=& \frac{1}{d^k}\left(1+\left(\frac{D^2d-d}{D^2d^2-1}\right)^n\right),
\end{aligned}
\end{equation}
\begin{equation}
\begin{aligned}
\tr\left[\mathcal{F}\left(\left\{{\rm SWAP}_{i}| i=1,2,...n\right\}\right)
(\Sigma^{\otimes k}\otimes I_d^{\otimes n-k})\otimes(\Sigma^{\otimes k}\otimes I_d^{\otimes n-k})\right]=\tr\left[T^{k}\left(\left[\begin{matrix}
&d, & 0 \\
&0, & d^2
\end{matrix}\right]T\right)^{n-k}\right].
\end{aligned}
\end{equation}

We then calculate the term in Eq.~(\ref{Supp_Y_tk}),
\begin{equation}
\begin{aligned}
& \int_{\rm Haar} dY \tr\left[\mathcal{F}\left(\left\{{\rm SWAP}_{i}| i=1,2,...n\right\}\right)
(\Sigma^{\otimes k}\otimes Y)\otimes(\Sigma^{\otimes k}\otimes Y^\dag)\right]\\
=&\tr\left[T^{k}\left(\left[\begin{matrix}
&d, & 0 \\
&0, & 1
\end{matrix}\right]T\right)^{n-k}\right].
\end{aligned}
\end{equation}

Thus Eq.~(\ref{Supp_Y_tk}) turns out to be
 \begin{equation}\label{Supp_Y_tk_result}
\begin{aligned}
&\mathbb{E}_{M}\left[R_{M}(P_{\mathcal{S}})\right] \\
=& 1-\left(1-\frac{2}{d^k}\right)\left(1+\left(\frac{D^2d-d}{D^2d^2-1}\right)^n\right)-\tr\left[T^{k}\left(\left[\begin{matrix}
&d, & 0 \\
&0, & d^2
\end{matrix}\right]T\right)^{n-k}\right]-\tr\left[T^{k}\left(\left[\begin{matrix}
&d, & 0 \\
&0, & 1
\end{matrix}\right]T\right)^{n-k}\right],
\end{aligned}
\end{equation}
Utilizing the inequality $\tr[AB]\leq\tr[A]\tr[B]$ for positive semi-definite matrices $A$ and $B$, one then obtains for a general $k \in[1,n-1]$,
\begin{equation}\label{Supp_Y_tk_ineq}
\begin{aligned}
&\mathbb{E}_{M}\left[R_{M}(P_{\mathcal{S}})\right] \\
=& 1-\left(1-\frac{2}{d^k}\right)\left(1+\left(\frac{D^2d-d}{D^2d^2-1}\right)^n\right)-\tr\left[T^{k}\left(\left[\begin{matrix}
&d, & 0 \\
&0, & d^2
\end{matrix}\right]T\right)^{n-k}\right]-\tr\left[T^{k}\left(\left[\begin{matrix}
&d, & 0 \\
&0, & 1
\end{matrix}\right]T\right)^{n-k}\right]\\
\geq& 1-\left(1-\frac{2}{d^k}\right)\left(1+\left(\frac{D^2d-d}{D^2d^2-1}\right)^n\right)-\tr[T^k]\tr\left[\left(\left[\begin{matrix}
&d, & 0 \\
&0, & d^2
\end{matrix}\right]T\right)^{n-k}+\left(\left[\begin{matrix}
&d, & 0 \\
&0, & 1
\end{matrix}\right]T\right)^{n-k}\right]\\
=&1-\left(1-\frac{2}{d^k}\right)\left(1+\left(\frac{D^2d-d}{D^2d^2-1}\right)^n\right)- \left(\frac{(D+1)^k(Dd-1)^k+(D-1)^k(Dd+1)^k}{(D^2d^3-d)^k}\right)\\
&\cdot\left(1+\left(\frac{D^2d-d}{D^2d^2-1}\right)^{n-k}+\frac{1}{d^{n-k}}+\left(\frac{D^2-1}{D^2d^2-1}\right)^{n-k}\right).
\end{aligned}
\end{equation}

In the special case of $k=0$, the number of training samples $t_{k=0}=d^n-d^{n-k}=0$, one then has
\begin{equation}
\begin{aligned}
&\mathbb{E}_{M}\left[R_{M}(P_{\mathcal{S}})\right] \\
=& 1-\left(1-\frac{2}{d^k}\right)\left(1+\left(\frac{D^2d-d}{D^2d^2-1}\right)^n\right)-\tr\left[T^{k}\left(\left[\begin{matrix}
&d, & 0 \\
&0, & d^2
\end{matrix}\right]T\right)^{n-k}\right]-\tr\left[T^{k}\left(\left[\begin{matrix}
&d, & 0 \\
&0, & 1
\end{matrix}\right]T\right)^{n-k}\right]\\
=& 1-\left(1-\frac{2}{d^0}\right)\left(1+\left(\frac{D^2d-d}{D^2d^2-1}\right)^n\right)-\left(1+\left(\frac{D^2d-d}{D^2d^2-1}\right)^{n}+\frac{1}{d^{n}}+\left(\frac{D^2-1}{D^2d^2-1}\right)^{n}\right)\\
= &1+1+\left(\frac{D^2d-d}{D^2d^2-1}\right)^n-1-\left(\frac{D^2d-d}{D^2d^2-1}\right)^{n}-\frac{1}{d^{n}}-\left(\frac{D^2-1}{D^2d^2-1}\right)^{n}\\
=& 1-\frac{1}{d^{n}}-\left(\frac{D^2-1}{D^2d^2-1}\right)^{n}\\
\approx &1.
\end{aligned}
\end{equation}
The above result indicates that the average risk over $M$ approximates to $1$ for empty training set.

For $k=n$, the number of training samples $t_{k=n}=d^n-d^{n-n}\approx d^n$, one then has
\begin{equation}
\begin{aligned}
&\mathbb{E}_{M}\left[R_{M}(P_{\mathcal{S}})\right] \\
=& 1-\left(1-\frac{2}{d^k}\right)\left(1+\left(\frac{D^2d-d}{D^2d^2-1}\right)^n\right)-\tr\left[T^{k}\left(\left[\begin{matrix}
&d, & 0 \\
&0, & d^2
\end{matrix}\right]T\right)^{n-k}\right]-\tr\left[T^{k}\left(\left[\begin{matrix}
&d, & 0 \\
&0, & 1
\end{matrix}\right]T\right)^{n-k}\right]\\
=& 1-\left(1-\frac{2}{d^n}\right)\left(1+\left(\frac{D^2d-d}{D^2d^2-1}\right)^n\right)-\left(\frac{(D+1)^n(Dd-1)^n+(D-1)^n(Dd+1)^n}{(D^2d^3-d)^n}\right)\\
\approx&1-\left(1-\frac{2}{d^n}\right)\left(1+\frac{1}{d^n}\right)-\frac{2}{d^n}\\
\approx& 0.
\end{aligned}
\end{equation}
This indicates that the average risk over $M$ approximates to $0$ for full training set. 

The above results are independent of the training set $\mathcal{S}$, we thus obtain the following theorem.
\begin{theorem}[NFL theorem for MPS]\label{theorem:1D_Supp} 
Define the risk function $R_M(P_{\mathcal{S}})$ in Eq.~(\ref{Risk_error}) for learning a target $n$-qubit unitary $M$ based on the input of MPSs, where $P_{\mathcal{S}}$ represents the hypothesis unitary learned from the training set $\mathcal{S}$. Given a linear independent training set with size $t_k = d^n-d^{n-k}$, the integer $k\in [1,n-1]$, $d$ is the physical dimension of MPS, and $n$ denotes the qubit number of the system. The average risk is lower bounded by
\begin{equation}\label{NFL_Thm1_Supp}
\begin{aligned}
\mathbb{E}_{M,\mathcal{S}} \left[R_M(P_{\mathcal{S}})\right]\geq  1-(1-\frac{2}{d^k})(1+(dAB)^n)-(\frac{1}{d^n}+\frac{1}{d^k})(A^k+B^k)(1+(dAB)^{n-k}),
\end{aligned}
\end{equation} 
where  $A=\frac{D+1}{Dd+1}$, $B=\frac{D-1}{Dd-1}$, and $D$ is the bond dimension of the MPS. 
\end{theorem}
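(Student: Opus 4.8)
The plan is to reduce the trace-norm risk to a second-moment integral over the random unitary-embedded MPS, evaluate that integral by a transfer-matrix (equivalently, 1D Ising) computation, and then average over the target unitary $M$ using Haar integration on the complementary block, controlling the cross terms with elementary trace inequalities.

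First I would invoke the exponential concentration of $\langle x|x\rangle$ about one (Eq.~\eqref{MPS_Concentration}) to drop the normalizations, so that $R_M(P_{\mathcal S})\to 1-\int dx\,|\langle x|W|x\rangle|^2$ with $W=M^\dagger P_{\mathcal S}$. Under perfect learning on a linearly independent set of size $t_k=d^n-d^{n-k}$, $W=e^{i\theta I_{t_k}}\oplus Y$ with $Y$ a $d^{n-k}\times d^{n-k}$ unitary; I would choose the convenient representative in which $Y$ is supported on a clean $(n-k)$-qudit factor, writing $W=e^{i\theta}\bigl(I-\Sigma^{\otimes k}\otimes I^{\otimes n-k}\bigr)+\Sigma^{\otimes k}\otimes Y$ with $\Sigma=\mathrm{diag}(0,\dots,0,1)$.

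Next, using that each local tensor $U^{(k)}$ is an approximate unitary $2$-design, I would apply the single-site Weingarten formula, obtaining a site-by-site $2\times 2$ transfer matrix $T$ on the symmetric/antisymmetric bond sector with $T_{AA}=T_{SS}=\tfrac{D^2d-1}{D^2d^3-d}$, $T_{AS}=T_{SA}=\tfrac{Dd-D}{D^2d^3-d}$, dressed at each site by $\mathrm{diag}(\mathrm{SWAP}_i,\mathbb I)$ and by the contraction weights $\tr[\mathrm{SWAP}_i\,I\otimes I]=d$, $\tr[I\otimes I]=d^2$, etc. This turns $\int dx\,|\langle x|W|x\rangle|^2$ into $\tr\bigl[\mathcal F(\{\mathrm{SWAP}_i\})\,W\otimes W^\dagger\bigr]$. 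The Haar average over $M$ replaces $Y\otimes Y^\dagger$ by $\mathrm{SWAP}$ on the last $n-k$ sites, so $\mathbb E_M[R_M(P_{\mathcal S})]=1-(1-\tfrac{2}{d^k})(1+(dAB)^n)-\tr[T^k(\mathrm{diag}(d,d^2)T)^{n-k}]-\tr[T^k(\mathrm{diag}(d,1)T)^{n-k}]$. I would then bound the two remaining traces above using $\tr[AB]\le\tr[A]\tr[B]$ for positive semidefinite matrices, factoring out $\tr[T^k]$ and diagonalizing the $2\times 2$ blocks to produce the closed forms in $A=\tfrac{D+1}{Dd+1}$ and $B=\tfrac{D-1}{Dd-1}$, which yields Eq.~\eqref{NFL_Thm1_Supp}.

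The main obstacle I anticipate is the block-structure bookkeeping: justifying that taking $Y$ on a clean $(n-k)$-site subsystem does not weaken the bound, correctly expanding the four terms arising from $(I-\Sigma^{\otimes k}\otimes I)^{\otimes 2}$ interlaced with the SWAP insertions, and verifying that after the average over $M$ the result is genuinely independent of $\mathcal S$, so that $\mathbb E_{M,\mathcal S}=\mathbb E_M$. The leftover work — diagonalizing the $2\times 2$ transfer matrices and collecting powers of $A$ and $B$ — is routine, and I would finish by checking the endpoints $k\to 0$ (empty set, bound $\to 1$) and $k\to n$ (full set, bound $\to 0$) to confirm tightness at the extremes.
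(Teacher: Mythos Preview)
Your proposal is correct and follows essentially the same route as the paper: norm concentration to reduce to $1-\int dx\,|\langle x|W|x\rangle|^2$, the block form $W=e^{i\theta}(I-\Sigma^{\otimes k}\otimes I)+\Sigma^{\otimes k}\otimes Y$ with $Y$ supported on a clean $(n-k)$-qudit factor, the site-wise Weingarten/transfer-matrix computation with the same $2\times 2$ matrix $T$, Haar averaging over $Y$, and finally the bound $\tr[AB]\le\tr[A]\tr[B]$ for positive semidefinite matrices to extract the factors $A^k+B^k$ and $(dAB)^{n-k}$. The intermediate identity you state, $\mathbb{E}_M[R_M(P_{\mathcal S})]=1-(1-\tfrac{2}{d^k})(1+(dAB)^n)-\tr[T^k(\mathrm{diag}(d,d^2)T)^{n-k}]-\tr[T^k(\mathrm{diag}(d,1)T)^{n-k}]$, is exactly the paper's Eq.~\eqref{Supp_Y_tk_result}, and the endpoint checks $k\to 0,n$ match as well.
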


We numerically verify the result we obtain in Ineq.~(\ref{Supp_Y_tk_ineq}), see Fig.~\ref{Risk_M_Log_n40_D2_d2}.

\section{No-free-lunch theory for 2D tensor network machine learning model}
Now we proceed to prove the no-free-lunch theorem for the 2D TN based machine learning models. Specifically, we focus on the case of learning target quantum unitary operator based on the input of 2D projected entangled pair states (PEPSs), where the training states are encoded into unitary-embedded PEPs, and the variational structures are comprised of unitary quantum circuits. In constructing PEPSs, we apply a collection of $D^4d^2$-dimensional unitary tensors which are locally connected to neighboring tensors by virtual bond of dimension $D$ and the initial states $\ket{0}$ by physical bond of dimension $d$. This arrangement allows for the encoding of information into unitary tensors as input and output. The theorem that will be demonstrated within this chapter is restated as follows:
\begin{theorem}[NFL theorem for PEPS]\label{theorem:2D_Supp} 
Define the risk function $R_M(P_{\mathcal{S}})$ in Eq.~(\ref{Risk_error}) for learning a target $L^2$-qubit unitary $M$ based on the input of PEPSs, where $P_{\mathcal{S}}$ represents the hypothesis unitary learned from the training set $\mathcal{S}$. Given a linear independent training set with size $t_k = d^{L^2}-d^{L^2-k}$, the integer $k\in [1,L^2-1]$, $d$ is the physical dimension and virtual dimension of PEPS, respectively, and $L^2$ denotes the qubit number of the system. The average risk is lower bounded by
\begin{equation}\label{NFL_Thm2_Supp}
\begin{aligned}
&\mathbb{E}_{M,\mathcal{S}} \left[R_M(P_{\mathcal{S}})\right] \geq 1-(1+c(0.7)^L)\left[1-\frac{2}{d^{k}}+(1+\frac{1}{d^{L^2-k}})\left(\frac{2D^4 d-2}{D^4 d^3-d}\right)^{k}\left(\frac{1+D}{2D}\right)^{2k}(1+G(1/d,1/D^2))^{2l}\right],
\end{aligned}
\end{equation} 
where  $l=\lceil \sqrt{k} \rceil$, $G(q,p)=\frac{p}{2}\left(\sqrt{\frac{(1+q)(1+q-qp)}{1-q(2+p)+q^2(1-p)}}-1\right)$, $D$ is the bond dimension of the PEPS and $c$ is a constant. 
\end{theorem}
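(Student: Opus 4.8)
The plan is to follow the skeleton of the proof of Theorem~\ref{theorem:1D_Supp}, upgrading the one-dimensional transfer-matrix contraction to a two-dimensional one and then taming the resulting finite-lattice Ising partition function combinatorially. First I would establish the 2D analogue of the norm concentration in Eq.~(\ref{MPS_Concentration}) for unitary-embedded PEPS (again a Chebyshev bound once the second moment is under control), so that the risk collapses to $R_M(P_{\mathcal S})\to 1-\int dx\,|\langle x|M^\dagger P_{\mathcal S}|x\rangle|^2$. For a linearly independent training set of size $t_k=d^{L^2}-d^{L^2-k}$, proper learning forces $W=M^\dagger P_{\mathcal S}=e^{i\theta I_{t_k}}\oplus Y$ with $Y$ a Haar-random $d^{L^2-k}$-dimensional unitary which, without loss of generality, acts on an $(L^2-k)$-qudit subsystem; averaging $R_M$ over $M$ then reduces to averaging over $Y$. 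Writing $W\otimes W^\dagger$ through the diagonal rank-one projector $\Sigma=\mathrm{Diag}_d\{0,\dots,0,1\}$ decomposes $\mathbb{E}_M[R_M(P_{\mathcal S})]$ into a ``bulk'' piece $(1-2/d^{k})$ times a moment with identity insertions, plus a $\Sigma^{\otimes k}\otimes Y$ piece to be bounded, exactly as in the 1D argument.

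Second, I would evaluate the second moment $\int dx\,|\langle x|W|x\rangle|^2$ by the Weingarten calculus tensor-by-tensor: each local $D^4d^2$-dimensional PEPS tensor $U^{(i)}$ forms an approximate unitary $2$-design, so its $2$-moment integral produces on each virtual bond a symmetric or antisymmetric pairing $S^{(i)}/A^{(i)}$ carrying Weingarten weights $\tfrac{1}{(D^4d^2)^2-1}$ and $\tfrac{1}{(D^4d^2)^3-D^4d^2}$. Contracting over the $L\times L$ square lattice turns the moment into the partition function of a classical Ising-type model with $\mathbb{Z}_2$ spins on the vertices, $2\times2$ transfer blocks on the edges (the 2D counterpart of the matrix $T$), and $\mathrm{SWAP}_i$ insertions on the $k$ sites carried by $\Sigma^{\otimes k}$, which may be taken to form a roughly $l\times l$ block with $l=\lceil\sqrt k\rceil$. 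This finite 2D Ising partition function is the crux: unlike the chain it has no elementary transfer-matrix eigenvalue.

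Third --- and this is the main obstacle --- rather than evaluating that partition function exactly I would bound it via the ``puzzle of polyominoes''. Expanding around the all-symmetric reference configuration, each excited configuration decomposes into connected clusters of flipped spins whose weight is controlled by their boundary, so the total correction is dominated by and upper bounded through a sum over \emph{directed} polyominoes fitting inside the $l\times l$ region. That sum is a generating-function evaluation: substituting $q=1/d$ and $p=1/D^2$ into the closed-form area--perimeter generating function $G(q,p)=\tfrac{p}{2}\!\left(\sqrt{\tfrac{(1+q)(1+q-qp)}{1-q(2+p)+q^2(1-p)}}-1\right)$ of directed polyominoes from Refs.~\cite{Liu2023Theory,Bousquet-Melou1998New} yields the factor $(1+G(1/d,1/D^2))^{2l}$, while the per-site transfer weights of the $k$ ``$\Sigma$-decorated'' vertices give the factors $\big(\tfrac{2D^4d-2}{D^4d^3-d}\big)^{k}$ and $\big(\tfrac{1+D}{2D}\big)^{2k}$, playing the role that $A^k+B^k$ played in the 1D bound. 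The slack between the polyomino overcount, the true partition function, and the finite-lattice boundary effects is absorbed into the prefactor $(1+c(0.7)^L)$, with $0.7$ a crude bound on the relevant subleading weight and $c$ an absolute constant.

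Finally I would assemble the estimates: bound the $\Sigma^{\otimes k}\otimes Y$ moment using the positive-semidefinite trace inequalities $|\tr(FA^\dagger B)|^2\le\tr(FA^\dagger A)\tr(FB^\dagger B)$ and $\tr(AB)\le\tr A\,\tr B$ (as in the 1D proof) to factor the $k$ projected sites from the $L^2-k$ free sites, pick up the $(1+1/d^{L^2-k})$ term, and combine with the bulk $(1-2/d^{k})$ term to reach Eq.~(\ref{NFL_Thm2_Supp}). As consistency checks I would confirm the endpoints: $k\to0$ (empty training set, $t_0=0$) gives $\mathbb{E}_{M,\mathcal S}[R_M(P_{\mathcal S})]\to1$ and $k\to L^2$ (full training set) gives $\to0$, reproducing the expected NFL behaviour. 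The genuinely hard step is the third one --- certifying that the directed-polyomino generating function dominates the 2D partition function up to only an $O((0.7)^L)$ relative error, which requires carefully controlling the sign pattern of the Weingarten weights and the admissible cluster geometries on the finite lattice.
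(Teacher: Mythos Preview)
Your outline captures the overall architecture correctly: the Chebyshev norm-concentration, the reduction $R_M\to 1-\int|\langle x|W|x\rangle|^2$, the block form $W=e^{i\theta I_{t_k}}\oplus Y$, the $\Sigma$-projector expansion of $W\otimes W^\dagger$ into the five pieces $Z_1,\dots,Z_5$, and the rewriting of the second moment as a 2D Ising partition function all match the paper's proof essentially verbatim.

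The geometry in your third paragraph, however, is inverted, and this is exactly the step you flag as hard. In the paper the $k$ trained sites are gathered into a roughly $l\times l$ zone $A$; on those sites the local weight is a function $g(\sigma_{x,y},\sigma_{x+1,y},\sigma_{x,y+1})$ that is \emph{spin-flip symmetric}, $g(\sigma)=g(-\sigma)$. It is that symmetry (iterated site by site from the top-left corner of $A$), not any polyomino count, that produces the per-site bound $\tfrac12\big(\tfrac{1+D}{D}\big)^2\tfrac{D^4d-1}{D^4d^3-d}$ and hence the factors $\big(\tfrac{2D^4d-2}{D^4d^3-d}\big)^{k}\big(\tfrac{1+D}{2D}\big)^{2k}$. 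The directed polyominoes live \emph{outside} $A$: the crucial structural fact is $f(\uparrow,\downarrow,\downarrow)=0$, which forces every non-vanishing excited configuration outside $A$ to be an excited-spin-string that is either a cycle ESS winding around the torus (area $\ge L$, contributing the $c(0.7)^L$ term) or an ESS rooted at one of the $\le 2l$ sites adjacent to the upper/left boundary of $A$. Summing over independent rooted ESSs at those $2l$ boundary sites is what yields $(1+G(1/d,1/D^2))^{2l}$. Your picture of ``polyominoes fitting inside the $l\times l$ region'' would neither produce the exponent $2l$ nor explain why \emph{directed} polyominoes are the right objects. A smaller discrepancy: the inside/outside factorization of $Z_4$ is done via the positivity inequality $\sum_{\vec\sigma}\prod_{A}g\prod_{T\setminus A}f\le\big(\sum_{\vec\sigma}\prod_{A}g\big)\big(\sum_{\vec\sigma}\prod_{T\setminus A}f\big)$, not the PSD trace inequalities from the 1D proof, and $Z_5$ follows from the observation that Haar-averaging over $Y$ sends $f(\sigma)\mapsto f(-\sigma)$ on the outside sites, giving $Z_5=Z_4/d^{L^2-k}$ directly.
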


\subsection{The risk function} \label{supA}
We define the quantum Hilbert space for input and output samples by $\mathcal{H}_x$ and $\mathcal{H}_y$, and choose a subset $\mathcal{S}$ of the samples as training data, 
\begin{eqnarray}
	\mathcal{S}=\{(\ket{\psi_j},\ket{\phi_j}):\ket{\psi_j}\in\mathcal{H}_x,\ket{\phi_j}\in\mathcal{H}_y\}^t_{j=1}, 
\end{eqnarray}
with $\ket{\phi_j}=M\ket{\psi_j}/\bra{\psi_j}M^\dagger M\ket{\psi_j}$. Here, for a unitary embedded tensor-network state $\ket{\Psi}$ with bond dimension $D\ge 2$ and physical dimension $d\ge 2$ in a square lattice with volume $V=L\times L, L\ge 2$, the norm of the unitary embedded tensor-network state is exponentially concentrated around one as $L$ increases:
\begin{eqnarray}
	\Pr_M(|\Braket{\Psi|\Psi}|-1\ge \epsilon)\le O\left(\frac{\kappa^L}{\epsilon^2}\right).
\end{eqnarray}
To prove this Chebyshev inequality for PEPS, we just need to compute the variance of the norm, given by $Var(\Braket{\Psi|\Psi})=\overline{\Braket{\Psi|\Psi}^2}-\overline{\Braket{\Psi|\Psi}}^2$. Assuming that each local unitary forms a 1-design, it is easy to verify that $\overline{\Braket{\Psi|\Psi}}=\int dU_H \Braket{\Psi|\Psi}=1$. And since each local unitary of $\overline{\Braket{\Psi|\Psi}^2}$ forms a unitary 2-design, it will be shown that $\overline{\Braket{\Psi|\Psi}^2}\le 1+c(0.7^L)$ by Corollary \ref{h} in Sec.~\ref{Polyomino_SM}. So $Var(\Braket{\Psi|\Psi})=c(0.7^L)$ and the above inequality can be proved. 

Assuming that the model is perfectly trained, one then obtains the trained operator $P_{\mathcal{S}}$ as:
\begin{equation}
|\tilde{\phi_j}\rangle:=\frac{P_{\mathcal{S}}|\psi_j\rangle}{\langle\psi_j|P_{\mathcal{S}}^\dag P_{\mathcal{S}}|\psi_j\rangle} = e^{i\theta_j}\frac{M |\psi_j\rangle}{\langle\psi_j|M^\dag M|\psi_j\rangle}.
\end{equation}
As such, the risk function can be defined as
\begin{eqnarray}
	R_M(P_\mathcal{S})&=&\frac{1}{4}\int d\ket{\psi}\left\|\frac{M\ket{\psi}\bra{\psi}M^\dagger}{\bra{\psi}M^\dagger M\ket{\psi}^2}-\frac{P_\mathcal{S}\ket{\psi}\bra{\psi}P_\mathcal{S}^\dagger}{\bra{\psi}P_\mathcal{S}^\dagger P_\mathcal{S}\ket{\psi}^2}\right\|_1^2\nonumber\\
	&\rightarrow&1-\int d\ket{\psi}|\bra{\psi}M^\dagger P_\mathcal{S}\ket{\psi}|^2.
\end{eqnarray}

In the previous work \cite{Liu2023Theory}, calculating the second moment integral of 2D random PEPS is transformed into determining the partition function of 2D classical Ising model. The idea is that the second moment integral over Haar measure $\int dU_H U\otimes U^\dagger\otimes U\otimes U^\dagger$ forming the approximate unitary 2-design can be represented as a site on Ising model lattice with uniform external fields, similar to the 1D case. For a PEPS $\ket{\Psi}$ consisting of subsystems with periodic boundary conditions in a $L\times L$ lattice, the coordinate set of sites in the lattice is $T_L=\mathbb{Z}_L\times\mathbb{Z}_L$, and $\sigma_{x,y}$ is used to represent the spin state of the site at $(x,y)\in T$. Further, a configuration of the lattice is determined when each site takes a particular spin state, which can be written as a vector $\vec{\sigma}$. Through the above mapping and notation, we have
\begin{eqnarray}
	\int d\ket{\psi}|\bra{\psi}M^\dagger P_\mathcal{S}\ket{\psi}|^2=\sum_{\vec{\sigma}}\prod_{x,y={1...L}}^{}F(\sigma_{x,y},\sigma_{x+1,y},\sigma_{x,y+1})W\otimes W^\dagger,
\end{eqnarray}
where $W$ refers to $M^\dagger P_\mathcal{S}$. Here we define the transfer function as
\begin{equation*}
	F(\sigma_{x,y},\sigma_{x+1,y},\sigma_{x,y+1})W_{x,y}\otimes W_{x,y}^\dagger=\ipic{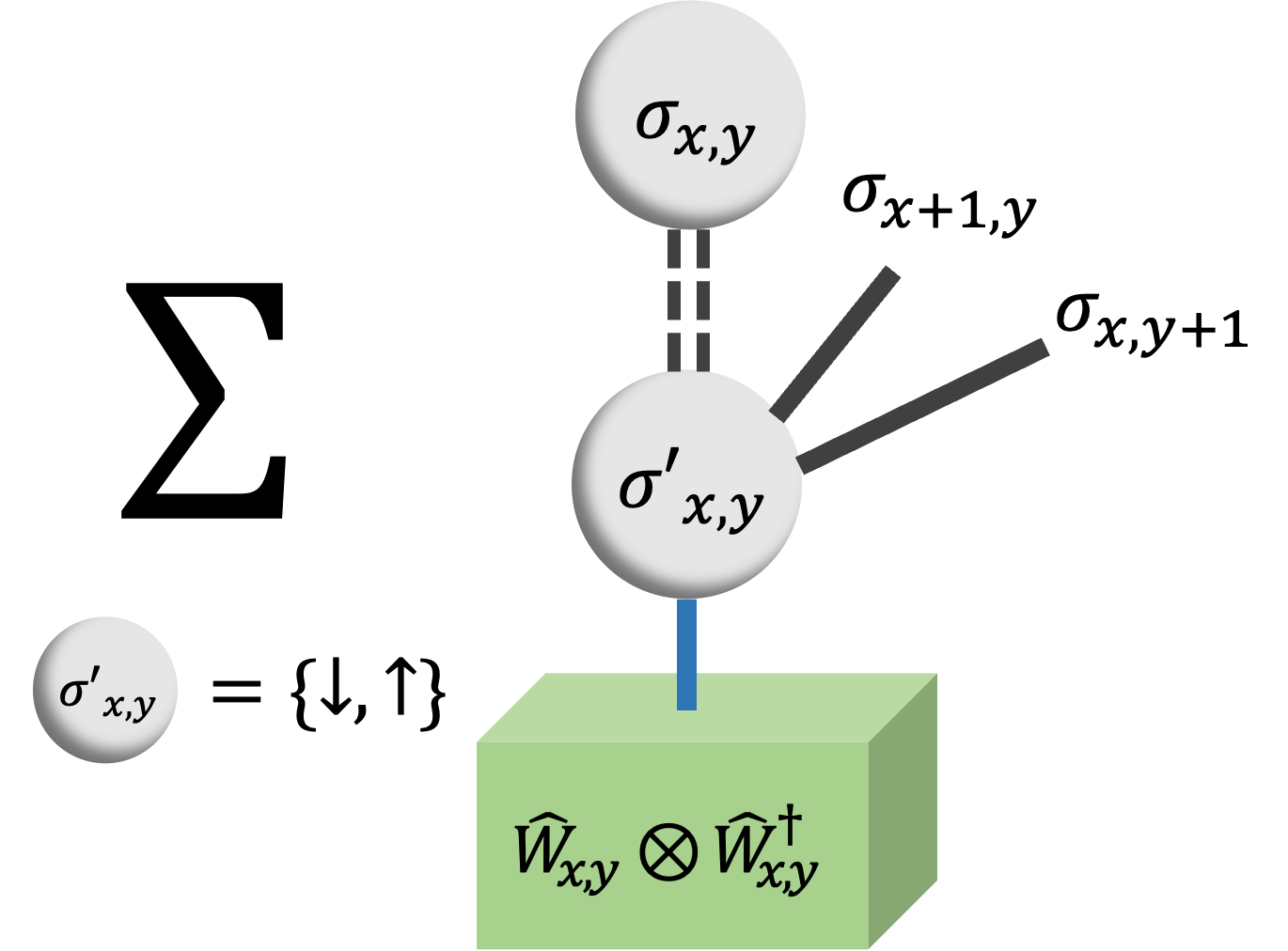}{0.25},
\end{equation*}
Yet we need to clarify that the notation $W_{x,y}$ just denotes how the local indices of $W_{i_{1,1}i_{1,2}...i_{x,y}...i_{L,L},j_{1,1}j_{1,2}...j_{x,y}...j_{L,L}}$ contract on the $(x,y)$ physical site, rather than the reduced matrix under partial trace, so ${\rm SWAP}_{x,y}$ denotes the swapping operation on the $(x,y)$ site of $W^\dag$ and $W$. 
The following values can be obtained:
\begin{eqnarray}
	F(\downarrow,\downarrow,\downarrow)W_{x,y}\otimes W_{x,y}^\dagger&=&\frac{D^4 d\tr(W_{x,y}\otimes W_{x,y}^\dagger)-\tr(\SWAP_{x,y}W_{x,y}\otimes W_{x,y}^\dagger)}{D^4 d^3-d},\nonumber\\
	F(\downarrow,\uparrow,\downarrow)W_{x,y}\otimes W_{x,y}^\dagger=F(\downarrow,\downarrow,\uparrow)W_{x,y}\otimes W_{x,y}^\dagger&=&\frac{D^3 d\tr(W_{x,y}\otimes W_{x,y}^\dagger)-D\tr(\SWAP_{x,y}W_{x,y}\otimes W_{x,y}^\dagger)}{D^4 d^3-d},\nonumber\\
	F(\downarrow,\uparrow,\uparrow)W_{x,y}\otimes W_{x,y}^\dagger&=&\frac{D^2 d\tr(W_{x,y}\otimes W_{x,y}^\dagger)-D^2 \tr(\SWAP_{x,y}W_{x,y}\otimes W_{x,y}^\dagger)}{D^4 d^3-d},\nonumber\\
	F(\uparrow,\uparrow,\uparrow)W_{x,y}\otimes W_{x,y}^\dagger&=&\frac{D^4 d\tr(\SWAP_{x,y}W_{x,y}\otimes W_{x,y}^\dagger)-\tr(W_{x,y}\otimes W_{x,y}^\dagger)}{D^4 d^3-d},\nonumber\\
	F(\uparrow,\uparrow,\downarrow)W_{x,y}\otimes W_{x,y}^\dagger=F(\uparrow,\downarrow,\uparrow)W_{x,y}\otimes W_{x,y}^\dagger&=&\frac{D^3 d\tr(\SWAP_{x,y}W_{x,y}\otimes W_{x,y}^\dagger)-D\tr(W_{x,y}\otimes W_{x,y}^\dagger)}{D^4 d^3-d},\nonumber\\
	F(\uparrow,\downarrow,\downarrow)W_{x,y}\otimes W_{x,y}^\dagger&=&\frac{D^2 d\tr(\SWAP_{x,y}W_{x,y}\otimes W_{x,y}^\dagger)-D^2 \tr(W_{x,y}\otimes W_{x,y}^\dagger)}{D^4 d^3-d}.
\end{eqnarray}
To prove Theorem \ref{theorem:2D}, one  needs to calculate  the average of the risk function: 
\begin{eqnarray}
    \mathbb{E}_{M,\mathcal{S}} \left[R_M(P_{\mathcal{S}})\right]&=&1-\int dW \sum_{\vec{\sigma}}\prod_{x,y}F(\sigma_{x,y},\sigma_{x+1,y},\sigma_{x,y+1})W\otimes W^\dagger \nonumber\\
    &=&1-\int dW \sum_{\vec{\sigma}}\mathcal{A}[\vec{\sigma}]W\otimes W^\dagger\nonumber\\
    &=&1-Z. 
\end{eqnarray}
$\mathcal{A}[\vec{\sigma}]W\otimes W^\dagger$ denotes the contribution of one of the various configurations $\vec{\sigma}$, which now is a function with $D,d$ and $\SWAP_{x,y}$ acting on $W\otimes W^\dagger$. Thus we can obtain the partition function by summing of all the contributions of various $\vec{\sigma}$. In the subsequent sections, we will demonstrate how to perform the calculation using the polyomino method.

\subsection{The polyominoes}\label{Polyomino_SM}

To address the 2D TN problem, we initially introduce a statistical model for enumerating lattice configurations on a two-dimensional plane, known as the polyomino. The focus of this model is on directed figures that encompass a specific number of sites within an infinite square grid. We will find that the partition function of the two-dimensional Ising model can be translated into a problem sets on a periodic plane of an infinite square lattice grid, making the application of polyomino calculation method particularly effective. We first provide some definitions related to polyominoes \cite{Liu2023Theory}, and then map our problem to this framework. The formal definition of the directed polyomino(shown in Fig.~\ref{poly}) is listed as follows: 
\begin{definition}
	(Directed polyomino). A directed polyomino is a finite subset $\vec{\tau}\subseteq\mathbb{Z}\times\mathbb{Z}$ ($\mathbb{Z}$ denotes the set of integers) in the Euclidean plane rooted at $(x_0,y_0)$ such that
	\begin{itemize}
		\item $(x_0,y_0)\in \vec{\tau}$,
		\item for $(x, y) \in \vec{\tau}$ and $(x, y) \neq\left(x_{0}, y_{0}\right)$, at least one of $(x+1, y)\in\vec{\tau}$ and $(x, y+1)\in\vec{\tau}$ are in $\vec{\tau}$ as well. 
	\end{itemize}
\end{definition}

\begin{figure}[htbp]
    \centering
    \includegraphics[width=0.9\linewidth]{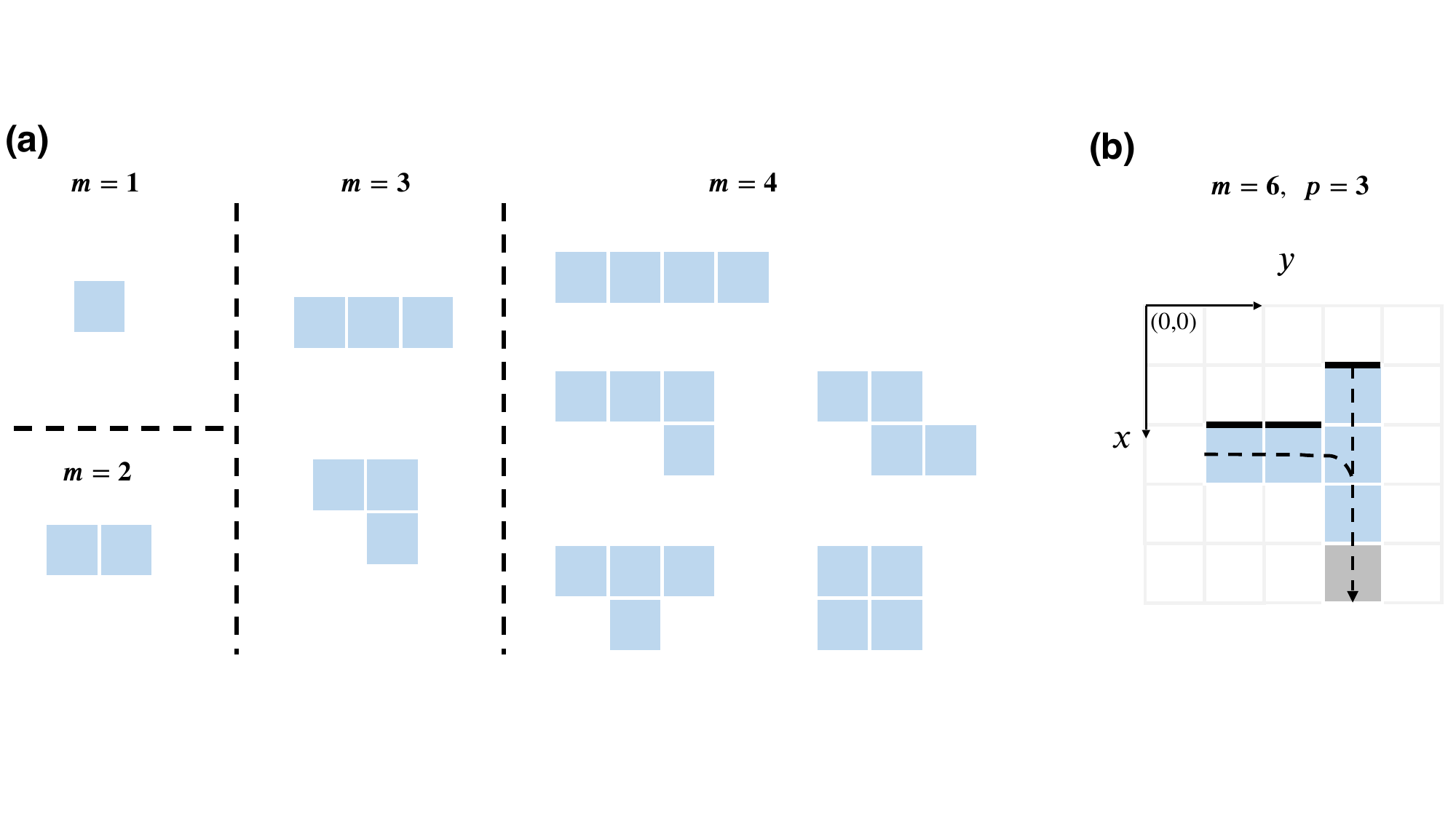}
    \vspace{-40pt}
    \caption{Illustrations of the general polyomino and the directed polyomino. (a) Illustration of the various types of general polyominoes with the area ranging from 1 to 4. Here we regard the different rotations of a polyomino as being of the same type. (b) Illustration of a directed polyomino rooted at the gray site with area m = 6, perimeter p = 14, and upper perimeter n = 3 (depicted as the black lines).}
    \label{poly}
\end{figure} 

The polyominoes can be characterized by their areas, perimeters and upper perimeters as their properties, whose formal definitions are as follows:
\begin{definition}
	(area, perimeter and upper perimeter). Let $\vec{\tau}$ be a directed polyomino.
	\begin{itemize}
		\item The area $m$ of $\vec{\tau}$ is the size of the polyomino, i.e.$|\vec{\tau}|$.
		\item The perimeter $p$ of $\vec{\tau}$ is the number of edges on the boundary of $\vec{\tau}$. Formally speaking, $p=|\{(x,y)\in \mathbb{Z}\times \mathbb{Z}:\tau_{x,y}\neq \tau_{x+1,y}\}|+|\{(x,y)\in \mathbb{Z}\times \mathbb{Z}:\tau_{x,y}\neq \tau_{x,y+1}\}|$.
		\item The upper perimeter $n$ of $\vec{\tau}$ is the number of horizontal edges on the top boundary of $\vec{\tau}$. Formally speaking,$n=|\{(x,y)\in \mathbb{Z}\times \mathbb{Z}:\tau_{x,y}=0,\tau_{x+1,y}=1\}|$. 
	\end{itemize}
\end{definition}
Then we can enumerate directed polyominoes exactly via their generating function by the following lemma:
\begin{lemma}
	Let $D_{m,n}$ be the number of directed polyominoes rooted at (0,0) with area and upper perimeter $m,n$ respectively. $p$ and $q$ are two variables. We introduce the generating function of $D_{m,n}$ as
	\begin{equation}
		G(q,p)=\sum_{m,n}D_{m,n}q^m p^n.
	\end{equation}
	When $p,q\in (0,1]$ such that $|q(2+p)-q^2(1-p)|<1$, the power series converges to a finite value
	\begin{equation}
		G(q,p)=\frac{p}{2}\left(\sqrt{\frac{(1+q)(1+q-qp)}{1-q(2+p)+q^2(1-p)}}-1\right).
	\end{equation}
\end{lemma}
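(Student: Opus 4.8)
The plan is to show that $G(q,p)$ is an algebraic function of degree two over $\mathbb{Q}(q,p)$ --- the square root in the claimed formula is exactly the fingerprint of this --- by deriving a functional equation for directed polyominoes and solving it, and then to settle the convergence claim by a standard non-negativity/dominant-singularity argument; I would follow the line of \cite{Bousquet-Melou1998New,Liu2023Theory}. Reflecting the lattice through the root we may assume that every non-root cell of $\vec\tau$ has its \emph{left} or its \emph{bottom} neighbour in $\vec\tau$, so $\vec\tau$ grows up and to the right of its root $o$, with $o$ sitting at the bottom of the leftmost column. The structural heart is the root-removal decomposition: every cell $c\neq o$ has a monotone (left/down) path inside $\vec\tau$ from $c$ to $o$, and the last step of that path enters $o$ either from $(1,0)$ or from $(0,1)$; hence $\vec\tau\setminus\{o\}=\tau_R\cup\tau_U$, where $\tau_R$ is a (possibly empty) directed polyomino rooted at $(1,0)$, $\tau_U$ one rooted at $(0,1)$, and the two may overlap in a region that is itself controlled combinatorially. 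Translated into generating functions --- with area marked by $q$, upper perimeter by $p$, and an auxiliary ``catalytic'' variable $s$ recording the relevant interface statistic between $\tau_R$ and $\tau_U$ --- this decomposition yields a functional equation of Temperley type
\[
G(q,p,s)=\Phi\bigl(q,p,s,\,G(q,p,s),\,G(q,p,1)\bigr),
\]
with $\Phi$ polynomial and quadratic in the entry $G(q,p,s)$ (the square coming from the two sub-polyominoes $\tau_R,\tau_U$); the rational ingredients of $\Phi$ come from the finitely many ways a cell is supported from the left or from below --- each new cell a factor $q$, each new top edge a factor $p$, with geometric resummations producing denominators linear in $q$ and in $qs$.

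Next I would solve this equation by the kernel method (following Bousquet--M\'elou and Jehanne): differentiate the polynomial relation with respect to the catalytic variable $s$, impose the vanishing of that derivative to isolate the unique branch $s=s_\star(q,p)$ analytic at $q=0$, and substitute $s=s_\star$ back to obtain a polynomial equation for the single unknown $G(q,p):=G(q,p,1)$. That polynomial equation is quadratic, so $G$ has the form $(\text{rational in }q,p)\pm\sqrt{(\text{rational in }q,p)}$; carrying the algebra through, simplifying the radicand to $\tfrac{(1+q)(1+q-qp)}{1-q(2+p)+q^2(1-p)}$, and choosing the branch and prefactor so that $G(q,p)=pq+O(q^2)$ (the only directed polyomino of area one being a single cell) gives exactly
\[
G(q,p)=\frac{p}{2}\left(\sqrt{\frac{(1+q)(1+q-qp)}{1-q(2+p)+q^2(1-p)}}-1\right).
\]
As an independent check I would specialise $p=1$, where the right side collapses to the classical directed-animal series $\tfrac12\bigl(\sqrt{(1+q)/(1-3q)}-1\bigr)=q+2q^2+5q^3+13q^4+\cdots$, and verify the first mixed coefficients $D_{m,n}$ against a direct enumeration of small polyominoes.

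For the convergence statement, the coefficients $D_{m,n}$ are non-negative and $\sum_n D_{m,n}$ grows at most exponentially in $m$ (each of the $m-1$ non-root cells is attached to a previously placed one in a bounded number of ways, so $\sum_n D_{m,n}\le C^m$ for an absolute constant $C$), hence for small positive $q,p$ the series $\sum_{m,n}D_{m,n}q^mp^n$ converges absolutely and, by the construction above, equals the closed form. Because the coefficients are non-negative, the domain of positive convergence extends precisely up to the dominant positive singularity of the closed form, which occurs where $1-q(2+p)+q^2(1-p)=0$; on the region $p,q\in(0,1]$ one has $q(2+p)-q^2(1-p)=q\bigl[(2+p)-q(1-p)\bigr]\ge 0$, so the hypothesis $|q(2+p)-q^2(1-p)|<1$ says exactly that $(q,p)$ lies strictly below that singularity, which keeps the denominator positive and the numerator $(1+q)(1+q-qp)=(1+q)(1+q(1-p))$ positive; therefore the series converges on the whole (connected) region and sums to the stated value.

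The step I expect to be the main obstacle is setting up the functional equation: identifying the right finite interface data between $\tau_R$ and $\tau_U$ --- directed polyominoes are not column-convex, so the overlap region needs genuine care --- and, crucially, arranging the transfer rules so that the \emph{upper} perimeter, not merely the area, is tracked; this bookkeeping is exactly what forces the kernel-method output to carry the specific radicand $\tfrac{(1+q)(1+q-qp)}{1-q(2+p)+q^2(1-p)}$. Once the equation is correctly written down, the kernel-method elimination, the branch selection, and the non-negativity argument for convergence are all comparatively routine, and one may cross-check the whole result through Viennot's heaps-of-pieces correspondence, which turns directed polyominoes into dimer pyramids and yields the quadratic relation more directly.
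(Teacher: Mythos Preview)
The paper does not actually prove this lemma: it is stated as a known combinatorial result, with the closed form for the generating function quoted from the enumerative literature (the paper's citation \cite{Bousquet-Melou1998New}, and the surrounding framework from \cite{Liu2023Theory}). No derivation of the functional equation, kernel-method elimination, or convergence analysis appears anywhere in the paper or its Supplemental Materials; the lemma is used purely as an input to bound the partition-function sums that arise from the $2$-moment integral of the random PEPS.

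Your proposal therefore goes well beyond what the paper does. The route you sketch --- a root-deletion decomposition with a catalytic variable, a Temperley-type functional equation, kernel-method elimination to a quadratic, branch selection via $G(q,p)=pq+O(q^2)$, the $p=1$ sanity check against the Dhar series $\tfrac12\bigl(\sqrt{(1+q)/(1-3q)}-1\bigr)$, and a Pringsheim-type argument for the convergence region --- is exactly the standard line of attack in the Bousquet--M\'elou tradition and is the content behind the paper's citation. Your honest flagging of the delicate step (handling the non-column-convex overlap of $\tau_R$ and $\tau_U$ while simultaneously tracking the upper perimeter) is accurate; in practice this is what the catalytic variable and the heaps-of-pieces bijection are there to tame, and your mention of Viennot's correspondence as a cross-check is the right alternative derivation. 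In short: your plan is sound and is essentially the proof the paper is implicitly invoking by citation, but the paper itself offers none.
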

This lemma implies that when $q,p\in (0,1]$ such that $|q(2+p)+q^2(1-p)|\le 1$, the power series $\sum_{m,n}D_{m,n}q^mp^n$ converges to a finite value $G(q,p)$. We will see the problem of partition function for two-dimensional lattice having a similar power-series form reduces to determining the number of configurations $D_{m,n}$. 

\subsection{Apply polyominoes to the calculation of PEPS 2-moment}
In Sec. \ref{supA}, we have mentioned that calculating the integral $\int dU_H |\braket{\Psi|\hat{O}|\Psi}|^2$ of PEPS is transformed into calculating the partition function
\begin{eqnarray}\label{LL}
	\int dU_H |\braket{\Psi|\hat{O}|\Psi}|^2 =\sum_{\vec{\sigma}}\prod_{x,y}F(\sigma_{x,y},\sigma_{x+1,y},\sigma_{x,y+1})\hat{O}\otimes \hat{O}^\dagger.
\end{eqnarray}
Here $\sum_{\vec{\sigma}}$ means the sum over all configuration $\vec{\sigma}$, which is a simple form of $\sum_{\substack{\sigma_{x,y}=\{\uparrow,\downarrow\} \\ (x,y) \in T_L}}$. If $\hat{O}=I$, then we denote $\prod_{x,y}F(\sigma_{x,y},\sigma_{x+1,y},\sigma_{x,y+1})\hat{O}\otimes \hat{O}^\dagger=\prod_{x,y}f(\sigma_{x,y},\sigma_{x+1,y},\sigma_{x,y+1})$, where
\begin{equation}\label{F}
	F(\sigma_{x,y},\sigma_{x+1,y},\sigma_{x,y+1})\hat{O}\otimes \hat{O}^\dagger=\ipic{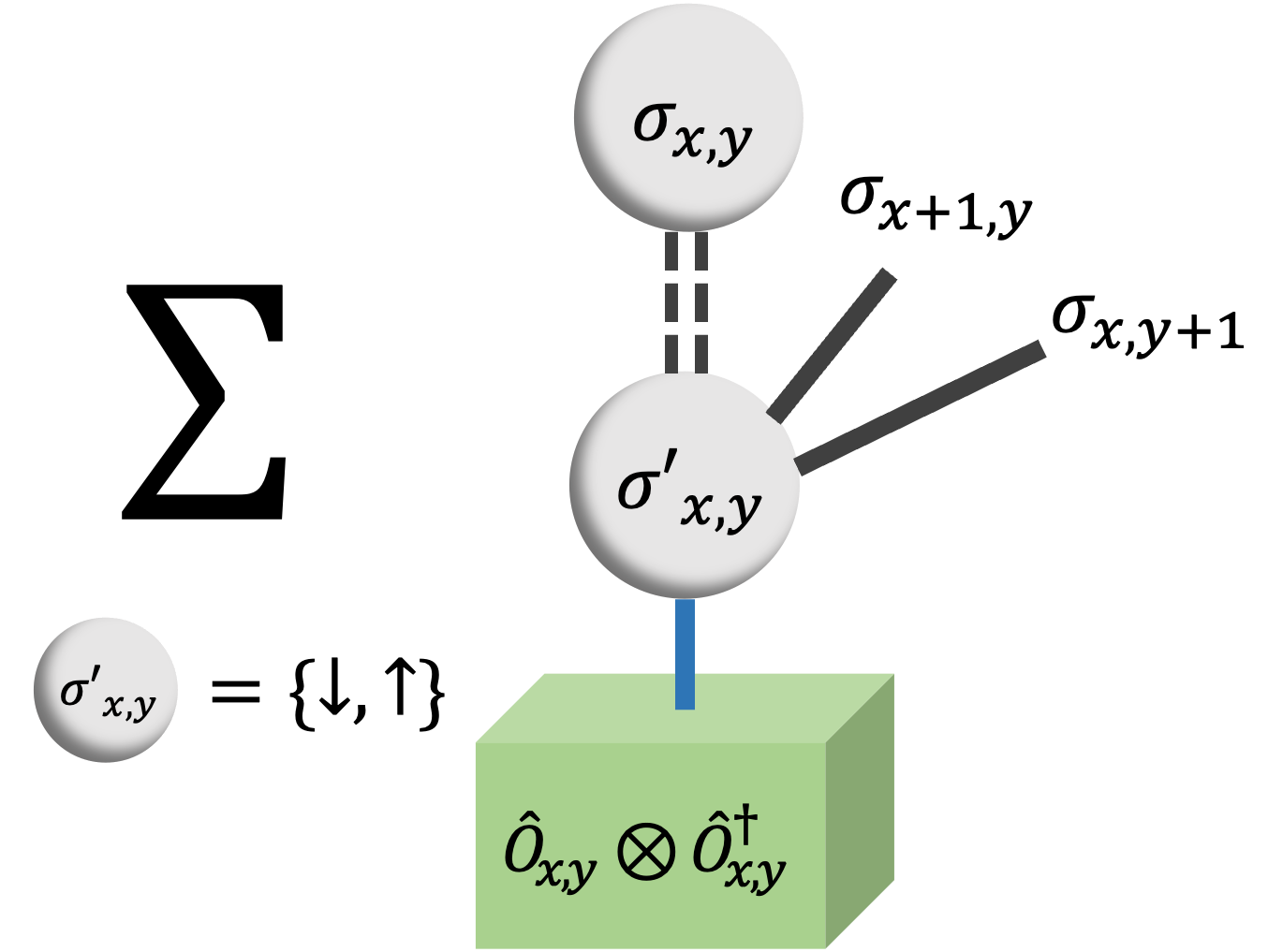}{0.25}, f(\sigma_{x,y},\sigma_{x+1,y},\sigma_{x,y+1})=\ipic{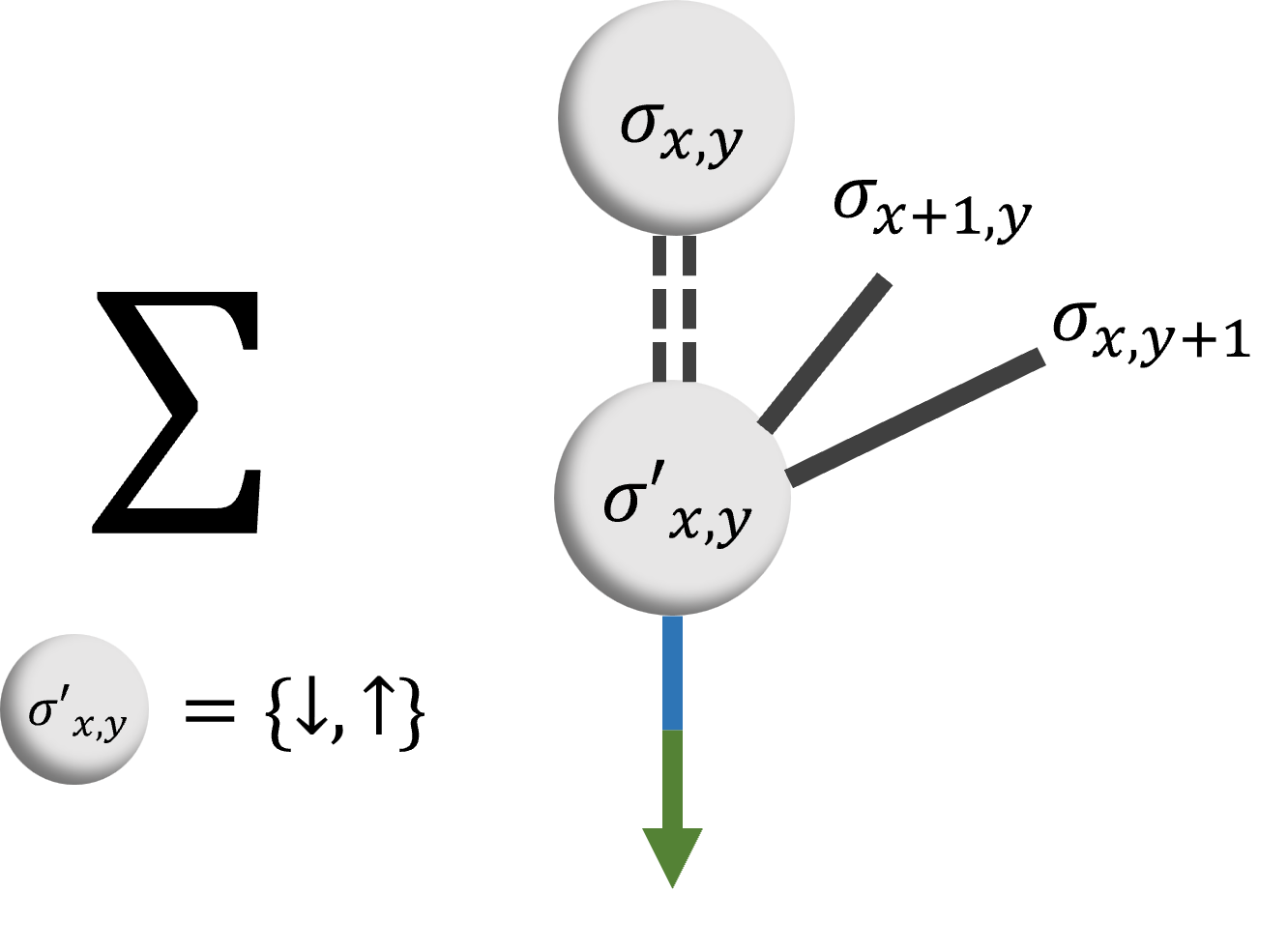}{0.25}.
\end{equation}
Then
\begin{eqnarray}
	\begin{aligned}
		f(\downarrow,\downarrow,\downarrow)&=1,\\
		f(\downarrow,\uparrow,\downarrow)=f(\downarrow,\downarrow,\uparrow)&=\frac{D^3 d^3-Dd}{D^4 d^3-d},\\
		f(\downarrow,\uparrow,\uparrow)&=\frac{D^2 d^3-D^2 d}{D^4 d^3-d},\\
		f(\uparrow,\uparrow,\uparrow)&=\frac{D^4 d^2-d^2}{D^4 d^3-d},\\
		f(\uparrow,\uparrow,\downarrow)=f(\uparrow,\downarrow,\uparrow)&=\frac{D^3 d^2-Dd^2}{D^4 d^3-d},\\
		f(\uparrow,\downarrow,\downarrow)&=0.
	\end{aligned}
\end{eqnarray}

As mentioned above, the partition function can be viewed as the sum of the contribution of different $\vec{\sigma}$. Owing to $f(\uparrow,\downarrow,\downarrow)=0$, one conly needs to consider the configurations that every $\uparrow$ on lattice must have either a right or lower $\uparrow$-neighbor; otherwise, the contribution of the configuration $\prod_{x,y}f(\sigma_{x,y},\sigma_{x+1,y},\sigma_{x,y+1})$ will be 0. The Ising model is defined on a lattice with periodic boundaries, so these $\uparrow$ sites must form at least one loop on a torus.  We refer to these as excited-spin-strings(ESSs), with a formal definition provided in Definition \ref{defofESS}. And for a lattice with $L^2$ sites, each cycle ESS forms a loop with at least $L$ areas. 

For a site $(x,y)$ with $\sigma_{x,y}=\uparrow$: 
\begin{itemize}
	\item If $\sigma_{x+1,y}=\uparrow$, we mark the deriction as $(x,y)\to (x+1,y)$.
	\item If $\sigma_{x+1,y}=\downarrow$ but $\sigma_{x,y+1}=\uparrow$, we mark the deriction as $(x,y)\to (x,y+1)$.
\end{itemize}
In this way, we have constructed a directed graph $G=(V,E)$, where the vertex set $V$ is the set of all $\uparrow$-sites.
\begin{definition}
	(root). A root is a vertex $(x_0,y_0)$ in a directed graph that every vertex $(x,y)$ can be reached through the direction on the graph.
\end{definition}

Here we explain in details the concept of roots. The roots of polyominoes defined on the plain are different from those on the torus. Directed graphs consisting of $\uparrow$ are shown in Fig.~\ref{ESS}. There is only one root if the last vertex has no further extension, while under periodic boundary conditions, the vertices on the boundary can be linked to their counterparts on the opposite side. This connectivity introduces the possibility of the $\uparrow$ forming a closed loop within the directed graph, leading to that each $\uparrow$ can always reach other $\uparrow$ in the loop through the direction of the graph. In such a scenario, all $\uparrow$ in the loop are the roots of the directed graph. 

\begin{figure}[htbp]
	\centering
	\includegraphics[width=0.8\linewidth]{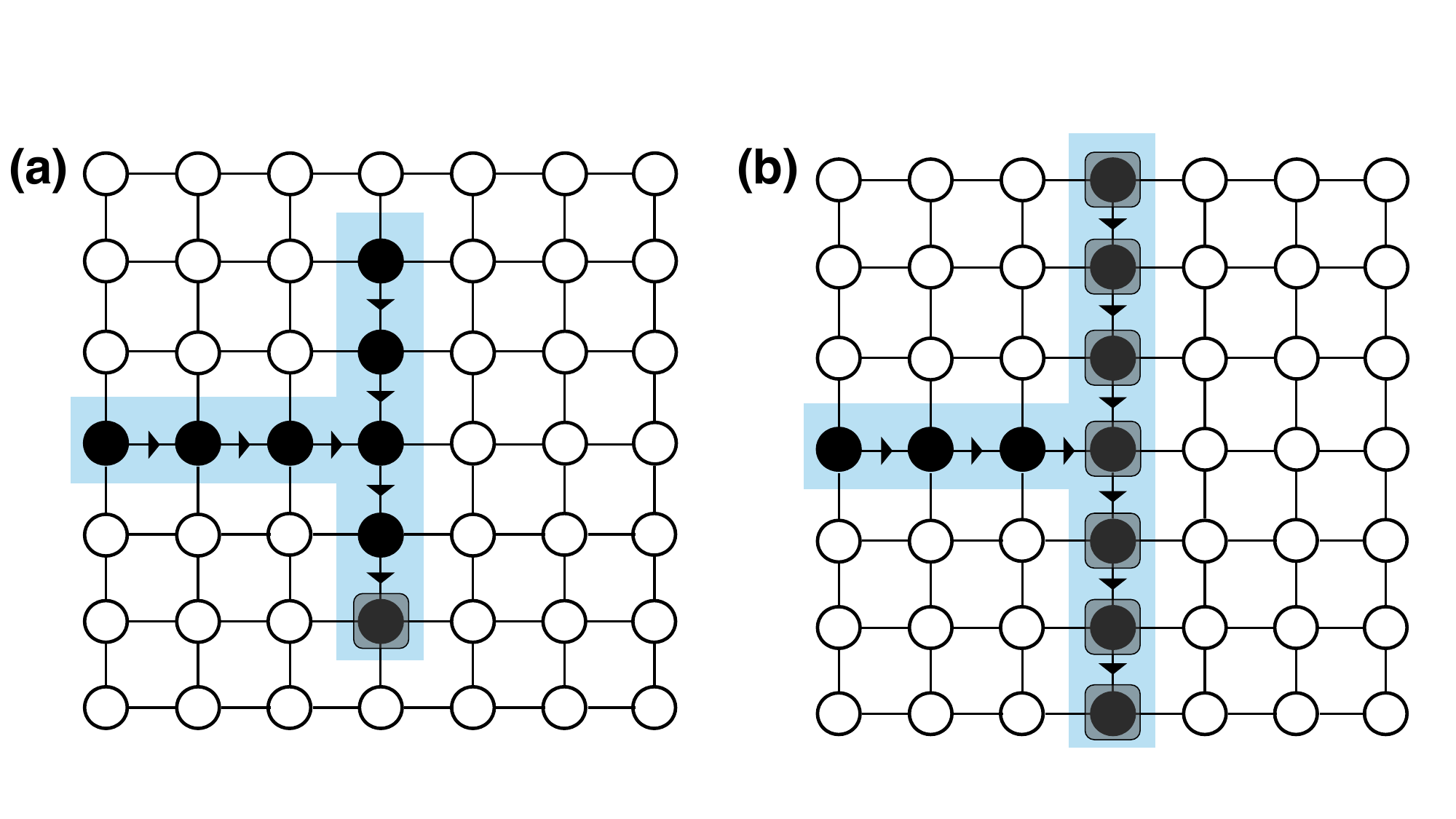}
	\caption{Directed graphs consist of $\uparrow$. Solid and hollow points separately represent the sites of $\uparrow$ and $\downarrow$, while the sites in grey square represents the root of the graph. (a) An example of an ESS which have only one root. (b) An example of a cycle ESS which have more than one root, where all sites on cycle are its roots.}
	\label{ESS}
\end{figure}

Now we have the below definition of the excited-spin-string(ESS):
\begin{definition}\label{defofESS}
	(ESS \& cycle ESS) An ESS in the toric plane $T_L$ rooted at $(x_0,y_0)$ is a connected subset of $\vec{\tau_T}\subseteq \mathbb{Z}_L\times\mathbb{Z}_L$, such that
	\begin{itemize}
		\item $(x_0,y_0)\in \vec{\tau_T}$, 
		\item for $(x,y)\in \vec{\tau_T},(x,y)\neq(x_0,y_0)$, at least one of $(x+1, y)\in\vec{\tau_T}$ and $(x, y+1)\in\vec{\tau_T}$. 
	\end{itemize}
	A cycle excited-spin-string(cycle ESS) is an ESS in which the number of its roots are more than one. 
\end{definition}
It is appropriate to treat an ESS as a polyomino defined on a torus. We take the count of ESSs originating at (0,0) with area $m$ and upper perimeter $n$ by $\tilde{D}_{m,n}$ for the number of torus polyominoes. The following theorem establishes the connection between $\tilde{D}_{m,n}$ for torus polyominoes and $D_{m,n}$ for plane polyominoes: 
\begin{theorem}\label{bridge}
	(Lemma 5 and Lemma 6 in Ref. \cite{Liu2023Theory}) Let $\vec{\sigma}\in \Sigma$ be a configuration with area $m$, perimeter $p$ and upper perimeter $n$, then
	\begin{eqnarray}
		\prod_{x,y}f(\sigma_{x,y},\sigma_{x+1,y},\sigma_{x,y+1})\le q_a^m q_p^p\le q_a^m q_p^{4n},
	\end{eqnarray} 
	where $q_a=f(\uparrow,\uparrow,\uparrow)\le \frac{1}{d}, q_p=f(\downarrow,\downarrow,\uparrow)\le \frac{1}{D}$. And by applying the toric polyominoes with $L^2$ sites we have
	\begin{eqnarray}
		\sum_{\vec{\sigma}}\prod_{x,y}f(\sigma_{x,y},\sigma_{x+1,y},\sigma_{x,y+1})\le \sum_{m,n}\tilde{D}_{m,n}q_a^m q_p^p\le 2\sum_{m,n}\tilde{D}_{m,n}q_a^m q_p^{4n}\le \sum_{m,n}\tilde{D}_{m,n}q_a^m q_p^{2n},
	\end{eqnarray}
	where 
	\begin{eqnarray}
		\tilde{D}_{m,n}\le \sum_{k=1}^{[m/L]}\sum_{c=0}^{k}\sum_{\substack{m_1,...,m_k\ge L\\ m_1+...+m_k=m\\ n_1+...n_k=n+c}}\prod_{i=1}^{k}(L^2D_{m_i,n_i}),
	\end{eqnarray}
	$k$ is the number of ESSs in each configuration.
\end{theorem}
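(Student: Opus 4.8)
The plan is to adapt the combinatorial argument of Ref.~\cite{Liu2023Theory} to the transfer weights $f(\cdot,\cdot,\cdot)$ listed above, in three stages: a per-configuration weight bound, a regrouping of the configuration sum according to its excited-spin content, and an unrolling of toric excited-spin-strings into planar directed polyominoes. \emph{Stage~1 (single configuration):} I would first prove $\prod_{x,y}f(\sigma_{x,y},\sigma_{x+1,y},\sigma_{x,y+1})\le q_a^m q_p^p$ by a charging argument. Since $f(\uparrow,\downarrow,\downarrow)=0$, only configurations in which every up-spin has an up-spin neighbor to its right or above contribute, and for these I would check site by site, directly from the explicit table of $f$-values, that $f(\sigma_{x,y},\sigma_{x+1,y},\sigma_{x,y+1})\le q_a^{a(x,y)}q_p^{e(x,y)}$, where $a(x,y)=1$ if $\sigma_{x,y}=\uparrow$ and $0$ otherwise, and $e(x,y)\in\{0,1,2\}$ counts the two edges $\{(x,y),(x{+}1,y)\}$, $\{(x,y),(x,y{+}1)\}$ whose endpoints carry opposite spins. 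The only nontrivial inequalities are $f(\uparrow,\uparrow,\downarrow)=f(\uparrow,\downarrow,\uparrow)\le q_aq_p$ and $f(\downarrow,\uparrow,\uparrow)\le q_p^2$, each of which, after clearing denominators, reduces to a manifestly nonnegative polynomial in $D,d$; the companion bounds $q_a\le1/d$, $q_p\le1/D$ are of the same kind ($d^2\ge1$, $D^2\ge1$). Taking the product over all sites, the exponent of $q_a$ becomes the number of up-spins $m$ and that of $q_p$ counts every edge with differing endpoints once, i.e.\ the perimeter $p$; then $q_p^p\le q_p^{4n}$ follows (as $q_p<1$) from the geometric bound $p\ge4n$ between the perimeter and the upper perimeter of a directed polyomino, imported from Lemma~5 of Ref.~\cite{Liu2023Theory}.

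\emph{Stage~2 (sum over configurations):} A nonzero-weight configuration on the $L\times L$ torus is a disjoint union of $k$ cycle ESSs; its area is the sum of the ESS areas, and its upper perimeter equals the sum of the ESS upper perimeters up to a correction $0\le c\le k$ from edges cut when the toric cycles are unrolled. Applying Stage~1 and grouping configurations by their ESS decomposition yields $\sum_{\vec\sigma}\prod_{x,y}f\le\sum_{m,n}\tilde D_{m,n}q_a^m q_p^p$; combining $p\ge4n$ with $q_p\le1/D\le1/2$ (so that $2q_p^{4n}\le q_p^{2n}$ for the nontrivial configurations with $n\ge1$) then gives the remaining inequalities in the displayed chain, the benign factor $2$ absorbing the choice of homology direction in the unrolling.

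\emph{Stage~3 (toric to planar polyominoes):} This is the combinatorial heart. A cycle ESS is homologically nontrivial on the torus, hence meets at least one site in every column (or every row), so it has area at least $L$; consequently a configuration of total area $m$ has at most $\lfloor m/L\rfloor$ ESSs, which fixes the range of $k$. Cutting each cycle ESS along a non-contractible line turns it into a planar directed polyomino rooted at one of its roots; there are at most $L^2$ toric placements of such a piece, giving the factor $L^2 D_{m_i,n_i}$, and summing over compositions $m_1+\dots+m_k=m$ with every $m_i\ge L$ and $n_1+\dots+n_k=n+c$ produces exactly the stated bound on $\tilde D_{m,n}$. Inserting this into the generating-function lemma for $G(q,p)$ (with $q\mapsto q_a\le1/d$, $p\mapsto q_p^2\le1/D^2$) then converts the configuration sum into an expression in $G(1/d,1/D^2)$ of the form used in Theorem~\ref{theorem:2D_Supp}.

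The step I expect to be the main obstacle is Stage~3: making the unrolling map from toric cycle ESSs to rooted planar directed polyominoes precise, and fixing the over-counting constants --- the $L^2$ placements per ESS and the correction $c$ --- so that no configuration is undercounted, while also verifying the area-$\ge L$ bound in every homology class. By contrast Stage~1 is routine algebra once the $f$-table is in hand, and the geometric input $p\ge4n$ together with the unrolling bookkeeping are exactly Lemmas~5--6 of Ref.~\cite{Liu2023Theory}, which I would re-derive for the present weights $f$.
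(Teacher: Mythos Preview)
The paper does not give its own proof of this statement: it is quoted verbatim as Lemmas~5 and~6 of Ref.~\cite{Liu2023Theory}, and immediately after it the authors write ``Here, we only state the conclusions we need. One can find the details of proof for Theorem~\ref{bridge} and Corollary~\ref{h} in Ref.~\cite{Liu2023Theory}.'' Your three-stage plan (per-site charging $f\le q_a^{a}q_p^{e}$ from the $f$-table, regrouping the configuration sum by ESS content, and unrolling toric cycle ESSs into rooted planar directed polyominoes with the $L^2$ placement factor and the correction $c$) is exactly the argument of that reference and is correct as outlined, so there is nothing to compare against in the present paper.
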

This theorem demonstrates that every contribution of each configuration $\vec{\sigma}$ can be limited by their area $m$ and upper perimeter $n$ with two parameter $q_a, q_p$. Consequently, the partition function is effectively bounded by considering all the potential configurations. Moreover, the number $\tilde{D}_{m,n}$ of torus polyominoes can be linked with $D_{m,n}$ via the above inequality allowing us to determine it. According to that we can figure out the upper bound of the 2-moment integral $\int dU_H |\braket{\Psi|\Psi}|^2$ by Corollary \ref{h}. 
\begin{corollary}\label{h}
	(Theorem 1 in Ref. \cite{Liu2023Theory})The upper bound of the partition function 
	$$\int dU_H |\Braket{\Psi|\Psi}|^2=\sum_{\vec{\sigma}}\prod_{x,y}f(\sigma_{x,y},\sigma_{x+1,y},\sigma_{x,y+1})$$
	is converge to $1+c(0.7)^L$ with a constant c.
\end{corollary}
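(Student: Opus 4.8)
The plan is to read off the partition function $Z=\sum_{\vec\sigma}\prod_{x,y}f(\sigma_{x,y},\sigma_{x+1,y},\sigma_{x,y+1})$ as a vacuum term plus an exponentially small excited contribution, controlled entirely by the polyomino machinery already set up. First I would isolate the all-$\downarrow$ configuration, whose weight is $\prod_{x,y}f(\downarrow,\downarrow,\downarrow)=1$; this is precisely the ``$1$'' in the claimed bound, and the whole task reduces to showing that the sum over configurations containing at least one $\uparrow$ is $O((0.7)^L)$. Next I would pin down the structure of those excited configurations: since $f(\uparrow,\downarrow,\downarrow)=0$, any configuration of nonzero weight must have every $\uparrow$-site equipped with an $\uparrow$-neighbor to its right or below. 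On the torus with periodic boundaries the induced right/down directed graph on the $\uparrow$-sites then has no sink, so tracing the directions must eventually close into a loop; hence the $\uparrow$-sites decompose into cycle ESSs, and each such ESS must wind around the torus and therefore carries area $m\ge L$.

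With this structure fixed I would bound each configuration's weight and then resum. By Theorem~\ref{bridge}, a configuration of area $m$, perimeter $p$ and upper perimeter $n$ has weight at most $q_a^m q_p^{p}\le q_a^m q_p^{2n}$ with $q_a=f(\uparrow,\uparrow,\uparrow)\le 1/d$ and $q_p=f(\downarrow,\downarrow,\uparrow)\le 1/D$. Mapping each cycle ESS to a torus polyomino, the torus count $\tilde D_{m,n}$ is bounded by products of plane counts $D_{m,n}$ through the inequality of Theorem~\ref{bridge}, which organizes a $k$-ESS configuration as a product of $k$ single-ESS factors, each carrying an $L^2$ root-placement factor. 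The single-ESS contribution is thus $L^2\sum_{m\ge L,\,n}D_{m,n}q_a^m q_p^{2n}$. Because every surviving term has $m\ge L$, I would extract a geometric factor by inserting a slack $\beta>1$ with $\beta q_a$ still inside the convergence disk of the generating-function lemma, using $q_a^m=\beta^{-m}(\beta q_a)^m\le \beta^{-L}(\beta q_a)^m$ for $m\ge L$ to get $\sum_{m\ge L,\,n}D_{m,n}q_a^m q_p^{2n}\le \beta^{-L}\,G(\beta q_a,\,q_p^{2})$, a finite quantity times a geometric factor. Resumming over the number $k$ of ESSs, the exponentially small single-ESS tail makes the $k$-fold products exponentiate to $\exp(L^2\beta^{-L}G)-1$, dominated by its linear term, so the entire excited sum is at most $c' L^2\beta^{-L}$; absorbing the polynomial $L^2$ into a marginally larger base gives $Z\le 1+c(0.7)^L$.

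The hard part will be the tail estimate of the last step, namely pushing the base all the way down to $0.7$ uniformly over $d,D\ge 2$. The naive factorization base $q_a/q^\ast$, with $q^\ast$ the radius of convergence of $G(\cdot,q_p^{2})$, is already about $0.86$ at the worst case $d=D=2$ and hence too large to survive the $L^2$ prefactor. Reaching $0.7$ requires exploiting that cycle ESSs are a sparse subset of all polyominoes: their wrapping constraint couples area and perimeter, so that a width-one winding loop of area $L$ already carries perimeter $\approx 2L$, and the additional $q_p$ suppression accumulated along the loop must be tracked rather than discarded when passing from $q_p^{p}$ to $q_p^{2n}$. Controlling this area--perimeter interplay, verifying the resulting constant uniformly in $d$ and $D$, and taming the $L^2$ root-placement and multi-ESS combinatorial factors constitute the delicate core of the argument; this is exactly the estimate carried out in Theorem~1 of Ref.~\cite{Liu2023Theory}, whose bounds I would follow, and it yields Corollary~\ref{h}.
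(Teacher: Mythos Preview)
Your proposal is correct and follows the same route as the paper, which does not actually prove Corollary~\ref{h} but simply defers to Theorem~1 of Ref.~\cite{Liu2023Theory}; your sketch is precisely that argument (vacuum term $=1$, excited configurations forced into cycle ESSs of area $\ge L$, weight bound from Theorem~\ref{bridge}, torus-to-plane polyomino reduction, and a slack-factor tail estimate on the generating function).

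One small sharpening worth noting: your first pass uses the bound $q_a^m q_p^{2n}$, and you correctly flag that this alone does not reach the base $0.7$. The concrete fix, visible in the paper's own later computation of $Z_{4(out)}^{II}$ and in Ref.~\cite{Liu2023Theory}, is to retain the tighter $q_p^{4n}$ version from Theorem~\ref{bridge} (i.e., $p\ge 4n$) rather than relaxing to $q_p^{2n}$. With $p=q_p^4\le 1/16$ at $D=2$, the convergence radius of $G(\cdot,q_p^4)$ is pushed out to about $0.72$, so the slack choice $\beta q_a = 0.72$ gives the base $0.5/0.72\approx 0.695$, and the polynomial prefactor $L^2$ is then absorbed into $c(0.7)^L$. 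This is exactly the ``additional $q_p$ suppression'' you anticipate in your final paragraph, and is the mechanism by which Ref.~\cite{Liu2023Theory} lands on $0.7$; no further exploitation of the cycle-ESS sparsity is needed beyond keeping the $q_p^{4n}$ bound.
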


Here, we only state the conclusions we need. One can find the details of proof for Theorem \ref{bridge} and Corollary \ref{h} in Ref. \cite{Liu2023Theory}. By applying this corollary, one can prove the concentration of PEPS based on the Chebyshev inequality.

\subsection{Average risk of 2D tensor-network ML model}
In the previous section, we have connected the partition function with the torus polyomino. Now we calculate  the average of the risk function: 
\begin{eqnarray}
    \mathbb{E}_{M,\mathcal{S}} \left[R_M(P_{\mathcal{S}})\right]&=&1-\int dW \sum_{\vec{\sigma}}\prod_{x,y}F(\sigma_{x,y},\sigma_{x+1,y},\sigma_{x,y+1})W\otimes W^\dagger\nonumber\\
    &=&1-\int dW \sum_{\vec{\sigma}}\mathcal{A}[\vec{\sigma}]W\otimes W^\dagger
\end{eqnarray}

Here for simplicity, we consider the size $t$ of training set be  $t=d^{L^2}-d^{L^2-k}$. Following the similar way as in Eq.~\ref{Supp_Y_tk}, we consider the case that a subspace comprising k sites on lattice can be fully trained. By labeling the sites from a particular starting site and arranging them counterclockwise around the center, the trained space can be amalgamated into the $k$ trained sites that approximates a $l\times l$ square. This is because both its upper and left boundaries are no more than $l$, where $l$ can be expressed as $\lceil \sqrt{k} \rceil$. We can refer to these $k$ sites as the trained zone $A$, as shown in Fig. \ref{root}. Here we denote the coordinates set of the $k$ trained sites in A  as $T_A$. 
\begin{figure}[htbp]
	\centering
	\includegraphics[width=1\linewidth]{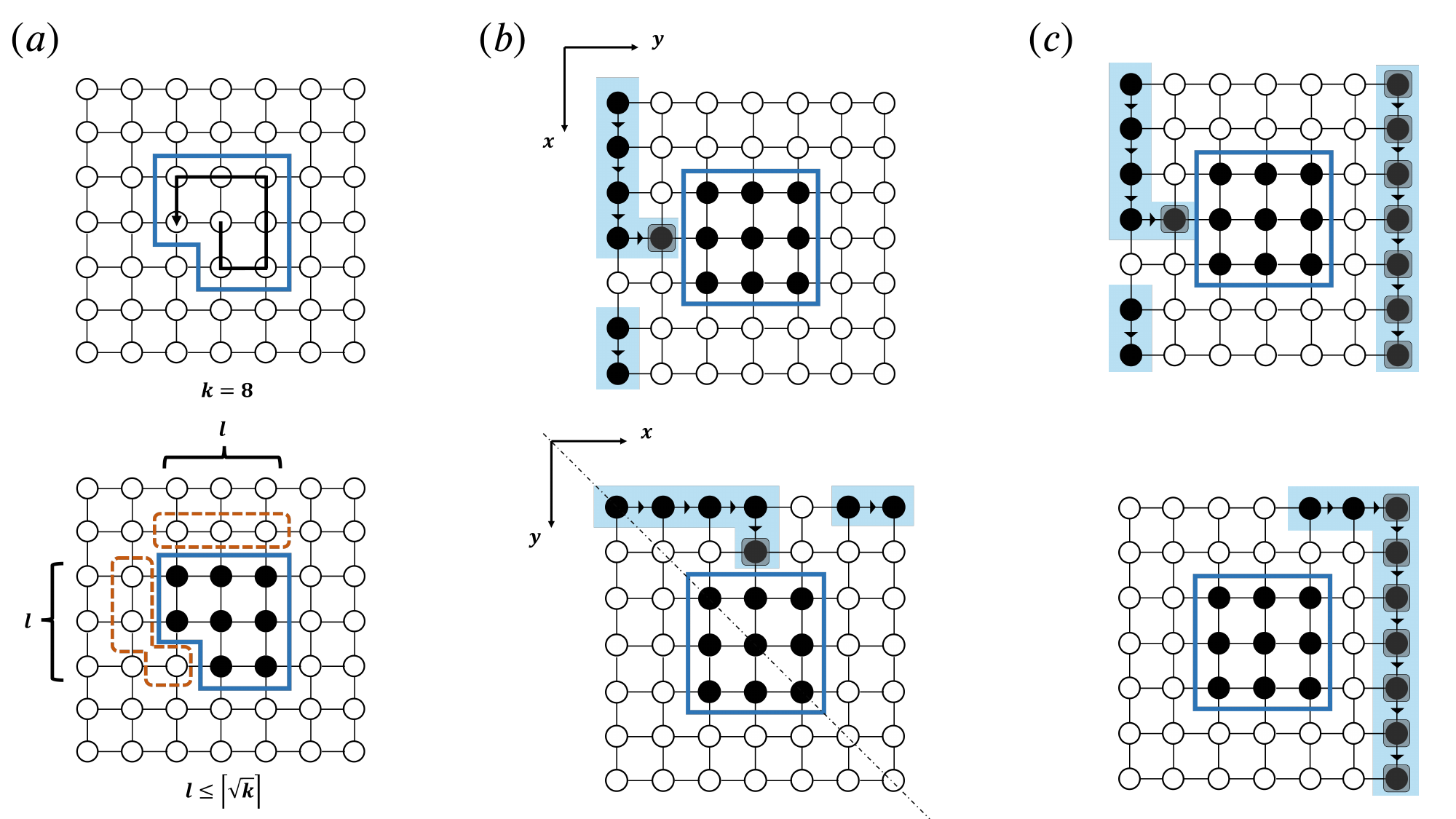}
 	\caption{Illustrations of the zone $A$ of trained sites and configurations outside A. (a) The $t$ training samples are arranged at $k$ sites labeled by $i=1,2,...,k$ in counterclockwise order, which forms a region whose upper and left boundary both no more than $l$. The sites within the dashed box indicate the candidate sites for the roots of ESSs out of A. (b) Illustration of the contribution of the i-th ESS rooted at the left boundary of A and the upper boundary of A, the two configurations are equivalent under diagonal reflection. (c) Examples of two categories of configurations distinguished by whether they contain cycle ESSs. These are different from the configurations in (b) which only contain an ESS rooted at the sites adjacent to the boundary of A. }
	\label{root}
\end{figure} 

Suppose that the learned unitary operator $M$ satisfies the condition $M^\dagger P_\mathcal{S}=W=e^{i\theta I_t}\oplus Y$, we employ the same technique as in the 1D case to analyze and manipulate this operator, 
\begin{eqnarray}\label{divide}
	W\otimes W^\dagger &=& (I_d^{\otimes n})^{\otimes 2}-I_d^{\otimes n}\otimes \Sigma^{\otimes k}\otimes I_d^{\otimes n-k}-\Sigma^{\otimes k}\otimes I_d^{\otimes n}\otimes I_d^{\otimes n-k}\nonumber\\
    & &+(\Sigma^{\otimes k}\otimes I_d^{\otimes n-k})^{\otimes 2}+\Sigma^{\otimes k}\otimes Y\otimes \Sigma^{\otimes k}\otimes Y^\dagger.
\end{eqnarray}
For arbitrary local tensor, we have
\begin{eqnarray}
	&&\tr\left[{\rm SWAP}_i I_{d,i}\otimes I_{d,i}\right] = d,\qquad \tr\left[ I_{d,i}\otimes I_{d,i}\right] = d^2;\\
	&&\tr\left[{\rm SWAP}_i I_{d,i}\otimes \Sigma_i\right] = 1,\qquad \tr\left[ I_{d,i}\otimes \Sigma_i\right] = d;\\
	&&\tr\left[{\rm SWAP}_i \Sigma_i\otimes I_{d,i}\right] = 1,\qquad \tr\left[ \Sigma_i\otimes I_{d,i}\right] = d;\\
	&&\tr\left[{\rm SWAP}_i \Sigma_i\otimes \Sigma_i\right] = 1,\qquad \tr\left[ \Sigma_i\otimes \Sigma_i\right] = 1.
\end{eqnarray}
 By substituting the above formulas into Eq.~(\ref{F}) , one obtains 
\begin{equation}
	\ipic{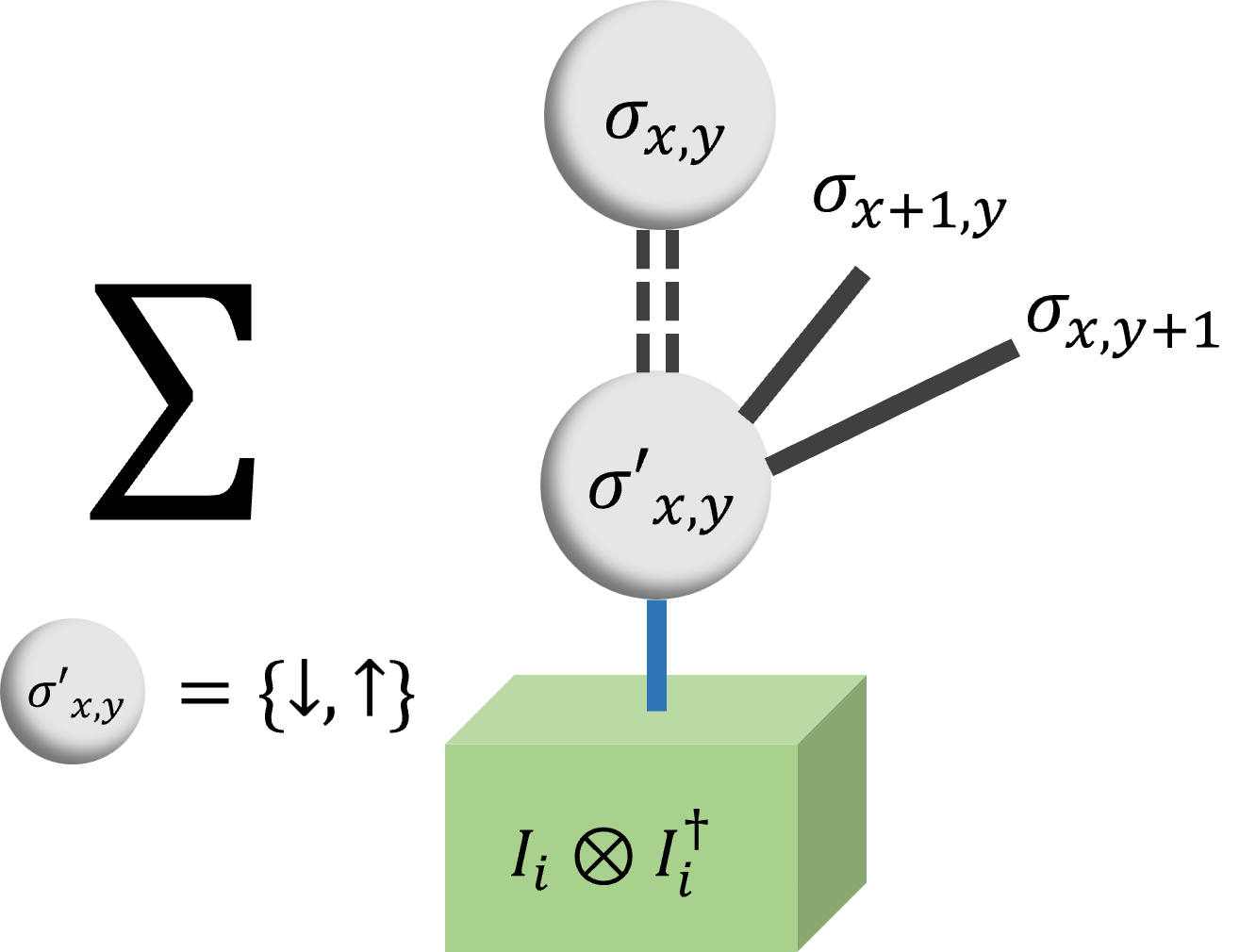}{0.25}=f(\cdot),\qquad\ipic{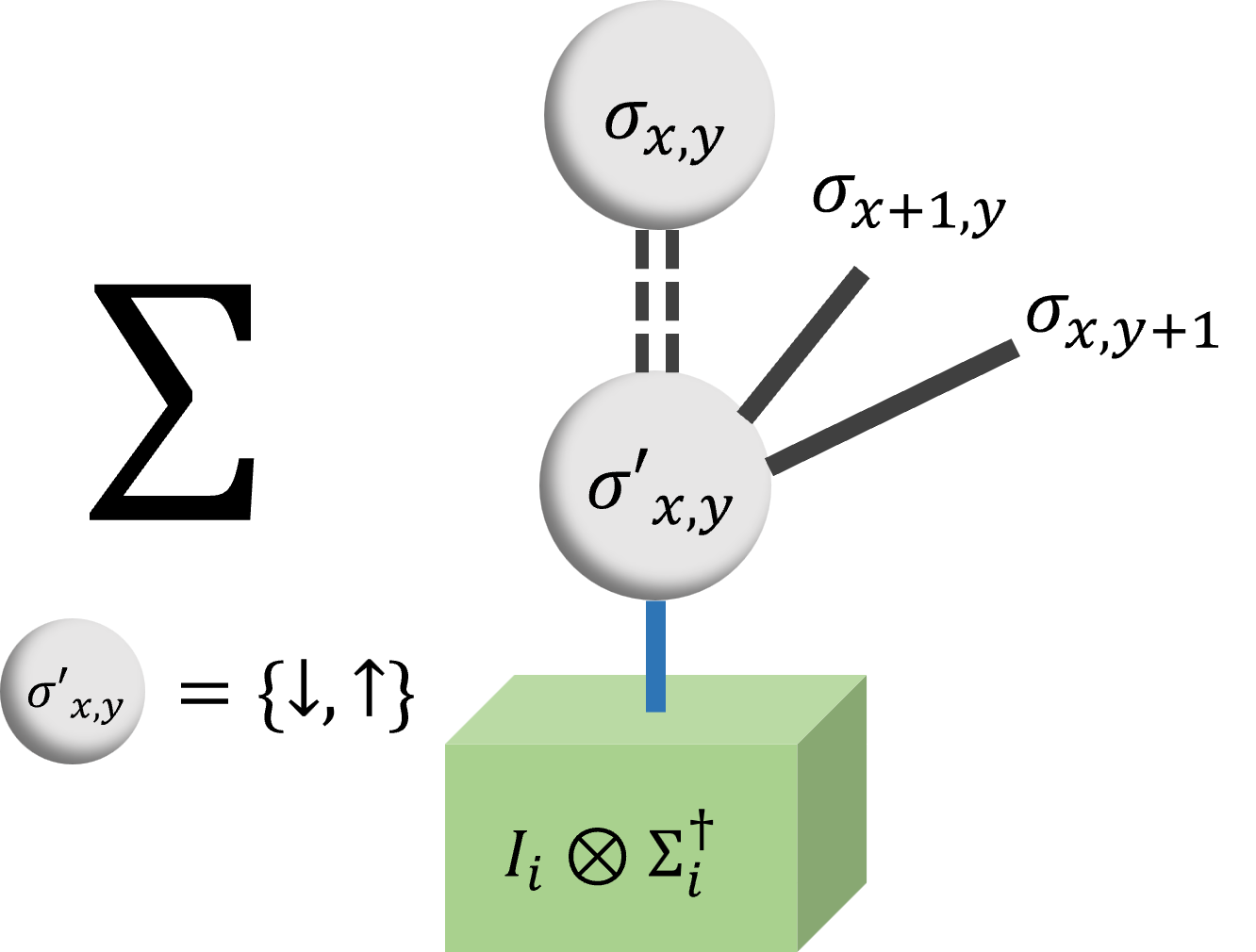}{0.25}=\frac{1}{d}f(\cdot),\qquad\ipic{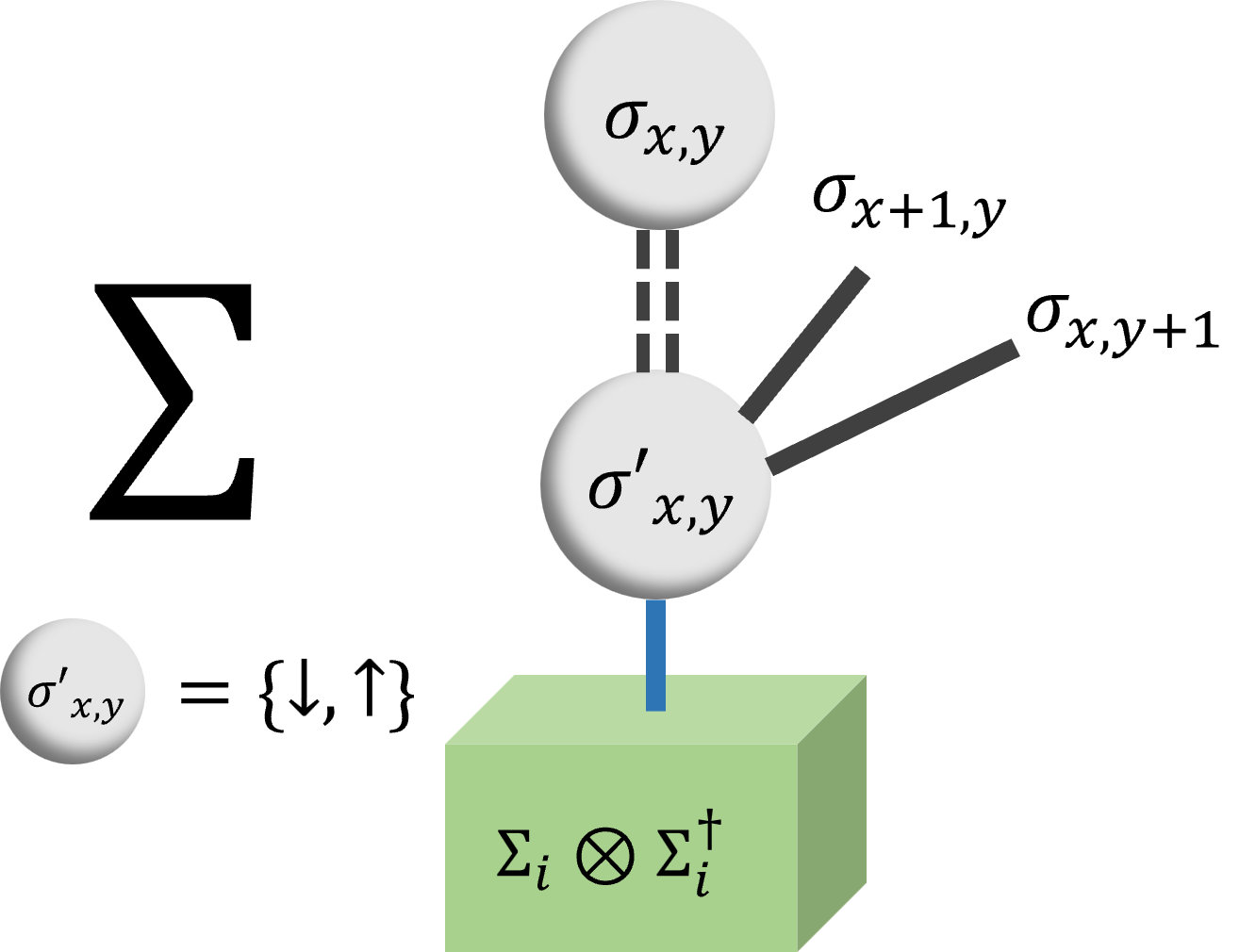}{0.25}=g(\cdot).
\end{equation}

Then base on Eq.~(\ref{divide}), we can divide $Z=\int dW \sum_{\sigma_{x,y}={\uparrow,\downarrow}}\mathcal{A}[\vec{\sigma}]W\otimes W^\dagger$  into the following five terms. For the first three items, by applying Corollary \ref{h}, we  obtain
\begin{eqnarray}
	Z_1=\sum_{\vec{\sigma}}\mathcal{A}[\vec{\sigma}](I_d^{\otimes n})^{\otimes 2}&=&\sum_{\vec{\sigma}}\prod_{x,y={1...L}}^{}f(\sigma_{x,y},\sigma_{x+1,y},\sigma_{x,y+1})\le 1+c(0.7^L),\\
	Z_2=\sum_{\vec{\sigma}}\mathcal{A}[\vec{\sigma}]I_d^{\otimes n}\otimes \Sigma^{\otimes k}\otimes I_d^{\otimes n-k}&=&\sum_{\vec{\sigma}}\frac{1}{d^k}\prod_{x,y={1...L}}^{}f(\sigma_{x,y},\sigma_{x+1,y},\sigma_{x,y+1})\le \frac{1}{d^k}(1+c(0.7^L)),\\
	Z_3=\sum_{\vec{\sigma}}\mathcal{A}[\vec{\sigma}]\Sigma^{\otimes k}\otimes I_d^{\otimes n}\otimes I_d^{\otimes n-k} &=&\sum_{\vec{\sigma}}\frac{1}{d^k}\prod_{x,y={1...L}}^{}f(\sigma_{x,y},\sigma_{x+1,y},\sigma_{x,y+1})\le \frac{1}{d^k}(1+c(0.7^L)).
\end{eqnarray}
For the latter two terms, we have
\begin{eqnarray}\label{A}
	Z_4&=&\sum_{\vec{\sigma}}\mathcal{A}[\vec{\sigma}](\Sigma_i^{\otimes k})^{\otimes2}\otimes(I_d^{\otimes L^2-k})^{\otimes 2}\nonumber\\
	&=&\sum_{\vec{\sigma}}\prod_{(x,y)\in T_A}g(\sigma_{x,y},\sigma_{x+1,y},\sigma_{x,y+1})\prod_{(x,y)\in T\setminus T_A}f(\sigma_{x,y},\sigma_{x+1,y},\sigma_{x,y+1})
\end{eqnarray}
and
\begin{eqnarray}\label{B}
	Z_5&=&\int dY \sum_{\vec{\sigma}}\mathcal{A}[\vec{\sigma}](\Sigma_i^{\otimes k})^{\otimes2}\otimes Y\otimes Y^\dagger\nonumber\\
	&=&\sum_{\vec{\sigma}}\prod_{(x,y)\in T_A}g(\sigma_{x,y},\sigma_{x+1,y},\sigma_{x,y+1})\frac{1}{d^{L^2-k}}\prod_{(x,y)\in T\setminus T_A}f(-\sigma_{x,y},-\sigma_{x+1,y},-\sigma_{x,y+1}),
\end{eqnarray}
where we denote $-\sigma_{x,y}$ as the opposite spin state of $\sigma_{x,y}$, and
\begin{eqnarray}
    g(\uparrow,\uparrow,\uparrow)=g(\downarrow,\downarrow,\downarrow)&=&\frac{D^4 d-1}{D^4 d^3-d};\nonumber\\
	g(\uparrow,\uparrow,\downarrow)=g(\uparrow,\downarrow,\uparrow)=g(\downarrow,\uparrow,\downarrow)=g(\downarrow,\downarrow,\uparrow)&=&\frac{D^3 d-D}{D^4 d^3-d} \le \frac{1}{D}g(\uparrow,\uparrow,\uparrow);\nonumber\\
	g(\uparrow,\downarrow,\downarrow)=g(\downarrow,\uparrow,\uparrow)&=&\frac{D^2 d-D^2}{D^4 d^3-d} \le \frac{1}{D^2}g(\uparrow,\uparrow,\uparrow).
\end{eqnarray}

Utilizing the inequality
\begin{eqnarray}
    &&\sum_{\vec{\sigma}}\prod_{(x,y)\in T_A}g(\sigma_{x,y},\sigma_{x+1,y},\sigma_{x,y+1})\prod_{(x,y)\in T\setminus T_A}f(\sigma_{x,y},\sigma_{x+1,y},\sigma_{x,y+1})\nonumber\\
    &&\le \left(\sum_{\vec{\sigma}}\prod_{(x,y)\in T_A}g(\sigma_{x,y},\sigma_{x+1,y},\sigma_{x,y+1})\right)\left(\sum_{\vec{\sigma}}\prod_{(x,y)\in T\setminus T_A}f(\sigma_{x,y},\sigma_{x+1,y},\sigma_{x,y+1})\right)
\end{eqnarray}
one can represent the terms $Z_4$ and $Z_5$ by the combination of values derived from $g(\sigma_{x,y},\sigma_{x+1,y},\sigma_{x,y+1})$ for $(x,y)\in T_A$, which correspond to the sites within zone $A$, and $f(\sigma_{x,y},\sigma_{x+1,y},\sigma_{x,y+1})$ for $(x,y)\in T\setminus T_A$, which correspond to the sites outside the zone $A$. We will first introduce how to calculate the values of the sites within zone $A$.

For the convenience of representing $\prod_{(x,y)}g(\sigma_{x,y},\sigma_{x+1,y},\sigma_{x,y+1})$, let $G(\sigma_1, ..., \sigma_n)$ denote an arbitrary product of $g(\cdot)$ with the elements $\sigma_1, ..., \sigma_n$. Since $g(\sigma_{x,y},\sigma_{x+1,y},\sigma_{x,y+1})=g(-\sigma_{x,y},-\sigma_{x+1,y},-\sigma_{x,y+1})$ for arbitrary $g(\cdot)$, for fixed element $\sigma_k$, we find
\begin{eqnarray}
    \sum_{\sigma_1}...\sum_{\sigma_n}G(\sigma_1,..., \sigma_k=\uparrow, ..., \sigma_n)&=&\sum_{\sigma_1}...\sum_{\sigma_n}G(-\sigma_1,..., \sigma_k=\downarrow, ..., -\sigma_n)\nonumber\\
    &=&\sum_{-\sigma_1}...\sum_{-\sigma_n}G(-\sigma_1,..., \sigma_k=\downarrow, ..., -\sigma_n)\nonumber\\
    &=&\sum_{\sigma_1}...\sum_{\sigma_n}G(\sigma_1,..., \sigma_k=\downarrow, ..., \sigma_n), 
\end{eqnarray}
and
\begin{equation}
    \sum_{\sigma_1}...\sum_{\sigma_n}G(\sigma_1,..., \uparrow, ..., \sigma_n)+\sum_{\sigma_1}...\sum_{\sigma_n}G(\sigma_1,..., \downarrow, ..., \sigma_n)=\sum_{\sigma_1}...\sum_{\sigma_n}\sum_{\sigma_k}G(\sigma_1,..., \sigma_k, ..., \sigma_n),
\end{equation}
where $\sum_{\sigma_i}$ means $\sum_{\sigma_i=\{\uparrow,\downarrow\}}$. Thus we have
\begin{equation}\label{sum_G}
\begin{aligned}
    \sum_{\sigma_1}...\sum_{\sigma_n}G(\sigma_1,..., \uparrow, ..., \sigma_n)&=\sum_{\sigma_1}...\sum_{\sigma_n}G(\sigma_1,..., \downarrow, ..., \sigma_n)\\
    &=\frac{1}{2}\sum_{\sigma_1}...\sum_{\sigma_n}\sum_{\sigma_k}G(\sigma_1,..., \sigma_k, ..., \sigma_n). 
\end{aligned}
\end{equation}
Additionally, the following equation holds, 
\begin{equation}
\begin{aligned}
    \sum_{\sigma_i}\sum_{\sigma_j}\sum_{\sigma_k}g(\sigma_i,\sigma_j, \sigma_k) &\le \left(2+\frac{4}{D}+\frac{2}{D^2}\right)g(\uparrow,\uparrow,\uparrow) \\
    &= 2\left(\frac{1+D}{D}\right)^2\left(\frac{D^4 d-1}{D^4 d^3-d}\right). 
\end{aligned}
\end{equation}

To determine the values of the sites within zone $A$, we start by designating the top-left corner site as $(x_0, y_0)$ and define $T_1$ as the set containing the sites $(x_0, y_0)$, $(x_0+1, y_0)$, and $(x_0, y_0+1)$. These three sites correspond to the elements of the factor $g(\sigma_{x_0, y_0}, \sigma_{x_0+1, y_0}, \sigma_{x_0, y_0+1})$. We initiate the process by expanding the summation over the spins $\sigma_{x_0, y_0}$, $\sigma_{x_0+1, y_0}$, and $\sigma_{x_0, y_0+1}$. At this initial stage, only one factor, $g(\sigma_{x_0, y_0}, \sigma_{x_0+1, y_0}, \sigma_{x_0, y_0+1})$, depends on $\sigma_{x_0, y_0}$. We denote $\prod_{(x,y)\in T_A\setminus(x_0,y_0)}g(\sigma_{x,y},\sigma_{x+1,y},\sigma_{x,y+1})$ as $G(\vec{\sigma}')$. Since $\sigma_{x_0+1, y_0}$ and $\sigma_{x_0, y_0+1}$ must not be included in the same $g(\cdot)$ factor except for $g(\sigma_{x_0, y_0}, \sigma_{x_0+1, y_0}, \sigma_{x_0, y_0+1})$, we can separate $G(\vec{\sigma}')$ into two distinct factors, $G(\vec{\sigma_1}')$ and $G(\vec{\sigma_2}')$, which contain $\sigma_{x_0+1, y_0}$ and $\sigma_{x_0, y_0+1}$, respectively. With this setup, we can proceed with the following steps,

\begin{eqnarray}
    & &\sum_{\vec{\sigma}}\prod_{(x,y)\in T_A}g(\sigma_{x,y},\sigma_{x+1,y},\sigma_{x,y+1})\nonumber\\
    &=&\sum_{\substack{\sigma_{x,y}=\{\uparrow,\downarrow\} \\ (x,y)\in T\setminus T_1 }}\Bigg[\sum_{\sigma_{x_0,y_0}}\sum_{\sigma_{x_0+1,y_0}}\sum_{\sigma_{x_0,y_0+1}}g(\sigma_{x_0,y_0},\sigma_{x_0+1,y_0},\sigma_{x_0,y_0+1})G(\vec{\sigma}')\Bigg]\nonumber\\
    &=& \sum_{\substack{\sigma_{x,y}=\{\uparrow,\downarrow\} \\ (x,y)\in T\setminus T_1}}\Bigg[g(\uparrow,\uparrow,\uparrow)G(\vec{\sigma_1}'|_{\sigma_{x_0+1,y_0}=\uparrow})G(\vec{\sigma_2}'|_{\sigma_{x_0,y_0+1}=\uparrow}) \nonumber\\
    & &\qquad\qquad+g(\uparrow,\uparrow,\downarrow)G(\vec{\sigma_1}'|_{\sigma_{x_0+1,y_0}=\uparrow})G(\vec{\sigma_2}'|_{\sigma_{x_0,y_0+1}=\downarrow})\nonumber\\
    & &\qquad\qquad...\nonumber\\
    & &\qquad\qquad+g(\downarrow,\downarrow,\downarrow)G(\vec{\sigma_1}'|_{\sigma_{x_0+1,y_0}=\downarrow})G(\vec{\sigma_2}'|_{\sigma_{x_0,y_0+1}=\downarrow})\Bigg] \nonumber\\
    &=&\sum_{\substack{\sigma_{x,y}=\{\uparrow,\downarrow\} \\ (x,y)\in T\setminus (x_0,y_0)}}\frac{1}{4}\Bigg[\sum_{\sigma_{x_0,y_0}}\sum_{\sigma_{x_0+1,y_0}}\sum_{\sigma_{x_0,y_0+1}}g(\sigma_{x_0,y_0},\sigma_{x_0+1,y_0},\sigma_{x_0,y_0+1})\Bigg]G(\vec{\sigma_1}')G(\vec{\sigma_2}')\nonumber\\
    &=&\sum_{\substack{\sigma_{x,y}=\{\uparrow,\downarrow\} \\ (x,y)\in T\setminus (x_0,y_0)}}\frac{1}{4}\Bigg[\sum_{\sigma_{x_0,y_0}}\sum_{\sigma_{x_0+1,y_0}}\sum_{\sigma_{x_0,y_0+1}}g(\sigma_{x_0,y_0},\sigma_{x_0+1,y_0},\sigma_{x_0,y_0+1})\Bigg]\prod_{(x,y)\in T_A\setminus (x_0,y_0)}g(\sigma_{x,y},\sigma_{x+1,y},\sigma_{x,y+1}) \nonumber\\
    &\le &\sum_{\substack{\sigma_{x,y}=\{\uparrow,\downarrow\} \\ (x,y)\in T\setminus (x_0,y_0)}}\Bigg[\frac{1}{2}\left(\frac{1+D}{D}\right)^2\left(\frac{D^4 d-1}{D^4 d^3-d}\right)\Bigg]\prod_{(x,y)\in T_A\setminus (x_0,y_0)}g(\sigma_{x,y},\sigma_{x+1,y},\sigma_{x,y+1}) \nonumber\\
    &\le& \left(\frac{1+D}{D}\right)^{k^2}\left(\frac{D^4 d-1}{2D^4 d^3-2d}\right)^k.\label{sum_g_peps}
\end{eqnarray}
For the second equal sign in Eq.~(\ref{sum_g_peps}), the eight terms are derived by expanding the three summations within the square brackets. For the third equal sign in Eq.~(\ref{sum_g_peps}), each occurrence of $G(\vec{\sigma_1}'|_{\sigma_{x_0+1,y_0}=\uparrow})$ or $G(\vec{\sigma_1}'|_{\sigma_{x_0+1,y_0}=\downarrow})$ is replaced with $\frac{1}{2}\sum_{\sigma_{x_0+1,y_0}}G(\vec{\sigma_1}')$, and similarly, $G(\vec{\sigma_2}'|_{\sigma_{x_0,y_0+1}=\uparrow})$ or $G(\vec{\sigma_2}'|_{\sigma_{x_0,y_0+1}=\downarrow})$ is substituted with $\frac{1}{2}\sum_{\sigma_{x_0,y_0+1}}G(\vec{\sigma_2}')$, in accordance with Eq.~(\ref{sum_G}). This restoration of the two summations $\sum_{\sigma_{x_0+1,y_0}}\sum_{\sigma_{x_0,y_0+1}}$ then allows $G(\vec{\sigma_1}')G(\vec{\sigma_2}')$ to be combined into a single factor, $G(\vec{\sigma}')$. By applying this process iteratively to the subsequent terms of $g(\sigma_{x,y},\sigma_{x+1,y},\sigma_{x,y+1})$, we obtain  the expression $\left(\frac{1+D}{D}\right)^{k^2}\left(\frac{D^4 d-1}{2D^4 d^3-2d}\right)^k$. Then we have

\begin{eqnarray}
    Z_4 &=& \sum_{\vec{\sigma}}\prod_{(x,y)\in T_A}g(\sigma_{x,y},\sigma_{x+1,y},\sigma_{x,y+1})\prod_{(x,y)\in T\setminus T_A}f(\sigma_{x,y},\sigma_{x+1,y},\sigma_{x,y+1}) \nonumber\\
    &\le& \sum_{\vec{\sigma}}\prod_{(x,y)\in T_A}g(\sigma_{x,y},\sigma_{x+1,y},\sigma_{x,y+1})\sum_{\vec{\sigma}}\prod_{(x,y)\in T\setminus T_A}f(\sigma_{x,y},\sigma_{x+1,y},\sigma_{x,y+1}) \nonumber\\
    &\le& \left(\frac{1+D}{D}\right)^{k^2}\left(\frac{D^4 d-1}{2D^4 d^3-2d}\right)^k\sum_{\vec{\sigma}}\prod_{(x,y)\in T\setminus T_A}f(\sigma_{x,y},\sigma_{x+1,y},\sigma_{x,y+1}).\label{Z4_SM}
\end{eqnarray}

Now we consider the value of the remaining $L^2-k$ sites outside $A$ for $Z_4$, i.e., the term $\sum_{\vec{\sigma}}\prod_{(x,y)\in T\setminus T_A}f(\sigma_{x,y},\sigma_{x+1,y},\sigma_{x,y+1})$. 
Given the condition that $f(\uparrow,\downarrow,\downarrow)=0$, here we only need to identify those non-zero configurations outside $A$. If a site $(x,y)$ on upper boundary of zone $A$ is $\uparrow$, there could exist non-zero configurations containing an ESS rooted at $(x-1,y)$. Similar case occurs for a $\uparrow$ on left boundary. Thus, different from the $L\times L$ case in Eq.~
(\ref{LL}), the non-zero configurations in the $L^2-k$ lattice outside $A$ in Eq.~(\ref{Z4_SM}) include not only the cycle ESSs but also the ESSs possibly rooted at the sites adjacent to the upper and left boundary of $A$, shown in Fig. \ref{root}(a).

These configurations can be divided into two parts,  depending on whether they contain cycle ESSs or not, as shown in Fig. \ref{root}(c). For the configuration without cycle ESS, all the ESSs of toric polyominoes are rooted at the sites adjacent to the upper and left boundary of A. The contribution of the i-th ESS rooted at the sites adjacent to the left boundary of A is less than $\sum_{m_i,n_i}D_{m_i,n_i}q_a^{m_i}q_p^{2n_i}$. We assume that all potential sites can serve as roots, so as to cover all possible configurations. By considering all cases of $k$, roots positions and the overlaps with boundary $0\le c\le k$, we obtain
\begin{eqnarray}
    &&\sum_{k}^{l} \begin{pmatrix}
    l\\k
    \end{pmatrix}\sum_{m_i,n_i}\prod_{k}^{i}D_{m_i,n_i}q_a^{m_i}q_p^{2n_i} \nonumber\\
    &=&\sum_{k}^{l} \begin{pmatrix}
    l\\k
    \end{pmatrix} G(q_a,q_p^2)^k\nonumber\\
    &=&\left(1+G(q_a,q_p^2)\right)^{l}.
\end{eqnarray}

By reflecting the $x$ and $y$ axes across the diagonal line and altering the priority of 'right' and 'down' directions of the directed graphs on torus, one can calculate the case of ESSs rooted at the upper boundary, as shown in Fig. \ref{root}(b). Consequently 
\begin{equation}
    Z_{4 (out)}^I=\left(1+G(q_a,q_p^2)\right)^{2l}
\end{equation}

Now we calculate  the contribution of  configurations containing at least one cycle ESS. Let the toric polyominoes with area $m$ and upper perimeter $n$ be decomposed into $K$ ESSs, each with area $m_i$ and upper perimeter $n_i$ $(i=1,2,...,K)$, respectively. We assume that the first $K_1$ ESSs are cycle ESSs. The remaining $K_2$ ESSs are rooted at the neighbors of upper and left boundary of $A$. We then obtain

\textbullet$m_1,...,m_{K_1}\ge L$; 

\textbullet$m_1+...+m_{K_1}+...+m_{K_1+K_2}=m$; 

\textbullet$n_1+...+n_{K_1+K_2}=n+c$; \\
Thus the number of toric polyominoes outside $A$ is
\begin{equation}
    E_{m,n}\le \sum_{K_1=1}^{m/L}\sum_{K_2=0}^{2l}\begin{pmatrix}2l\\K_2\end{pmatrix}\sum_{c=0}^{K_1+K_2}\sum_{\substack{m_1,...,m_{K_1}\ge L\\m_1+...+m_{K_1+K_2}=m\\ n_1+...+n_{K_1+K_2}=n+c}}\prod_{i=1}^{K_1}((L^2-k)D_{m_i,n_i})\prod_{i=K_1+1}^{K_1+K_2}D_{m_i,n_i}.
\end{equation}
Then the generating function obeys
\begin{eqnarray}
    Z_{4 (out)}^{II}&=&2\sum_{m,n}E_{m,n}q_a^m q_p^{4n}\nonumber\\
    &\le &2\sum_{K_1=1}^{m/L}\sum_{K_2=0}^{2l}\begin{pmatrix}2l\\K_2\end{pmatrix}\sum_{c=0}^{K_1+K_2}q_p^{-4c}\sum_{\substack{m_1,...,m_{K_1}\ge L\\m_1+...+m_{K_1+K_2}=m\\ n_1+...+n_{K_1+K_2}=n+c}}\prod_{i=1}^{K_1}((L^2-k)D_{m_i,n_i}q_a^{m_i} q_p^{4n_i})\prod_{i=K_1+1}^{K_1+K_2}D_{m_i,n_i}q_a^{m_i} q_p^{4n_i}\nonumber\\
    &\le &2\sum_{K_1=1}^{m/L}\sum_{K_2=0}^{2l}\begin{pmatrix}2l\\K_2\end{pmatrix}\frac{q_p^{-4K_1-4K_2}}{1-q_p^4}\sum_{\substack{m_{K_1+1},...,m_{K_1+K_2}\\ n_{K_1+1},...,n_{K_1+K_2}}}\left((L^2-k)\sum_{m\ge L,n}D_{m,n}q_a^{m} q_p^{4n}\right)^{K_1}\prod_{i=K_1+1}^{K_1+K_2}D_{m_i,n_i}q_a^{m_i} q_p^{2n_i}\nonumber\\
	&\le &\sum_{K_1=1}^{L}\frac{2}{1-q_p^4}\left((L^2-k)q_p^{-4}\sum_{m\ge L,n}D_{m,n}q_a^{m} q_p^{4n}\right)^{K_1}\sum_{K_2=0}^{2l}\begin{pmatrix}2l\\K_2\end{pmatrix}\sum_{\substack{m_{K_1+1},...,m_{K_1+K_2}\\ n_{K_1+1},...,n_{K_1+K_2}}}\prod_{i=K_1+1}^{K_1+K_2}q_p^{-4K_2}G(q_a,q_p^2)^{2l}\nonumber\\
	&\le &\frac{2L}{1-q_p^4}\max_{K_1\le L}\left((L^2-k) q_p^{-4}\sum_{m\ge L,n}D_{m,n}q_a^{m} q_p^{4n}\right)^{K_1}(1+q_p^{-4}G(q_a,q_p^2))^{2l}\nonumber\\
	&\le &c(0.7)^L(1+q_p^{-4}G(q_a,q_p^2))^{2l}.\label{Generate_Function_Z4_SM}
\end{eqnarray}
By definition, $q_a\le 1/2, q_p^4\le 1/16$. We can then obtain the last line of inequality in  Eq.~(\ref{Generate_Function_Z4_SM}) through the following inequality
    \begin{equation}
    \begin{aligned}
    q_{p}^{-4} \sum_{m \geq L, n} D_{m, n} q_{a}^{m} q_{p}^{4 n}
    \leq & \sum_{m \geq L, n} D_{m, n}(0.5)^{m}(1 / 16)^{n-1} \\
    \leq & 16(0.5 / 0.72)^{L} \sum_{m \geq L, n} D_{m, n}(0.72)^{m}(1 / 16)^{n} \\
    \leq & 16(0.695)^{L} G(0.72,1 / 16) \\
    \leq & 27(0.695)^{L} \leq c(0.7)^L.
    \end{aligned}
    \end{equation}
Actually, for enough large $L$, $27 (L^{2}-k)(0.695)^{L}<1$, we have $\frac{54L(L^{2}-k)}{1-q_{p}^{4}}(0.695)^{L}<c(0.7)^{L}$ for some constant $c$. Otherwise, for a small $L$, one can always find a constant $c$ such that the factor is less than $c(0.7)^{L}$ as well. Thus the term $Z_4$ is bounded by
\begin{eqnarray}
	Z_4&\le&\left(\frac{D^4 d-1}{2D^4 d^3-2d}\right)^{k}\left(\frac{1+D}{D}\right)^{2k} (Z_{4(out)}^{I}+Z_{4(out)}^{II})\nonumber\\
	&\le&\left(\frac{2D^4 d-2}{D^4 d^3-d}\right)^{k}\left(\frac{1+D}{2D}\right)^{2k}\left[\left(1+G(q_a,q_p^2)\right)^{2l}+(c(0.7)^L)(1+q_p^{-4}G(q_a,q_p^2))^{2l}\right]\nonumber\\
	&\le&\left(\frac{2D^4 d-2}{D^4 d^3-d}\right)^{k}\left(\frac{1+D}{2D}\right)^{2k}(1+c(0.7)^L)(1+G(1/d,1/D^2))^{2l}.
\end{eqnarray}

For the function $f(-\sigma_{x,y},-\sigma_{x+1,y},-\sigma_{x,y+1})$ in the term $Z_5$, we can just take the sets of $\downarrow$ as the ESSs and obtain the result in a similar approach. We obtain that
\begin{eqnarray}
	Z_5 &= \frac{1}{d^{L^2-k}} Z_4.
\end{eqnarray}
\subsection{Results and Discussion}
Summing over all of the contributing terms $\{Z_1,Z_2,Z_3,Z_4,Z_5\}$, one can obtain the average risk
\begin{equation}
\begin{aligned}
	\mathbb{E}_{M,\mathcal{S}} \left[R_M(P_{\mathcal{S}})\right] =& 1-(Z_1+Z_2+Z_3+Z_4+Z_5)\\
    = & 1-(1+c(0.7)^L)\left[1-\frac{2}{d^{k}}+(1+\frac{1}{d^{L^2-k}})\left(\frac{2D^4 d-2}{D^4 d^3-d}\right)^{k}\left(\frac{1+D}{2D}\right)^{2k}(1+G(1/d,1/D^2))^{2l}\right].
\end{aligned}
\end{equation}
Thus we have obtained Theorem~\ref{theorem:2D}. In the thermodynamic limit where $L\to \infty$, one can ignore the higher-order terms and obtains: 
\begin{align}
    \mathbb{E}_{M,\mathcal{S}} \left[R_M(P_{\mathcal{S}})\right] \ge \frac{2}{d^{k}}-(1+\frac{1}{d^{L^2-k}})\left(\frac{2D^4 d-2}{D^4 d^3-d}\right)^{k}\left(\frac{1+D}{2D}\right)^{2k}(1+G(1/d,1/D^2))^{2l}.
\end{align}
This result is shown in Fig.~\ref{Risk_M_Log_n40_D2_d2}.

We consider two special cases, one case is the empty training set and the other one is the full training set. 
For empty training set with $k = 0$, the lower bound of the average  risk  is 
\begin{eqnarray}
	\mathbb{E}_{M,\mathcal{S}} \left[R_M(P_{\mathcal{S}})\right] \ge 1-\left(1-2+(1+0)\cdot 1\right)=1.
\end{eqnarray}
For full training set with $k =L^2$, the lower bound of the average  risk  is 
\begin{eqnarray}
	\mathbb{E}_{M,\mathcal{S}} \left[R_M(P_{\mathcal{S}})\right] \ge 1-\left(1-0+(1+1)\cdot 0\right)=0.
\end{eqnarray}

\begin{figure}
	\centering
	\includegraphics[width=0.6\linewidth]{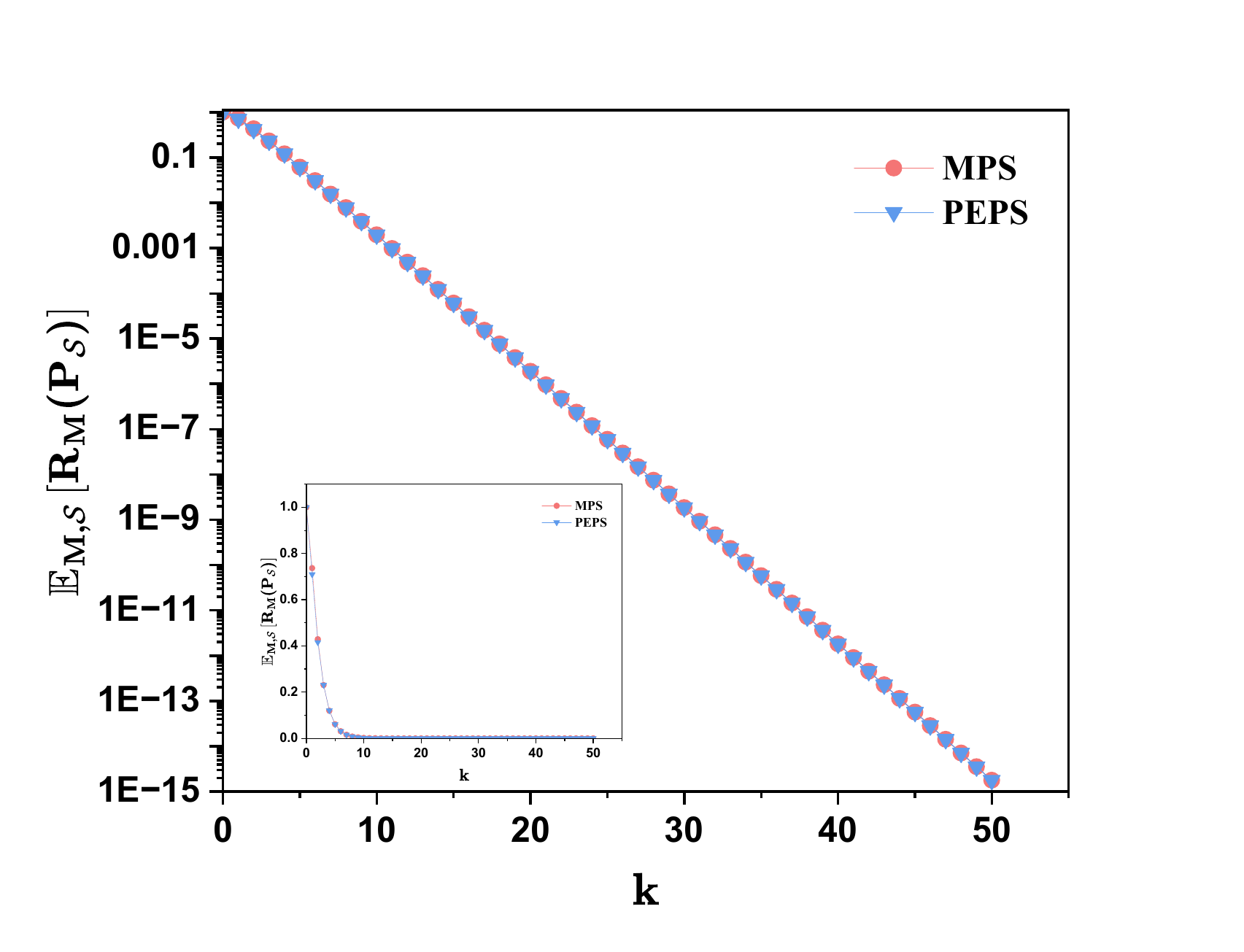}
	\caption{Analytical results of the lower bounds of the average risk versus the size of training set $t_k = d^n-d^{n-k}$, both for the MPS (Theorem 1) and PEPS (Theorem 2). The parameters we choose: qudit number $n=50$, bond dimension $D=2$, physical dimension $d=2$, the integer $k\in[1,49]$. We see that with $k$ increasing from 1 to 49, i.e., the number of training samples $t_k = d^n-d^{n-k}$ increases,  the average risk over $M$ decays  towards zero with respect to $k$.}
\label{Risk_M_Log_n40_D2_d2}
\end{figure} 
Moreover, taking $n = 4$ as an instance, numerical simulations for MPS machine learning model was carried out regarding the average risks of 10 unitary matrices, as shown in Fig.~\ref{convergence}. Through varying the training error within the practical learning process, diverse average risk curves were acquired. It is evident that as the training error diminishes, the average risk curves decline correspondingly and approach the analytical results. Our findings fully accord with the implications of being the average lower bound of the risk function.
\begin{figure}
	\centering
	\includegraphics[width=0.6\linewidth]{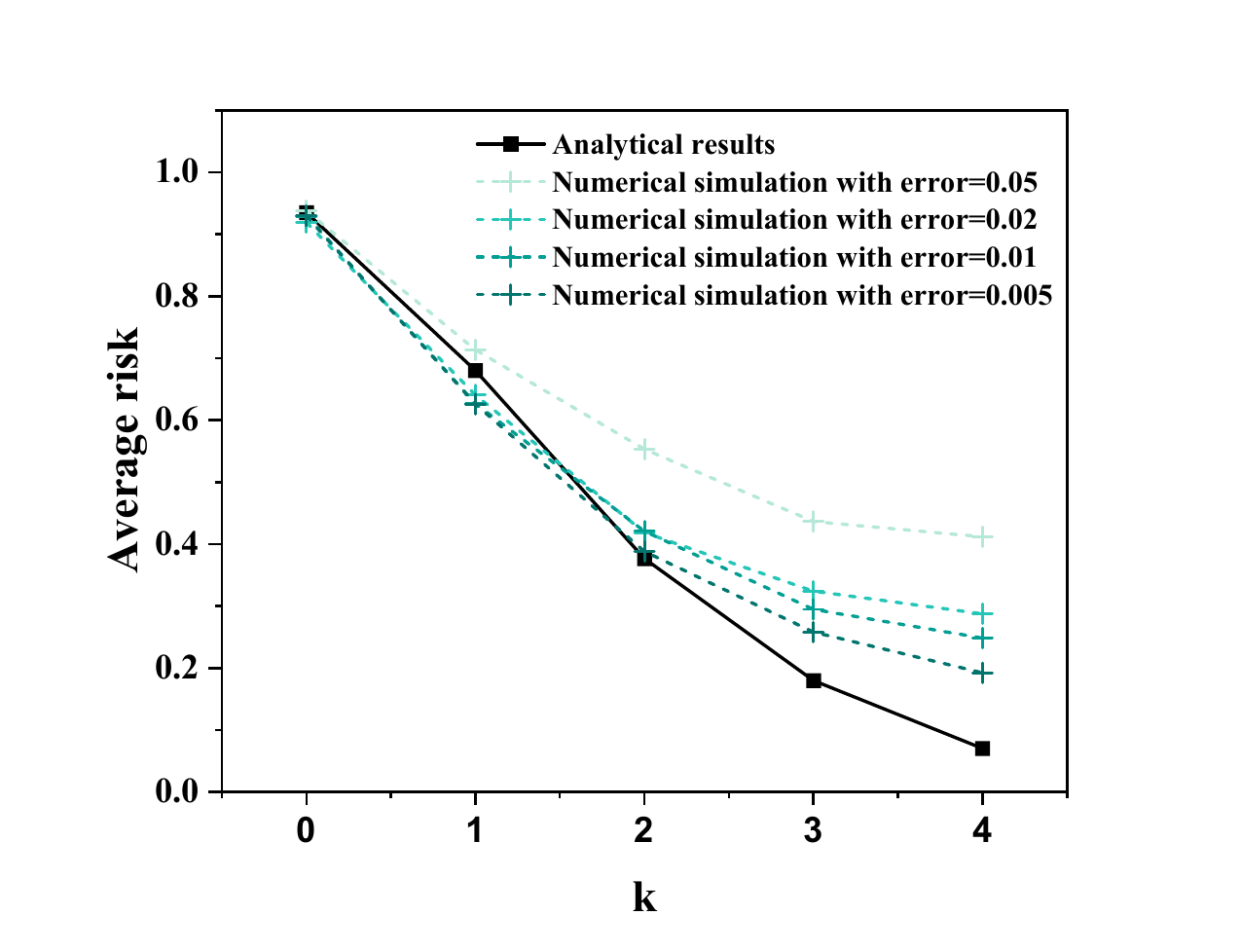}
	\caption{Average  risk  of the trained MPS-based machine learning models with respect to the training set size $t_k = 2^n-2^{n-k}$, where the system qubit size is $n=4$. The solid line represents the analytical lower bounds of the average risk. And the dotted  lines denote the average risk  of the numerical simulations with training error varying from 0.05 to 0.005.}
\label{convergence}
\end{figure}

\section{Review of no-free-lunch theorem in quantum machine learning models}

The exact formulations of no-free-lunch (NFL) theorems in quantum machine learning models have been introduced in Refs. \cite{Poland2020No,Sharma2022Reformulation}. Here for completeness, we straightforwardly consider the entanglement-assisted violation of quantum no-free-lunch theorem \cite{Sharma2022Reformulation}. One can first define the quantum Hilbert space for input samples by $\mathcal{H}_{xR}=\mathcal{H}_{x}\otimes\mathcal{H}_{R}$, and define the quantum Hilbert space for output by $\mathcal{H}_{yR}=\mathcal{H}_{y}\otimes\mathcal{H}_{R}$, where $\mathcal{H}_{R}$ denotes the ancillary Hilbert space. Without loss of generality,  we suppose that the dimensions of both the input and output space are the same, i.e., $\dim(\mathcal{H}_{xR})=\dim(\mathcal{H}_{yR})=N$. One can then choose a subset $\mathcal{S}_Q$ of the input samples as the training data for quantum machine learning, 
\begin{equation}
\mathcal{S}_Q=\left\{\left(|\psi_j\rangle,|\phi_j\rangle\right):|\psi_j\rangle\in \mathcal{H}_{xR},|\phi_j\rangle\in \mathcal{H}_{yR}\right\}_{j=1}^t,
\end{equation}
with the set size $|\mathcal{S}_Q|=t$, and $|\phi_j\rangle=(U\otimes I_R)|\psi_j\rangle$. 

One assumes that all training data states share the same Schmidt rank $r\in \{1,2,\cdots, d\}$.

For perfect quantum learning, the trained unitary $V_{\mathcal{S}_Q}$ on those training data states should result in the following formula
\begin{equation}
|\tilde{\phi_j}\rangle = (V_{\mathcal{S}_Q}\otimes I_R)|\psi_j\rangle = e^{i\theta_j}(U\otimes I_R)|\psi_j\rangle, \quad  \forall |\psi_j\rangle \in \mathcal{S}_Q.
\end{equation}

To quantify the accuracy of the trained unitary $V_{\mathcal{S}_Q}$, one can define the risk function
\begin{equation}
\begin{aligned}
R_{U}(V_{\mathcal{S}_Q}):=&\int dx\left\| U|x\rangle\langle x|U^{\dag}-V_{\mathcal{S}_Q}|x\rangle\langle x|V_{\mathcal{S}_Q}^\dag \right\|_1^2\\
=& \int dx \left(1-\left(\langle x|U^\dag V_{\mathcal{S}_Q}|x\rangle\right)^2\right)\\
=& 1-\int_{\rm Haar} dx \left|\langle x|U^\dag V_{\mathcal{S}_Q}|x\rangle\right|^2,
\end{aligned}
\end{equation}
where  $\|A\|_1 = \frac{1}{2}\tr[\sqrt{A^\dag A}]$ denotes the trace norm of $A$. Suppose that the input states $|x\rangle$ form the approximate unitary-2 design, i.e., approaching to the 2-moment integral of unitary group under the Haar measure. Then the risk function has the following formula
\begin{equation}
R_{U}(V_{\mathcal{S}_Q})=1-\frac{N+|\tr(U^\dag V_{\mathcal{S}_Q})|^2}{N(N+1)}.
\end{equation}

Consider the following three cases:
\begin{itemize}
\item[(a)] States in $\mathcal{S}_Q$ are orthonormal. $W=U^\dag V_{\mathcal{S}_Q} =e^{i\theta_1 I_r}\oplus e^{i\theta_2 I_r} \cdots \oplus  e^{i\theta_t I_r}\oplus Y$. $Y$ is the $SU(N-rt)$ group element. 
\item[(b)] States in $\mathcal{S}_Q$ are non-orthonormal, but linear independent. $W=U^\dag V_{\mathcal{S}_Q} =e^{i\theta I_{r\times t}} \oplus Y$. $Y$ is the $SU(N-rt)$ group element. 

\item[(c)] States in $\mathcal{S}_Q$ are  linear dependent. $W=U^\dag V_{\mathcal{S}_Q} =e^{i\theta I_{r\times t'}} \oplus Y$, $t'$ denotes the linear independent bases of the training set.  $Y$ is the $SU(N-rt')$ group element. 
\end{itemize} 

The goal is to calculate the average risk over all training sets ${\mathcal{S}_Q}$ and all (possible output) unitaries $U$ \cite{Sharma2022Reformulation},
\begin{equation}
\mathbb{E}_{U}\left[\mathbb{E}_{\mathcal{S}_Q}\left[R_{U}(V_{\mathcal{S}_Q})\right]\right].
\end{equation}

Consider the case (a): States in $\mathcal{S}_Q$ are orthonormal. One first calculates the average risk over all possible uniatries $U$,
\begin{equation}
\begin{aligned}
\mathbb{E}_{U}\left[R_{U}(V_{\mathcal{S}_Q})\right]=& 1-\frac{N+\int dU |\tr(U^\dag V_{\mathcal{S}_Q})|^2}{N(N+1)}\\
=& 1-\frac{N+ \int dY\left|r\sum_{j=1}^t e^{i\theta_j}+\tr[Y]\right|^2}{N(N+1)}\\
=& 1-\frac{N+  \int dY\left[r^2\left|\sum_{j=1}^t e^{i\theta_j}\right|^2+\left|\tr[Y]\right|^2+r\sum_{j=1}^t e^{i\theta_j}\tr[Y^\dag]+r\sum_{j=1}^t e^{-i\theta_j}\tr[Y]\right]}{N(N+1)}\\
\geq & 1-\frac{N+r^2t^2+\int dY \tr[Y]\tr[Y^*] }{N(N+1)} = 1-\frac{N+r^2t^2+1}{N(N+1)}.
\end{aligned}
\end{equation}
Since this is independent of the choice of training set, one can then obtain that
\begin{equation}
\mathbb{E}_{U}\left[\mathbb{E}_{\mathcal{S}_Q}\left[R_{U}(V_{\mathcal{S}_Q})\right]\right]\geq 1-\frac{N+r^2t^2+1}{N(N+1)}.
\end{equation}
Since $t$ can be arbitrary positive integer that is smaller than $d$, thus for the Schmidt number $r\geq2$, i.e., with non-zero bipartite entanglement between $\mathcal{H}_x$ and $\mathcal{H}_R$, such an average risk over all choices of training set and all possible ground truth unitaries $U$ can go beyond the classical no-free-lunch statements.  

Similar calculations apply to the cases (b) and (c). 

To summary, one obtains the quantum no-free-lunch theory for quantum machine learning models.
\begin{itemize}
\item Case (a): States in $\mathcal{S}_Q$ are orthonormal. 
\begin{equation}
\mathbb{E}_{U}\left[\mathbb{E}_{\mathcal{S}_Q}\left[R_{U}(V_{\mathcal{S}_Q})\right]\right]\geq 1-\frac{N+r^2t^2+1}{N(N+1)}.
\end{equation}
\item Case (b):   States in $\mathcal{S}_Q$ are non-orthonormal, but linear independent. $W=U^\dag V_{\mathcal{S}_Q} =e^{i\theta I_{r\times t}} \oplus Y$. $Y$ is the $SU(N-rt)$ group element. 
\begin{equation}
\mathbb{E}_{U}\left[\mathbb{E}_{\mathcal{S}_Q}\left[R_{U}(V_{\mathcal{S}_Q})\right]\right]= 1-\frac{N+r^2t^2+1}{N(N+1)}.
\end{equation}
\item Case (c):   States in $\mathcal{S}_Q$ are  linear dependent. $W=U^\dag V_{\mathcal{S}_Q} =e^{i\theta I_{r\times t'}} \oplus Y$, $t'$ denotes the linear independent bases of the training set.  $Y$ is the $SU(N-rt')$ group element. 
\begin{equation}
\mathbb{E}_{U}\left[\mathbb{E}_{\mathcal{S}_Q}\left[R_{U}(V_{\mathcal{S}_Q})\right]\right]= 1-\frac{N+r^2t'^2+1}{N(N+1)}.
\end{equation}
\end{itemize}

\end{document}